\newtheorem{theorem}{Theorem}[section]
\newtheorem{proposition}[theorem]{Proposition}
\newtheorem{remark}[theorem]{Remark}
\newtheorem{numerical example}[theorem]{Numerical example}
\begin{document}
	\begin{center}
		{\bf Assessing potential insights of an imperfect testing strategy: Parameter estimation and practical identifiability using early COVID-19 data in India}
	\end{center}
\begin{center}
	\baselineskip .2in {\bf Sarita Bugalia$^1$, Jai Prakash Tripathi$^{1}$}
	
	{\small $^1$ Department of Mathematics, Central University of Rajasthan,
		\\ Bandar Sindri, Kishangarh-305817, Ajmer, Rajasthan, India}
\end{center}

\begin{abstract}
	A deterministic model with testing of infected individuals has been proposed to investigate the potential consequences of the impact of testing strategy. The model exhibits global dynamics concerning the disease-free and a unique endemic equilibrium depending on the basic reproduction number when the recruitment of infected individuals is zero; otherwise, the model does not have a
	disease-free equilibrium, and disease never dies out in the community. Model parameters have been estimated using the maximum likelihood method with respect to the data of early COVID-19 outbreak in India. The practical identifiability analysis shows that the model parameters are estimated uniquely. The consequences of the testing rate for the weekly new cases of early COVID-19 data in India tell that if the testing rate is increased by 20\% and 30\% from its baseline value, the weekly new cases at the peak are decreased by 37.63\% and 52.90\%; and it also delayed the peak time by four and fourteen weeks, respectively. Similar findings are obtained for the testing efficacy that if it is increased by 12.67\% from its baseline value, the weekly new cases at the peak are decreased by 59.05\% and delayed the peak by 15 weeks. Therefore, a higher testing rate and efficacy reduce the disease burden by tumbling the new cases, representing a real scenario. It is also obtained that the testing rate and efficacy reduce the epidemic's severity by increasing the final size of the susceptible population. The testing rate is found more significant if testing efficacy is high. Global sensitivity analysis using partial rank correlation coefficients (PRCCs) and Latin hypercube sampling (LHS) determine the key parameters that must be targeted to worsen/contain the epidemic.    
	\\\\
	Keywords: Immigration; Imperfect testing; Global stability; Threshold; Peak size; Final size; COVID-19; Practical identifiability.
\end{abstract}

%\noindent {$\dag$Corresponding author's email: jtripathi85@gmail.com}

\section{Introduction}
Mathematical models have an extensive history of being utilized to explain the transmission of communicable diseases from plague pandemics more than a century back \cite{Kermack1927} to the recent SARS outbreak \cite{GUM2004}, and Ebola epidemics \cite{BRO2015, WHI2017}, and from decision making about distinct vaccination strategies for influenza \cite{BAG2013, HOD2017}, to modeling HIV \cite{DOR2017, STU2018}, and from modeling the influenza pandemic \cite{GRI2020} to presently assisting making decisions around the COVID-19 pandemic \cite{HEL2020, KUC2020, BUG2020, GOS2020, DAV2020, BAJ2020, BUG2022}. There are numerous general approaches to modeling, each with different disadvantages and advantages \cite{HOL2009, MAN2016}. Compartmental modeling approach using differential equations \cite{AndersonRM1991, HEE2015, BUG2021} separate the whole population into various compartments of subpopulations, for instance, susceptible, exposed to the disease but not infectious, infectious (capable of transmitting the disease) and recovered. This particular approach also tracks the transfers of individuals among different compartments. Mathematical modeling based on differential equations offers an extensive mechanism for the disease dynamics and assesses the control strategies' efficacy in lowering down the disease load. It has also participated significantly in the control of disease spread, involving the assessment of social distancing measures \cite{FER2020}. It offers a reasonable framework for comprehension of the spread of a transmittable disease throughout a population and permits various intervention strategies to be investigated, involving contact tracing and testing of infected persons as feasible methods to relieve social distancing constraints. Such models are also inevitably understood, and simplifying their hypotheses and what they do and do not signify are essential to interpret them accurately.     

The modern literature on mathematical modeling of the infectious diseases and diagnosis is broad. Diagnostics may contract much exact and correct at the expenditure of a long period to produce the test outcomes. Commencing treatment immediately after the detection of a case is usually expected to be amongst the finest procedures, but the test outcomes may deteriorate from test limitations or deficiencies.  
A few studies \cite{VIL2017,CHI2020,STU2021,GRI2021,SAL2014} have assessed the impact of testing via mathematical models. Chirove et al. \cite{CHI2020} proposed an epidemic model with isolation and mass testing  to study the COVID-19 disease in Nigeria. The authors concluded that increment in the mass testing and isolation could lower the peak of symptomatic cases and decrease the cumulative deaths. Griette et al. \cite{GRI2021} proposed an epidemic model with the daily number of tests of New York state as an input. The authors infer that an increment in the number of tests could reduce the number of daily cases in the starting, but after escalating the number of tests ten times, there is no substantial variation in the number of reported cases. The study \cite{GRI2021} suggests that there must be an optimum between being effectual to slow down the epidemic and the number of tests; the optimal strategy may depend on the factors such as the monetary cost of the tests and other limited resources.   

Testing drives have been crucial in hitting the transmission of infectious diseases, such as HIV and COVID-19. However, an imperfect test might wrongly count infected individuals as susceptible (false negatives) or susceptible to be infected (false positives). Test imperfections reduce the likelihood of obtaining diagnostic outcomes as true negatives or positives. The primary impact arises from the test  specificity, i.e., the possibility that the test result specifies a negative diagnosis, agreed that the tested person does not have the disease. Evading specificity indicates that a considerable portion of test outcomes may specify sick individuals while those are not diseased. In many circumstances, initiating medical care is the primary concern over expecting for more correct test outcomes. Consequently, imperfections here may enhance the cost \cite{VIL2017}. Higher sensitivity is also vital for infected persons since the rate of confirmed diagnosis in the situation of positive tested persons must be increased. This requirement again falls at an expenditure of fast diagnosis. A result of accurately diagnosed individuals has a twofold advantage, to commence medical care and decrease the time to possibly transmit disease to other susceptible persons because of quarantine/isolation and other control interventions. Villela \cite{VIL2017} proposed a general mathematical model to describe the dynamics of test imperfections and diagnostics for a contagious disease. In \cite{VIL2017}, the author presented different scenarios to obtain the disease eradication and concluded that epidemics are less expected to occur under a larger testing rate.
The advantage of specificity and sensitivity has shown in some recent studies investigating the transmission of HIV \cite{EAT2014}, Ebola \cite{NOU2015}, Malaria \cite{BIS2009}, and COVID-19 \cite{GRI2021}. These studies, however, introduce models that demonstrate complexity, for instance, investigating the case fatality ratio for Ebola \cite{NOU2015}. 
%Testing people for a particular illness allows health diagnostic units to efficiently provide treatment amenities if tests' outcomes become positive, which hurries up their isolation/treatment and may also reduce people's interactions with others. 

There might be more than one test for different diseases. For instance, for tuberculosis (TB), the Xpert (R) MTB/R if assay test for diagnosis of TB established for the GeneXpert platform \cite{WHOTB2011}. Such types of molecular testing could be performed on request and nearer to people in require. Walusimbi et al. \cite{WAL2013} have done a meta-analysis of these reports. The authors in \cite{WAL2013} observed pooled specificity and sensitivity anticipated at 0.98 (CI: 0.97-0.99) and 0.67 (CI: 0.62-0.71), respectively. For hepatitis, rapid test outcomes were reported, which could hold both high sensitivity and specificity but can fluctuate to a huge extent reliant on the pathogens \cite{AMA2006} and the type of sample (whole blood, serum, saliva) \cite{PAU2014}. James J. Cochran \cite{Cochran2020} conferred the significance of testing a random sample for COVID-19. In the case of COVID-19, there is strong evidence that many people develop no symptoms, mild symptoms, convey the virus without realizing it, and infect other individuals. 
There is a significant need for random testing to recognize how deep the virus has moved into the community. In order to break the transmission chain of COVID-19, testing of cases, tracing, and isolation of their contacts have been implemented as critical non-pharmaceutical intervention strategies (NPIs) in several countries \cite{Quilty2020}. 

There are some evidences of imperfect testing for the COVID-19 \cite{falsenegative1, falsenegative2, WOLO2020}. For instance, the real-time reverse transcription PCR (RT-PCR) tests, like the individuals used to diagnose COVID-19, false negatives appear for many reasons, such as the level of viral RNA being below the limit of detection of the test. Also, if the limit of detection for a test is too low, the test will detect the smallest amount of viral RNA, leading to false-positive test results \cite{falsenegative1}. Thus, RT-PCR also has an imperfect sensitivity \cite{ZHA2021,YAN2020,ARE2020,ZHAO2020}. Fang et al. \cite{FANG2020} compared the sensitivity of Chest CT with RT-PCR and concluded that chest CT has greater sensitivity than RT-PCR, 98\% vs. 71\%, respectively. The execution of testing outcomes in asymptomatic individuals has indicated lower sensitivity \cite{BRI2021}. Though, PCR testing continues the gold standard for SARS-CoV-2. Whereas lack of PCR tests and testing deliveries along with an extreme requirement for testing have directed to continued delays in test results and turnaround reporting \cite{falsenegative2}. 
Some studies shows that RT-PCR has greater sensitivity than other antigen tests for COVID-19. For instance, the Quidel rapid antigen test has 72.1\% sensitivity in symptomatic and 60.5\% in asymptomatic \cite{BRI2021}. The false-negative outcome may ruin the disease prevention and control and cause lags in treatment. It may also involve needless contact tracing and quarantine/isolation. False-negative outcomes are more subsequent since infected individuals--who may be asymptomatic--might not be quarantined or isolated and may communicate a disease to others \cite{WOLO2020}.  %Diagnostic tests (usually comprising a nasopharyngeal swab) can be incorrect in two directions. 
%A false positive outcome incorrectly identifies an individual infected, with outcomes involving needless contact tracing and quarantine. 
%When the COVID-19 pandemic started, reverse-transcriptase PCR (RT-PCR) tests were the first to be established and extensively used. 

Despite developments in medicine, there is still no vaccine for several communicable diseases, and even when a vaccine is identified, it is usually impossible to preserve an ample stock \cite{Young2019}. Sometimes new variants of the virus also escape the vaccine-induced immunity. Preventive measures may fall out of trend when the infectious disease is out for a long time, only to hit more stringent when it reoccurs due to the poorer level of immunity. A standard recommendation of health administrations in the appearance of an epidemic is the isolation of infected persons. This control policy has been implemented for centuries, and its effectiveness has not faded with time, as evidenced in the recent outbreaks of Ebola in West Africa, SARS in Asia, and COVID-19 worldwide. Research has advanced in the meantime, compartmental models to comprehend the effectiveness and impact of testing and isolation have been anticipated and analyzed. It has been revealed that isolation protocols can be boosted by contact tracing and isolation of people earlier to symptoms. Even detecting diseased hosts, a critical primary step in any isolation policy, can be challenging. At the initial stages of an epidemic, even local health authorities may flop to identify a disease's symptoms (which are often vague). They may also not be aware of the potential risks of runaway the exponential growth of contagious diseases. However, for centuries, isolation of infected individuals has been an effective control strategy for unexpected epidemic outbreaks. 

Cox \cite{COX2014} in a latest editorial expressed the requirement for naive models to explain findings, particularly for the community beyond modeling. In this work, we propose a simple compartmental model that ponders the least number of variables to acquire significant insights and describes the dynamics of a disease with imperfect testing and isolation/quarantine. Our proposed ODE model system, leveraged from a standard SIR model, has two additional compartments: $T_n$, to designate the number of infected persons that undergo the testing and are false negative, and $J$, that expresses the number of infected persons that move into the quarantine or initiate treatment upon test--diagnosed positive. Infected individuals test negative but are actually infected (false-negative) at a rate committed with testing efficacy/sensitivity ($1-\sigma$). It is provided by the rate of true positive, i.e., diseased and tested positive. Also, the infected persons are tested and moved to quarantine/isolation/treatment compartment if their tests' outcomes turn out positive. Since an imperfect testing can also give the false-positive result when testing the susceptibles, the false-positive results do not increase the disease's basic reproduction number or endemicity. Thus, our model does not consider the testing of susceptibles. Our main goal is to propose a model that designates the threshold of an epidemic state as a function of parameters carrying imperfections of testing that could be supportive while attempting to control an epidemic. Consequently, we obtain the basic reproduction number $R_0$ as a function of the parameter, $\sigma$, where $1-\sigma$ is the testing efficacy. The model also includes the immigration of infected individuals, particularly a fraction $q$ of the recruitment rate. We present analytical and numerical results that address fundamental questions relating to isolation and imperfections in the implementation of testing strategies to control the disease. 

Considering the demographic effects zero, we apply the model to the data of the initial outbreak of COVID-19 in India. Whereas the model has been established in the general context, it is general enough to be applicable with such changes as necessary to other diseases as well. Mathematical modeling in disease dynamics has a vast literature nowadays, and there is rising identification that can help uncover the mechanisms of the response of many intervention strategies. However, making quantitative predictions with such models often requires parameter estimation from actual data, raising concerns about parameter estimability and identifiability. We explored the identifiability of our model by using the Fisher Information Matrix (FIM) and profile likelihood, exposing that the estimated parameters are practically identifiable.     

The remaining paper is systematized as follows. The model construction and its biological properties, such as positivity and boundedness are presented in the following Section \ref{sec2}. In Section \ref{rganl}, a rigorous mathematical analysis of the model is presented, including computation of the basic reproduction number, local and global asymptotic stability of the equilibria (disease-free and endemic), transcritical bifurcation. By avoiding the demographic effect and waning immunity, a single outbreak model is analyzed, including the epidemic's peak and final size relations, parameter estimation and practical identifiability, global sensitivity analysis, and some quantitative results are presented in Section \ref{peakandfinal}. The paper ends with a thorough discussion and conclusions in Section \ref{diss}.

\section{Model construction}\label{sec2}
In order to develop the mathematical model, we separate the total population, $N(t)$, of a community into six different states: susceptible, infected (both symptomatic and asymptomatic), infected but false negative, isolated/quarantined, and recovered; the numbers of individuals in these compartments are symbolized by $S(t),$ $I(t),$ $T_n(t),$ $J(t),$ and $R(t),$ respectively.
The susceptible population increases by the recruitment of persons (either by immigration or birth), by losing the immunity attained naturally or from  infection. The susceptibles reduce through infection (transferring to compartment $I$) and natural death.  
The number of diseased persons increases by the immigration of infected persons from outside the population and by the disease transmission to susceptibles. It diminishes by natural death, disease-induced death, and testing (moving to compartment $T_n$ and $J$). 
The population of false tested individuals increases by infected individuals that are infected but tested negative. It decreases by natural death and disease-induced death. Since an imperfect test may examine an infected person  to be false negative. In this case, the false-negative infected individuals can make contact with susceptible individuals, and ultimately the imperfect testing increases the epidemic.
The population of isolated individuals increases by testing infected individuals and are isolated. It diminishes by natural death, disease-induced death, and recovery from the infection (moving to the compartment $R$). Due to   isolation/quarantine, infectives move to the compartment $J$ with no isolated infectives making contact with susceptibles. 
Since the acquired immunity (from infection or naturally) fades with time, the recuperated persons become susceptible to the infection again. Thus the recovered individuals increase by the persons who recovers from their infection and gaining immunity. The recovered class decreases via waning immunity and natural death rate. 

The transition rates from one compartment to another are described in Table \ref{table_1}. We consider the disease spread in human populations; the parameters are assumed to be nonnegative. The schematic diagram of the disease spread is portrayed in Figure \ref{schematic}. 

\begin{figure}[H]
	\includegraphics[scale=0.8]{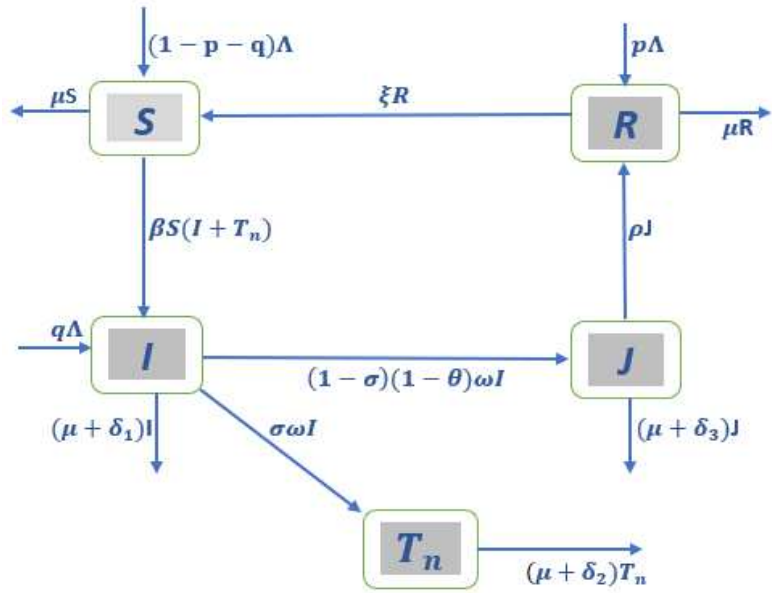}
	\caption{Schematic diagram describing the interaction between individuals in different compartments of system \eqref{model}.}\label{schematic}
\end{figure}

The ordinary differential equations for the transitions among the compartments can be articulated as follows: 
\begin{equation} \label{model}
\begin{aligned}
\frac{dS}{dt} &= (1-p-q)\Lambda - \beta S(I+T_n) - \mu S + \xi R, \\
\frac{dI}{dt} &= q\Lambda + \beta S(I+T_n) - (\mu + \omega + \delta_1)I, \\
\frac{dT_n}{dt} & = \sigma \omega I - (\mu + \delta_2)T_n, \\
\frac{dJ}{dt} & = (1- \sigma) (1-\theta) \omega I - (\mu + \delta_3 + \rho)J, \\
\frac{dR}{dt} & = p\Lambda + \rho J - (\mu + \xi) R,
\end{aligned}
\end{equation}
with nonnegative initial conditions $S(0)>0, I(0) \geq 0, T_n(0) \geq 0, J(0) \geq 0, R(0) \geq0$ and $N(0)>0.$

\begin{table}[H]
	\caption{Biological interpretations of model parameters.} \label{table_1}
	\begin{tabular}{p{2cm}p{10cm}}
		\hline
		Parameters	& Biological explanations \\
		\hline
		$\Lambda$	& The birth/immigration rate at which new persons join to the susceptible class \\
		
		$p$ & The fraction of recruited persons who are having immunity against the disease \\
		
		$q$ & The fraction of newly joined individuals who are already diseased \\
		
		$\beta$	& The transmission rate between infected and susceptible persons \\
		
		$1/\mu$	& Average life-span \\
		
		$\delta_1$	& Disease related death rate of infected individuals \\
		
		$\omega$ & The rate at which infected individuals are tested \\
		
		$\delta_2$	& Disease related death rate of individuals who are infected but tested negative \\
		
		$\delta_3$	& Disease related death rate of individuals who are isolated \\
		
		$\sigma$ & A scaling factor ($0\leq \sigma \leq 1,$ where $1-\sigma$ describes the efficacy/sensitivity of testing i.e. $\sigma = 0$ represents a testing that offer $100\%$ perfection, while $\sigma = 1$ models a testing that offers imperfection at all) \\
		
		$\theta$ &  A fraction of tested individuals that are waiting for tests' results \\
		
		$1/\rho$ & Average length of infection \\
		
		$1/\xi$	& Average time of losing immunity attained from infection or naturally \\
		
		\hline
	\end{tabular}
\end{table}

\subsection{Cost of infection}
We utilize here the model in which the treatment cost changes with three quantities, the quantity $I$ of diseased individuals, the quantity $T_n$, and the quantity $J$ of isolated diseased individuals. The cost is expressed as a function of the time $\tau$ and testing rate $\omega$, specified by $C_a(\omega, \tau) = \int_{0}^{\tau} (\alpha_{I}I(\omega, t) + \alpha_{T_n}T_n(\omega, t) + \alpha_{J}J(\omega, t))dt $. The associated cost $C(\omega, \tau)$ is described by the ratio between the cost under the rate $\omega$ and without treatment, i.e., $\omega = 0,$ given by: 
\begin{equation}
C(\omega, \tau) = \frac{\int_{0}^{\tau} (\alpha_{I}I(\omega, t) + \alpha_{T_n}T_n(\omega, t) + \alpha_{J}J(\omega, t))dt}{\int_{0}^{\tau} (\alpha_{I}I(0, t) + \alpha_{T_n}T_n(0, t) + \alpha_{J}J(0, t))dt},
\end{equation}
where $\alpha_{I}, \alpha_{T_n},$ and $\alpha_{J}$ are weights associated to the variables $I, T_n,$ and $J,$ respectively. In applications, these weights should be specified by measurements, for instance, economic cost per person case or time utilization from health amenities or others. After a control measure or through an initial epidemic time period, such methods may compare the treatment cost. 

Another opportunity is to compare such costs experienced by testing after attaining the endemic state when $R_0 > 1$. For such comparison, we consider the endemic cost specified by weights $\alpha_{I}, \alpha_{T_n},$ and $\alpha_{J}$ with respect to the variables $I, T_n,$ and $J,$ in the endemic state, respectively. The associated endemic cost $C(\omega)$ is the ratio between the endemic cost with the testing $(\omega)$ and deprived of testing:
\begin{equation}
C(\omega) = \frac{\alpha_{I}I^*(\omega) + \alpha_{T_n}T_n^*(\omega) + \alpha_{J}J^*(\omega)}{\alpha_{I}I^*(\omega = 0) + \alpha_{T_n}T_n^*(\omega = 0) + \alpha_{J}J^*(\omega = 0)}.
\end{equation}
\subsection{Well-posedness}\label{wpsd}
We know that the population cannot be negative and unbounded for $t\geq 0$. Hence it is essential to prove the non-negativity and boundedness of the solutions of system \eqref{model}. The following result assures this property:
\begin{theorem}\label{thm3.1}
	Any solution $(S,I,T_n,J,R)$ of system \eqref{model} with non-negative initial conditions is non-negative for all $t\geq 0$ and bounded in the following region
	\begin{equation}
	\Omega = \left\lbrace (S,I,T_n,J,R) \in \mathbb{R}_+^5 : 0< S+I+T_n+J+R \leq \frac{\Lambda}{\mu}  \right\rbrace.
	\end{equation}
\end{theorem}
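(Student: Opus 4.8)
The plan is to treat the two claims---non-negativity and boundedness---in turn, after first securing local well-posedness. Because the right-hand side of \eqref{model} is a polynomial in $(S,I,T_n,J,R)$, it is locally Lipschitz, so the Picard--Lindel\"of theorem gives a unique solution on a maximal interval $[0,T_{\max})$. The a priori estimates established below will show the solution stays in a compact set, which forces $T_{\max}=\infty$; thus global existence is a by-product rather than a separate step.

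For non-negativity I would use the standard observation that the vector field points into the non-negative orthant along its boundary. Assuming $p,q\ge 0$ with $p+q\le 1$ (so that $(1-p-q)\Lambda\ge 0$, consistent with their interpretation as recruitment fractions), one checks on each coordinate hyperplane that the corresponding derivative is non-negative: with the other components non-negative, $S=0$ gives $\dot S=(1-p-q)\Lambda+\xi R\ge 0$, $I=0$ gives $\dot I=q\Lambda+\beta S T_n\ge 0$, $T_n=0$ gives $\dot T_n=\sigma\omega I\ge 0$, $J=0$ gives $\dot J=(1-\sigma)(1-\theta)\omega I\ge 0$, and $R=0$ gives $\dot R=p\Lambda+\rho J\ge 0$. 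To convert this into a rigorous conclusion I would write each equation in the form $\dot x_i=a_i(t)-b_i(t)x_i$, where $a_i$ collects the (non-negative) inflow terms and $b_i$ is continuous, and use the integrating-factor representation $x_i(t)=e^{-\int_0^t b_i}\big(x_i(0)+\int_0^t e^{\int_0^s b_i}a_i(s)\,ds\big)$. Defining the first exit time $t^\ast=\sup\{t:\text{all components}\ge 0 \text{ on }[0,t]\}$ and applying this representation on $[0,t^\ast]$ shows that no component can be negative at $t^\ast$, contradicting $t^\ast<T_{\max}$; hence all components remain non-negative.

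For boundedness I would add the five equations and exploit the structural cancellations. The infection term $\beta S(I+T_n)$ appears with opposite signs in $\dot S$ and $\dot I$ and cancels, the flux $\rho J$ cancels between $\dot J$ and $\dot R$, and the testing terms combine via $-\omega+\sigma\omega+(1-\sigma)(1-\theta)\omega=-(1-\sigma)\theta\omega$, leaving
\[
\frac{dN}{dt}=\Lambda-\mu N-\delta_1 I-\delta_2 T_n-\delta_3 J-(1-\sigma)\theta\omega I,
\]
where $N=S+I+T_n+J+R$. Since every subtracted term is non-negative on the invariant orthant already established, $\dot N\le \Lambda-\mu N$, and the comparison (Gr\"onwall) argument yields $N(t)\le N(0)e^{-\mu t}+\frac{\Lambda}{\mu}(1-e^{-\mu t})$. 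This gives $\limsup_{t\to\infty}N(t)\le \Lambda/\mu$ and shows $\Omega$ is positively invariant, proving boundedness. I expect the main obstacle to be the non-negativity step: the coupling between compartments makes a naive componentwise Gr\"onwall bound circular, so the care lies in organizing the integrating-factor estimates through the first-exit-time argument (and in handling the boundary case where an inflow term vanishes, which is where an invariance/quasi-positivity principle is cleanest); the boundedness step, by contrast, is routine once the summation cancellations are verified.
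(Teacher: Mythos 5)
Your proposal is correct and takes essentially the same route as the paper's own proof: non-negativity via the inward-pointing vector field on the boundary planes of $\mathbb{R}_+^5$ (which you merely make more rigorous with the integrating-factor and first-exit-time argument), and boundedness by summing the five equations and comparing with $\frac{dN}{dt} \le \Lambda - \mu N$. Incidentally, your cancellation coefficient $-(1-\sigma)\theta\omega I$ is the correct one; the paper's appendix writes $-\theta\omega I$ at this step, a harmless slip since the term is non-negative and discarded in the upper bound anyway.
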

\begin{proof}
	The proof is provided in \nameref{apndxa}.
\end{proof}

\section{Rigorous analysis}\label{rganl}
In this section, we establish the restrictions for the existence and stability of the possible equilibria of system \eqref{model}. In order to perform this, we first ponder the case $q=0$, i.e., the system \eqref{model} does not recruit newly diseased persons. The results achieved in this section will afterwards be adapted and utilized in subsection \ref{end_eq_qnotzero} to examine the existence of the equilibria of the system \eqref{model} with $q>0$.
\subsection{Threshold of the epidemic}\label{dfe}
When $q=0,$ the system \eqref{model} admits that all new individuals are only recruited into the susceptible population. In this case, the system \eqref{model} has a disease-free equilibrium (DFE) (i.e., $I(t) = T_n(t) = J(t) =0$) given by
\begin{equation*}
E_0 = (S_0, 0, 0,0, R_0) =\Bigg(\frac{\Lambda (\mu (1-p) + \xi)}{\mu (\mu + \xi)}, 0 ,0,0, \frac{p\Lambda}{\mu + \xi}\Bigg). 
\end{equation*}
It is convenient to write $N_0 = \frac{\Lambda}{\mu},$ $S_0 = (1-\chi)N_0,$ and $R_0 = \chi N_0$, where $\chi = \frac{p\mu}{\mu + \xi}$ is the fraction of the population recovered (due to having immunity against the disease) at the disease-free equilibrium. It is well-known that the stability of the DFE determines the threshold of an epidemic.

To determine the restrictions for the existence of endemic equilibria for the case $q=0,$ it is helpful to investigate the stability of $E_0$ and computation of the basic reproduction number. This analysis offers a fundamental threshold measure that will be utilized for stability analysis of the system \eqref{model} throughout the paper.
The basic reproduction number is characterized as the number of secondary infections made by a single infection throughout his/her whole infectious period. Mathematically, the basic reproduction number is expressed as a spectral radius $R_0$
that describes the number of novel infectious produced by a single infection in an entirely susceptible population (which is a threshold quantity for disease control) \cite{Driessche2002}. Let $x = (I,T_n, J)^T$, then system \eqref{model} can be rewritten in the following form 
\begin{equation*}
\frac{dx}{dt}= f(x)-v(x),
\end{equation*}
 where 
\begin{align*}
f(x)=\left( \begin{array}{c}
\beta S(I+T_n)\\ 
0\\ 
0
\end{array} \right),\quad v(x)= \left(\begin{array}{c}
(\mu + \omega + \delta_1)I \\ 
-\sigma \omega I + (\mu + \delta_2)T_n \\ 
-(1-\sigma) (1-\theta) \omega I + (\mu + \delta_3 + \rho)J
\end{array} \right).  
\end{align*} 
The Jacobian matrices of $f(x)$ and $v(x)$ at the disease-free equilibrium $(E_0)$ are given by
\begin{align*}
F = \left(\begin{array}{ccc}
\beta S_0& \beta S_0 & 0  \\ 
0&  0&0  \\ 
0& 0 &  0
\end{array}  \right), \quad
V = \left(\begin{array}{ccc}
\mu + \omega + \delta_1 & 0  & 0  \\ 
-\sigma \omega & \mu + \delta_2 & 0 \\ 
-(1-\sigma) (1-\theta) \omega & 0 & \mu + \delta_3 + \rho 
\end{array}  \right),
\end{align*}
respectively, where $F=Df(E_0)$ and $V=Dv(E_0)$. The next generation matrix of system \eqref{model} is given by $FV^{-1}.$ It follows that the spectral radius of matrix $FV^{-1}$ is 
\begin{align*}
\sigma(FV^{-1}) = \frac{\beta S_0}{(\mu+\omega + \delta_1)}\Bigg[1+\frac{\sigma \omega}{\mu + \delta_2}\Bigg].
\end{align*}
In consonance with Theorem 2 in \cite{Driessche2002}, the basic reproduction number for system \eqref{model} is 
\begin{align*}
R_0 = \sigma(FV^{-1}) = \frac{\beta \Lambda (\mu (1-p) + \xi)}{\mu (\mu+\xi)(\mu+\omega + \delta_1)}\Bigg[1+\frac{\sigma \omega}{\mu + \delta_2}\Bigg].
\end{align*}
\textbf{Interpretation of the basic reproduction number ($R_0$):} 
The infection rate into the susceptible individuals by infectious individuals (near the disease-free equilibrium) is [$\beta S_0$], and the average time spent by one individual in the infectious class ($I$) is [$\frac{1}{\mu + \omega + \delta_1}$]. The proportion of the infectious individuals that survived in the compartment ($T_n$) is [$\frac{\sigma \omega}{\mu + \delta_2}$]. The number of average secondary infection caused by the infected persons who are in class $I$ is [$\frac{\beta S_0}{\mu + \omega + \delta_1}$]. The number of average secondary infection caused by diseased persons who are in class $T_n$ is [$\frac{\beta S_0}{\mu + \omega + \delta_1} \frac{\sigma \omega}{\mu + \delta_2}$].

By setting $\sigma = 0$, the model \eqref{model} reduces to a SIJR model with perfect testing and isolation, where the basic reproduction number denoted by $R_0^{PT}$ as: $$R_0^{PT} = \frac{\beta S_0}{\mu + \omega + \delta_1} = \frac{\beta \Lambda (\mu (1-p) + \xi)}{\mu (\mu+\xi)(\mu+\omega + \delta_1)}.$$ 
Thus, $R_0$ can be express in terms of $R_0^{PT}$ as $$R_0 = R_0^{PT}\Big[1+\frac{\sigma \omega}{\mu + \delta_2}\Big].$$
\begin{remark}
Note that $R_0 \geq R_0^{PT}$ with equality only if $\sigma = 0$, i.e., imperfect testing always increases the basic reproduction number of the disease. 
\end{remark}
Additional simulations and arguments to expose that a decrease in $R_0$ usually infers a delayed and lessened peak caseload, and prevalence are summarized in Section \ref{5.5}.  

Since the isolation rate depends on the testing rate $(\omega)$ and test sensitivity $(1-\sigma)$. It is interesting to observe that if $\omega \rightarrow \infty$, then 
$R_0 \rightarrow \frac{\beta \Lambda (\mu (1-p) + \xi)}{\mu (\mu + \xi)}\frac{\sigma}{\mu + \delta_2}.$ When $\omega = 0$, we have a classical SIR model (without testing, i.e. $T_n = J = 0$) such that $R_0 = \frac{\beta \Lambda (\mu (1-p) + \xi)}{\mu (\mu + \xi) (\mu + \delta_1)}.$ Therefore, these are the two extreme values for the basic reproduction number $(R_0)$ when varying the testing rate $(\omega).$ 

Further, for the global dynamics near the disease-free equilibrium, we have the following results: 
\begin{theorem}\label{dfe_loc_stab}
	DFE $(E_0)$ of system \eqref{model} is locally asymptotically stable if $R_0 <1$ and unstable if $R_0>1$.
\end{theorem}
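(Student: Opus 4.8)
The plan is to exploit the next-generation setup already in place: since $R_0$ has been computed as the spectral radius $\rho(FV^{-1})$ of the decomposition $\frac{dx}{dt}=f(x)-v(x)$, the most economical route is to invoke Theorem 2 of \cite{Driessche2002}, which asserts precisely that the DFE is locally asymptotically stable when $\rho(FV^{-1})<1$ and unstable when $\rho(FV^{-1})>1$, provided the decomposition satisfies the hypotheses (A1)--(A5). So the real work is just verifying those hypotheses for this model: that $f$ gathers all new-infection terms and is entrywise nonnegative; that $V$ is a nonsingular M-matrix (here $V$ is lower triangular with positive diagonal entries $\mu+\omega+\delta_1$, $\mu+\delta_2$, $\mu+\delta_3+\rho$, so this is immediate); and that the disease-free subsystem in $(S,R)$ is locally asymptotically stable, which reduces to checking that its $2\times2$ Jacobian $\begin{pmatrix} -\mu & \xi \\ 0 & -(\mu+\xi)\end{pmatrix}$ is triangular with negative diagonal.

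As a self-contained confirmation I would also linearize the full system directly. Computing the Jacobian $J(E_0)$ and noting that $S$ enters only its own equation, the first column is $(-\mu,0,0,0,0)^T$, so $\lambda=-\mu$ splits off and the rest of the spectrum comes from the $4\times4$ block in $(I,T_n,J,R)$. That block is itself lower block-triangular, because $J$ and $R$ feed back into neither $I$ nor $T_n$; hence its spectrum is the union of the eigenvalues of the lower-right block $\begin{pmatrix} -(\mu+\delta_3+\rho) & 0 \\ \rho & -(\mu+\xi)\end{pmatrix}$, namely $-(\mu+\delta_3+\rho)$ and $-(\mu+\xi)$ (both negative), together with the eigenvalues of the upper-left block
$$A = \begin{pmatrix} \beta S_0 - (\mu+\omega+\delta_1) & \beta S_0 \\ \sigma\omega & -(\mu+\delta_2)\end{pmatrix}.$$

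The decisive step is the Routh--Hurwitz (trace--determinant) test for $A$. A short computation gives $\det A = (\mu+\delta_2)(\mu+\omega+\delta_1) - \beta S_0[(\mu+\delta_2)+\sigma\omega]$, and dividing through by $(\mu+\delta_2)(\mu+\omega+\delta_1)$ reproduces exactly the bracketed expression defining $R_0$, so $\det A>0 \iff R_0<1$. For the trace, $\operatorname{tr} A = \beta S_0 - (\mu+\omega+\delta_1) - (\mu+\delta_2)$. When $R_0<1$ we then get $\operatorname{tr}A<0$ and $\det A>0$, so all eigenvalues have negative real part and $E_0$ is locally asymptotically stable; when $R_0>1$ we get $\det A<0$, giving a real positive eigenvalue and hence instability.

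The one genuine subtlety, and the step I would handle most carefully, is the trace condition: Routh--Hurwitz requires \emph{both} $\operatorname{tr}A<0$ and $\det A>0$, and while $\det A>0$ is equivalent to $R_0<1$ by construction, $\operatorname{tr}A<0$ is not obviously implied. The resolution is to use that the bracket $[1+\sigma\omega/(\mu+\delta_2)]\ge 1$, so $R_0<1$ already forces $\beta S_0/(\mu+\omega+\delta_1)<1$, i.e.\ $\beta S_0<\mu+\omega+\delta_1<(\mu+\omega+\delta_1)+(\mu+\delta_2)$, whence $\operatorname{tr}A<0$. I would state this implication explicitly rather than take it for granted; the block-triangular reduction and the remaining sign bookkeeping are then routine.
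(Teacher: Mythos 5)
Your proposal is correct, and its self-contained linearization argument is essentially the paper's own proof: the same Jacobian at $E_0$, the same splitting off of the eigenvalues $-\mu$, $-(\mu+\xi)$, $-(\mu+\delta_3+\rho)$, and the same trace--determinant test on the $2\times 2$ block in $(I,T_n)$ with $\det A = (\mu+\delta_2)(\mu+\omega+\delta_1)(1-R_0)$. You are in fact more careful than the paper on the one delicate point: the paper merely asserts that $\operatorname{tr}(J_{E_0}^{'})<0$ and $\det(J_{E_0}^{'})>0$ hold if and only if $R_0<1$, whereas you explicitly derive the trace condition from $R_0<1$ via the observation that the bracket $\bigl[1+\sigma\omega/(\mu+\delta_2)\bigr]\ge 1$ forces $\beta S_0<\mu+\omega+\delta_1$ --- a step the paper leaves implicit.
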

\begin{proof}
	The Jacobian matrix is evaluated at $E_0$ given by
	\begin{align*}
J_{E_0} =	\left( \begin{array}{ccccc}
	-\mu	& -\beta S_0 &  -\beta S_0 & 0 & \xi  \\ 
	0	& \beta S_0 - (\mu+\omega + \delta_1) & \beta S_0 & 0 & 0 \\ 
	0	& \sigma \omega  & -(\mu+\delta_2)  & 0 & 0 \\ 
	0	& (1-\sigma)(1-\theta)\omega  & 0 & -(\mu + \delta_3 + \rho) & 0 \\ 
	0	& 0 & 0 & \rho & -(\mu + \xi)
	\end{array} \right). 
 \end{align*}
 The eigenvalues of $J_{E_0}$ are 
 $\lambda_1= -\mu, \lambda_2 = -(\mu + \xi), \lambda_3 = -(\mu + \delta_3 + \rho),$ and the eigenvalues of the submatrix  
 \begin{align*}
 J_{E_0}^{'} = \left( \begin{array}{cc}
 \beta S_0 - (\mu+\omega + \delta_1) & \beta S_0 \\ 
 \sigma \omega  & -(\mu+\delta_2)
 \end{array} \right).
 \end{align*}
 The trace and determinant of $J_{E_0}^{'}$  are given by $Tr(J_{E_0}^{'})     = \beta S_0 - (\mu+\omega + \delta_1) - (\mu+\delta_2),$ and
 $Det(J_{E_0}^{'})     = (\mu+\delta_2)(\mu+\omega + \delta_1)(1-R_0)$, respectively. Clearly $\lambda_1 < 0, \lambda_2 < 0, \lambda_3 <0,$ and the eigenvalues of $J_{E_0}^{'}$ are negative, if $Tr(J_{E_0}^{'}) < 0$ and $Det(J_{E_0}^{'}) > 0$. Noting that $Tr(J_{E_0}^{'}) < 0$ and $Det(J_{E_0}^{'}) > 0$ if and only if $R_0 < 1,$ the proof is accomplished.
\end{proof}

\textbf{Biologically speaking,} Theorem \ref{dfe_loc_stab} infers that the infection can be removed from the population (when $R_0 < 1$) if the initial conditions of the system \eqref{model} lie in the basin of attraction of $E_0$. Though, this condition is insufficient for disease elimination since for random initial conditions of the system \eqref{model}, the local stability of $E_0$ does not assure community-wide elimination of the infection. It must be proven that $E_0$ is globally asymptotically stable (when $R_0 <1$) to eliminate the disease in the community. Global asymptotic stability (GAS) result ensures that the removal of disease is free from the initial sizes of the sub-populations of the model. This is established in the following theorem. 

\begin{theorem}
	The DFE ($E_0$) is globally asymptotically stable (GAS) whenever $R_0 <1$. 
\end{theorem}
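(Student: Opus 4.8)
The plan is to prove global attractivity of $E_0$ inside the invariant region $\Omega$ and then combine it with the local asymptotic stability already established in Theorem \ref{dfe_loc_stab} (available since $R_0<1$) to conclude global asymptotic stability. I would organize the argument in three stages: a sharp asymptotic upper estimate for the susceptible class, extinction of the infected compartments via a comparison argument, and convergence of the remaining uninfected variables through the associated limiting system.

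First I would isolate the one estimate that drives everything, namely $\limsup_{t\to\infty} S(t)\le S_0$. This is the crux: the force of infection carries $\beta S$ while the threshold $R_0$ is built on $\beta S_0$, so any comparison that replaces $S$ by a cruder bound such as $N_0=\Lambda/\mu$ would inflate the threshold and fail. Using $R=N-S-I-T_n-J$ to eliminate the inflow term $\xi R$ and discarding the nonpositive terms, one obtains $\dot S\le (1-p)\Lambda+\xi N-(\mu+\xi)S$. Since Theorem \ref{thm3.1} gives $\limsup_{t\to\infty}N(t)\le \Lambda/\mu=N_0$, a standard fluctuation/comparison bound yields $\limsup_{t\to\infty}S(t)\le \frac{(1-p)\Lambda+\xi N_0}{\mu+\xi}=S_0$.

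Next, writing $x=(I,T_n,J)^T$, the infected subsystem has the form $\dot x=(F-V)x-\hat f$ with $\hat f=\big(\beta(S_0-S)(I+T_n),\,0,\,0\big)^T$ and $F,V$ as defined above. Fix $\epsilon>0$; by the previous step, eventually $S(t)\le S_0+\epsilon$, so componentwise $\dot x\le M_\epsilon x$, where $M_\epsilon$ is $F-V$ with $\beta S_0$ replaced by $\beta(S_0+\epsilon)$ and is a Metzler (cooperative) matrix. Because $R_0<1$ is equivalent to $s(F-V)<0$, where $s(\cdot)$ denotes the maximal real part of the eigenvalues (van den Driessche--Watmough), and the stability modulus depends continuously on $\epsilon$, we have $s(M_\epsilon)<0$ for all sufficiently small $\epsilon$. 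The comparison theorem for cooperative systems then gives $0\le x(t)\le y(t)$ with $\dot y=M_\epsilon y\to 0$, hence $I(t),T_n(t),J(t)\to 0$.

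Finally, with the infected variables vanishing, the $(S,R)$ dynamics are asymptotic to the linear limiting system $\dot S=(1-p)\Lambda-\mu S+\xi R$, $\dot R=p\Lambda-(\mu+\xi)R$, whose unique equilibrium $(S_0,\bar R)$ with $\bar R=p\Lambda/(\mu+\xi)$ is globally asymptotically stable. Invoking the theory of asymptotically autonomous systems (Thieme), $(S,R)\to(S_0,\bar R)$, so every trajectory in $\Omega$ converges to $E_0$; together with Theorem \ref{dfe_loc_stab} this gives that $E_0$ is GAS whenever $R_0<1$. The hard part will be the first stage: the waning-immunity inflow $\xi R$ can transiently push $S$ above $S_0$, so $S\le S_0$ need \emph{not} hold on all of $\Omega$, and the argument must be genuinely asymptotic, with the resulting $\epsilon$-perturbation of the threshold controlled through continuity of the stability modulus. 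An alternative route is the Lyapunov function $L=w^T V^{-1}x$, where $w^T$ is the left Perron eigenvector of $V^{-1}F$, which yields $\dot L=(R_0-1)\,w^T x-w^T V^{-1}\hat f$; this is cleaner algebraically but meets exactly the same sign obstruction in $\hat f$, again resolvable only through $\limsup_{t\to\infty}S(t)\le S_0$ together with LaSalle's invariance principle.
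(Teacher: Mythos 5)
Your proof is correct, and it takes a genuinely different route from the paper's. The paper appeals to the Castillo-Chavez--Feng--Huang framework: writing $Y=(S,R)$, $Z=(I,T_n,J)$, it checks (H1) that the disease-free $Y$-dynamics has a globally stable equilibrium and (H2) that $H(Y,Z)=MZ-\hat H(Y,Z)$ with $\hat H(Y,Z)=\bigl(\beta(I+T_n)(S_0-S),0,0\bigr)^T\ge 0$ on $\Omega$, after which the cited theorem gives GAS in one step. Your three-stage argument --- (i) the sharp estimate $\limsup_{t\to\infty}S(t)\le S_0$ obtained from $\dot S\le(1-p)\Lambda+\xi N-(\mu+\xi)S$ and $N\le\Lambda/\mu$ on $\Omega$, (ii) extinction of the infected block by Kamke comparison against the Metzler matrix $M_\epsilon$ together with continuity of the stability modulus, and (iii) convergence of $(S,R)$ via the asymptotically autonomous limiting system, combined with the local stability of Theorem \ref{dfe_loc_stab} --- is longer but self-contained. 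Importantly, your route also repairs a real defect in the paper's verification: on $\Omega$ one only knows $S\le\Lambda/\mu=N_0$, and since $S_0=(1-\chi)N_0<N_0$ whenever $p>0$, there are points of $\Omega$ (and, as you observe, transient trajectory segments driven by the $\xi R$ inflow) where $S>S_0$ and hence the first component of $\hat H$ is negative; so the paper's assertion ``Clearly $\hat H(Y,Z)\ge 0$'' does not hold on all of $\Omega$, and condition (H2) is only salvaged by precisely the kind of asymptotic estimate you establish in stage (i). What the paper's approach buys is brevity in the regime where the sign condition does hold (e.g., $p=0$, where $S\le N_0=S_0$ on $\Omega$); what yours buys is an argument valid for all admissible parameters, at the price of invoking three standard but heavier tools (cooperative comparison, spectral-abscissa continuity, and Thieme's theory of asymptotically autonomous systems).
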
 
\begin{proof}
	In order to prove the global asymptotic stability of $E_0$, we go along with the method given in Castillo-Chavez et al. \cite{Castillo-Chavez}. The system \eqref{model} can be rewritten as follows 
	\begin{equation}
	\begin{aligned}
	\frac{dY}{dt} &= G(Y,Z),\\
	\frac{dZ}{dt} &= H(Y,Z), \quad H(Y,0)=0,
	\end{aligned}
	\end{equation}
	where $Y = (S,R)\in \mathbb{R}^2$ indicates the number of uninfected persons and $Z = (I, T_n, J) \in \mathbb{R}^3 $ represents the number of diseased individuals. DFE $(E_0)$ is globally asymptotically stable if the ensuing two conditions are followed: 
	\begin{itemize}
		\item[(H1)] For $\frac{dY}{dt} = G(Y,Z),$ $Y^*$ is globally asymptotically stable,
		\item[(H2)] $H(Y,Z) = MZ - \hat{H}(Y,Z),~ \hat{H}(Y,Z)>0$ for $(Y,Z)\in \Omega$,
	\end{itemize}
	where $M = D_Z H(Y^*, 0)$ is a $M$-matrix. For the system \eqref{model}, we have 
	\begin{equation}\label{model_9}
	G(Y,0) = \left(\begin{array}{c}
	(1-p)\Lambda - \mu S + \xi R\\ 
	p\Lambda - (\mu + \xi)R 
	\end{array}  \right). 
	\end{equation}
	It is evident that the equilibrium $Y^* = \left(\frac{\Lambda (\mu (1-p) + \xi)}{\mu (\mu + \xi)}, \frac{p\Lambda}{\mu + \xi} \right) $ of system \eqref{model_9} is globally asymptotically stable. Further, for system \eqref{model}, we obtain 
	\begin{align*}
	M &= \left(\begin{array}{cccc}
	\beta S_0 -(\mu + \omega + \delta_1) & \beta S_0 & 0 \\ 
	\sigma \omega & -(\mu + \delta_2) & 0 \\
	(1-\sigma)(1-\theta)\omega & 0 & -(\mu + \rho + \delta_3) 
	\end{array} \right), \\
	\hat{H}(Y,Z) &= \left(\begin{array}{c}
	\beta (I + T_n)(S_0 - S) \\
	0 \\
	0
	\end{array}  \right). 
	\end{align*}
	Clearly $\hat{H}(Y,Z) \geq 0$. Thus, $E_0$ is globally asymptotically stable, i.e. every trajectory in the region $\Omega$ approaches the DFE ($E_0$) as $t \rightarrow \infty$ for $R_0 <1.$ Hence, the infection will be eradicated from the population if $R_0 <1.$      
\end{proof}

Further, we see that $J_{E_0}$ has a zero eigenvalue when $R_0=1$. Consequently, the system \eqref{model} would possibly experience a transcritical bifurcation at $E_0$ when $R_0=1$. By using Theorem 4.1 from Castillo-Chavez and Song \cite{Castillo2004} and center manifold theory \cite{Gukenheimer1983}, we determine certain conditions on the parameters for the transcritical bifurcation. We prove the subsequent theorem:

\begin{theorem}\label{fortrans}
	System \eqref{model} experiences a transcritical bifurcation at $E_0$, when $R_0=1$.
\end{theorem}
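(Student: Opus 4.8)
The plan is to apply Theorem 4.1 of Castillo-Chavez and Song \cite{Castillo2004}, which reduces the detection of a transcritical bifurcation to computing two scalar center-manifold coefficients $a$ and $b$. Working in the case $q=0$ where $E_0$ exists, I would first select the transmission rate $\beta$ as the bifurcation parameter and solve $R_0=1$ for the critical value $$\beta^{*} = \frac{(\mu+\omega+\delta_1)(\mu+\delta_2)}{S_0\,(\mu+\delta_2+\sigma\omega)}.$$ At $\beta=\beta^{*}$ the Jacobian $J_{E_0}$ has a simple zero eigenvalue: the determinant $Det(J_{E_0}^{'})=(\mu+\delta_2)(\mu+\omega+\delta_1)(1-R_0)$ computed in the proof of Theorem \ref{dfe_loc_stab} vanishes exactly when $R_0=1$, while the other eigenvalues $-\mu$, $-(\mu+\xi)$, $-(\mu+\delta_3+\rho)$ remain strictly negative. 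This verifies the simple-zero-eigenvalue hypothesis of the theorem.

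Second, I would construct a right null vector $w$ and a left null vector $v$ of $J_{E_0}$ at $\beta=\beta^{*}$, using the ordering $(S,I,T_n,J,R)$. Solving $J_{E_0}w=0$ with $w_2>0$ free gives the positive entries $w_3=\frac{\sigma\omega}{\mu+\delta_2}w_2$, $w_4=\frac{(1-\sigma)(1-\theta)\omega}{\mu+\delta_3+\rho}w_2$, $w_5=\frac{\rho}{\mu+\xi}w_4$, and a component $w_1$ from the first row that turns out negative. For the left vector $vJ_{E_0}=0$, the first, fourth and fifth columns successively force $v_1=v_4=v_5=0$, and the remaining relations give $v_3=\frac{\beta^{*}S_0}{\mu+\delta_2}v_2$ with $v_2>0$ free. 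The decisive observation here is that $v$ is supported only on the $I$ and $T_n$ coordinates.

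Third, I would evaluate $$a=\sum_{k,i,j} v_k w_i w_j\,\frac{\partial^2 f_k}{\partial x_i\,\partial x_j}(E_0,\beta^{*}),\qquad b=\sum_{k,i} v_k w_i\,\frac{\partial^2 f_k}{\partial x_i\,\partial \beta}(E_0,\beta^{*}).$$ Because $v_1=v_4=v_5=0$ and the $T_n$ equation $f_3=\sigma\omega I-(\mu+\delta_2)T_n$ is linear (so its second derivatives vanish), only $f_2=\beta S(I+T_n)-(\mu+\omega+\delta_1)I$ contributes to both sums. Its sole nonvanishing mixed partials at $E_0$ are $\partial^2 f_2/\partial S\,\partial I=\partial^2 f_2/\partial S\,\partial T_n=\beta^{*}$ and $\partial^2 f_2/\partial I\,\partial\beta=\partial^2 f_2/\partial T_n\,\partial\beta=S_0$, which collapse the sums to $a=2\beta^{*}v_2 w_1(w_2+w_3)$ and $b=v_2 S_0(w_2+w_3)$. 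Since $w_2,w_3,v_2,S_0>0$ we get $b>0$, and $w_1<0$ forces $a<0$; with $a\neq0$ and $b\neq0$, Theorem 4.1 of \cite{Castillo2004} yields a transcritical bifurcation at $R_0=1$ (forward, since $a<0<b$).

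The main obstacle is bookkeeping rather than conceptual difficulty: one must correctly assemble the null vectors of the singular $5\times5$ Jacobian and then collect the mixed partials in the double and triple sums without error. The simplification that $v$ lives only on the $(I,T_n)$ subspace, combined with the linearity of the $T_n$ equation, is what forces every contribution except the one from $f_2$ to drop out, so recognizing this support structure early is the key to avoiding a long and unnecessary computation. The one place where the algebra needs genuine care is substituting $\beta^{*}$ into the first row of $J_{E_0}w=0$ to confirm $w_1<0$, which is what fixes the sign of $a$ and hence the forward direction of the bifurcation.
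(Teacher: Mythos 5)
Your proposal is correct and follows essentially the same route as the paper: the same bifurcation parameter $\beta$ with the same critical value $\beta^*$, the same left/right null vectors (yours differ only by normalization, and your $w_3=\frac{\sigma\omega}{\mu+\delta_2}w_2$ is in fact the corrected form of a typo in the paper's $w_3$), and the same collapse of the Castillo-Chavez--Song coefficients to $a=2\beta^*v_2w_1(w_2+w_3)<0$ and $b=v_2S_0(w_2+w_3)>0$ (the paper carries an extra harmless factor of $2$ in $b$), yielding the forward transcritical bifurcation.
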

\begin{proof}
	We consider $\beta$ as a bifurcation parameter. By manipulating $R_0=1,$ we have 
	\begin{equation*}
	\beta = \beta^* = \frac{(\mu + \omega + \delta_1)(\mu + \delta_2)}{S_0 (\mu + \delta_2 + \sigma \omega)}.
	\end{equation*}
	It can simply be achieved that the Jacobian $J_{(E_0, \beta^*)}$ calculated at $\beta = \beta^*$ and $E_0$ has a simple zero eigenvalue and other eigenvalues have negative real parts. Hence $E_0$ is a non-hyperbolic equilibrium for $\beta = \beta^*.$ Further, we compute a left eigenvector $V = (v_1, v_2, v_3, v_4, v_5)$ and a right eigenvector $W = (w_1, w_2, w_3, w_4, w_5)$ associated to the zero eigenvalue, where 
	\begin{align*}
	w_1 &= \frac{\xi}{\mu} - \frac{(\mu + \xi)(\mu + \omega + \delta_1)(\mu + \rho + \delta_3)}{\mu \rho (1-\sigma)(1-\theta)\omega} < 0,\quad
	w_2 = \frac{(\mu + \xi)(\mu + \rho + \delta_3)}{\rho (1-\sigma)(1-\theta)\omega}>0,\\
	w_3 & = \frac{\sigma (\mu + \xi)(\mu + \rho + \delta_3)}{\rho (1-\sigma)(1-\theta)(\mu + \delta_2)(\mu + \delta_2)}>0,\quad
	w_4 = \frac{\mu + \xi}{\rho}>0, \quad w_5 = 1, \\
	v_1 & = 0, \quad v_2 = \frac{\mu + \sigma \omega + \delta_2}{\mu + \omega + \delta_1}>0, \quad
	v_3 = 1,\quad v_4 = 0, \quad
	v_5 = 0.
	\end{align*}
	Furthermore, we require to evaluate the two bifurcation constants $a$ and $b$, as shown in Theorem 4.1 of \cite{Castillo2004} via the related non-zero partial derivatives of $f$ (calculated at $E_0$, $x_1 = S, x_2 = I, x_3 = T_n, x_4 = J, x_5 = R$), $a$ and $b$ are given by   
	\begin{align*}
	a &= 2v_2 w_1 w_2 \frac{\partial^{2}f_2}{\partial S \partial I} + 2v_2 w_1 w_3 \frac{\partial^{2}f_2}{\partial S \partial T_n} = 2 v_2 w_1(w_2 + w_3)\beta^* <0,\\
	b & = 2v_2 w_2 \frac{\partial^{2}f_2}{\partial \beta \partial I} + 2v_2 w_3 \frac{\partial^{2}f_2}{\partial \beta \partial T_n} = 2 v_2 S_0 (w_2 + w_3) > 0.
	\end{align*}
	Since the constants $b$ is positive and $a$ is negative, the system \eqref{model} follows a forward transcritical bifurcation at $\beta = \beta^*$.  
\end{proof}
\subsection{Different scenarios involving different parameters}\label{difsce}
From the above investigation, we realize that the system dynamics is governed by the basic reproduction number ($R_0$), and the infection dies out whenever $R_0 < 1$. We notice that if $R_0 = 1,$ then we have $$\sigma_{crit} \equiv \frac{(\delta _2+\mu)}{\omega }\Big(\frac{1}{R_0^{PT}}-1\Big).$$
	Since all the parameters in the model \eqref{model} are positive, it ensues that 
	$$\frac{dR_0}{d \sigma} = \frac{\beta  \Lambda  \omega  (\xi +\mu  (1-p))}{\mu  \left(\delta _2+\mu \right) (\mu +\xi ) \left(\delta _1+\mu +\omega \right)}>0.$$
	Thus, $R_0$ is always a continuous increasing function of $\sigma$ for $\sigma > 0,$ and if $\sigma < \sigma_{crit},$ then $R_0 < 1.$ It can be also observed that if $\sigma$ increases, i.e. the testing efficacy $(1-\sigma)$ decreases and thus $R_0$ increases with the decreasing testing efficacy and vice versa. Thus, the imperfect testing may be harmful to the community. It further indicates that model \eqref{model} has an endemic equilibrium for $\sigma > \sigma_{crit}$, i.e. $R_0 > 1,$ which is shown in Section \ref{endeq}. Thus, the above mentioned condition on $\sigma$ is also sufficient and necessary for disease control.
	
	Furthermore, from the following expression,
	$$\frac{dR_0}{d \xi} = \frac{\beta  \Lambda  p \left(\delta _2+\mu +\sigma  \omega \right)}{\left(\delta _2+\mu \right) (\mu +\xi )^2 \left(\delta _1+\mu +\omega \right)} > 0,$$ 
	it is easy to see that the waning immunity always increases the basic reproduction number $R_0$, as expected. Thus, the waning immunity results the detrimental consequences in the community. 
	
	In addition, we obtain that $R_0$ is a decreasing function of the testing rate $(\omega)$, i.e.,  
	$$\frac{dR_0}{d \omega} = -\frac{\beta  \Lambda  \left(\delta _2 -\delta _1 \sigma +\mu  (1-\sigma )\right) (\mu  (1-p)+\xi )}{\mu  \left(\delta _2+\mu \right) (\mu +\xi ) \left(\delta _1+\mu +\omega \right){}^2} < 0,$$ if and only if either $\delta_2 > \delta_1$ or $\delta_2 < \delta_1$ and $\sigma < \frac{\mu + \delta_2}{\mu + \delta_1}$, which implies testing of infected persons is beneficial in case: either (i) $\delta_2 > \delta_1$, or (ii) $\delta_2 < \delta_1$ and $\sigma < \frac{\mu + \delta_2}{\mu + \delta_1}$.

\subsection{Endemic equilibria ($q=0$)}\label{endeq}
The endemic equilibria of the system \eqref{model} with $q=0$ can clearly be computed in closed form. To find the certain conditions for the existence of the equilibrium point, we utilize the equations of the right-hand side of system \eqref{model} to express the variables in terms of the parameters. It gives the equilibrium $E^* (S^*, I^*, T_n^*, J^*, R^*),$ where

\begin{equation}
\begin{aligned}\label{end_eq}
S^* &= \frac{\Lambda}{\mu}\Big(\frac{(1-p)\mu + \xi}{\mu + \xi} \Big) - \frac{1}{\mu}(\mu + \omega + \delta_1) \Big(\frac{R_0 -1}{1+\frac{\sigma \omega}{\mu + \delta_2}} \Big)I^*, \\
I^* & = \frac{R_0 -1}{\Big(1+\frac{\sigma \omega}{\mu + \delta_2} \Big) \Big(1- \frac{\rho \xi (1-\sigma)(1-\theta)\omega}{(\mu + \delta_3 + \rho)(\mu + \xi)(\mu + \omega + \delta_1)} \Big)}, \quad
T_n^* = \frac{\sigma \omega I^*}{(\mu + \delta_2)},\\
J^* &= \frac{(1-\sigma)(1-\theta) \omega I^*}{(\mu + \delta_3 + \rho)}, \quad R^* = \frac{p \Lambda}{(\mu + \xi)} + \frac{\rho (1-\sigma)(1-\theta)\omega I^*}{(\mu + \delta_3 + \rho)(\mu + \xi)}. 
\end{aligned}
\end{equation}
The above expressions represent that $E^*$ exists if and only if $R_0>1$; otherwise, there does not exist any positive equilibria. 
Consequently, $R_0$ stands for a threshold value for the existence of endemic equilibrium of the system \eqref{model}. Additionally, the basic reproduction number $(R_0)$ is a well-known measure that provides the information about average number of secondary infections caused by a single infectious person in the entire susceptible populace. Consequently, if $R_0 < 1$, each infectious person will generate less than one diseased persons on average in the total infectious time duration, implying that the infection will fade out. Though, if $R_0 > 1,$ each infectious person in the total infectious time duration would produce more than one diseased individuals; this indicates to the disease persisting in the community.  

\subsection{Endemic equilibria ($q \neq 0$)}\label{end_eq_qnotzero}
Since diseases like influenza and COVID-19 could also be commenced into the populace by immigration of diseased persons, it is more practical to count $q > 0$ in the system \eqref{model}. Mathematically speaking, if immigration/recruitment of diseases persons is admitted, the system \eqref{model} does not allow to exist a DFE, and eradicating the infection is impossible. In this scenario, the public health aim is to curtail the level of endemicity.

Equilibrium point $E^*(S^*, I^*, T_n^*, J^*, R^*)$ for $q \neq 0,$ for which the disease is endemic in the community, the expressions of $S^*, T_n^*, J^*, R^*$ remain same as defined in \eqref{end_eq}. However, their numerical values change with the corresponding component $I^*$, and $I^*$ can be obtained from the roots of the following quadratic equation: 
\begin{equation}\label{endeq_qnot0}
P(I,q) = A_1 I^2 + A_2 I + A_3 = 0,
\end{equation}
where 
\begin{align*}
A_1 = & \frac{\beta}{\mu}\Bigg(\frac{\rho \xi (1-\sigma) (1-\theta) \omega}{(\mu + \delta_3 + \rho)(\mu + \xi)(\mu + \omega + \delta_1)} -1 \Bigg) \Bigg(1 + \frac{\sigma \omega}{\mu + \delta_2} \Bigg)(\mu + \omega + \delta_1) < 0, \\
A_2 = & (\mu + \omega + \delta_1)(R_0 - 1), \quad A_3 = q \Lambda > 0.
\end{align*}
Note that for $q=0,$ $P(I,q)$ has roots which corresponds to equilibria given  in \eqref{end_eq} (along with $I = 0$). The negative equilibria are feasibly unrealistic, the restrictions for $P(I,q)$ with $q \neq 0$ to have positive real solutions are examined. 

Clearly $A_1 <0$, $A_3 >0$ and so the quadratic $P(I,q)$ is concave down and the vertical intercept of $P(I,q)$ is positive. It implies that $P(I,q)$ always has two real roots with opposite signs. However, to determine expression of the roots $P(I,q)$, we have following cases: 
\begin{itemize}
	\item[Case 1.] Assume $R_0 > 1$. Then $A_2 > 0$ and expression of the roots of $P(I,q)$ are given by 
	\begin{equation}
	I^*_{1,2} = \frac{-A_2 \pm \sqrt{A_2^2 - 4 A_1 A_3}}{4 A_1}.
	\end{equation} 
	It can be clearly seen that $I^*_1$ is always positive and $I^*_2$ is negative, i.e.
	$$I^*_1  = \frac{-A_2 + \sqrt{A_2^2 - 4 A_1 A_3}}{4 A_1} > 0 \quad \text{and} \quad I^*_2  = \frac{-A_2 - \sqrt{A_2^2 - 4 A_1 A_3}}{4 A_1} < 0.$$
	\item[Case 2.] Assume $R_0 = 1$. Then $A_2 = 0$ and the quadratic equation $P(I,q) = A_1 I^2 + A_3,$ with roots $I^* = \pm \frac{A_3}{A_1}$. However, it is clear that $A_1 < 1$, then $I^* = - \frac{A_3}{A_1}$ is a positive real root. Thus, the model \eqref{model} has a unique positive equilibrium when $R_0 = 1.$
	\item[Case 3.] Suppose $R_0 < 1$. Then $A_2 < 0$ and expression of the roots of $P(I,q)$ are given by 
	\begin{equation}
	I^*_{1,2} = \frac{-A_2 \pm \sqrt{A_2^2 - 4 A_1 A_3}}{4 A_1}.
	\end{equation} 
	It can be clearly seen that $I^*_1$ is always positive and $I^*_2$ is negative, i.e.
	$$I^*_1  = \frac{-A_2 + \sqrt{A_2^2 - 4 A_1 A_3}}{4 A_1} > 0 \quad \text{and} \quad I^*_2  = \frac{-A_2 - \sqrt{A_2^2 - 4 A_1 A_3}}{4 A_1} < 0.$$
\end{itemize}
 This implies that Eq. \eqref{endeq_qnot0} always has a unique positive solution. Thus, a unique positive equilibrium exists for the case $q \neq 0$. The above discussion briefs in the following result:
 \begin{proposition}\label{pre}
 	The model \eqref{model} always has a unique endemic equilibrium for $q \neq 0$.
 \end{proposition}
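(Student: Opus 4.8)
The plan is to reduce the existence and uniqueness of an endemic equilibrium to counting the positive roots of the quadratic $P(I,q)$ in \eqref{endeq_qnot0}, and then to confirm that such a positive root yields a biologically feasible (componentwise nonnegative) equilibrium. Since the excerpt already writes $S^*,T_n^*,J^*,R^*$ as explicit functions of $I^*$, an endemic equilibrium of \eqref{model} corresponds exactly to a positive root $I^*$ of $P(\cdot,q)=0$. Hence it suffices to show $P(\cdot,q)$ has precisely one positive root and that this root produces a feasible state.

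First I would record the coefficient signs already available: $A_1<0$ and $A_3=q\Lambda>0$, while $A_2=(\mu+\omega+\delta_1)(R_0-1)$ has indeterminate sign. Rather than splitting into the three cases $R_0\gtrless 1$ as in the preceding discussion, I would give a single unified argument via Vieta's formulas. The product of the two roots of $P(\cdot,q)$ equals $A_3/A_1<0$, so the roots are either real of opposite sign or a complex-conjugate pair; but the discriminant $A_2^2-4A_1A_3=A_2^2+4|A_1|A_3>0$ rules out the complex case. Therefore $P(\cdot,q)$ always has two distinct real roots of opposite sign, i.e.\ exactly one positive root $I^*$, independently of the sign of $A_2$ (equivalently, of whether $R_0$ exceeds $1$).

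Next I would verify feasibility of the equilibrium determined by this positive $I^*$. From $I^*>0$ the formulas in \eqref{end_eq} immediately give $T_n^*\ge 0$, $J^*\ge 0$, and $R^*>0$ (each is $I^*$ times a nonnegative constant, with $R^*$ carrying the additional positive term $p\Lambda/(\mu+\xi)$). The one component needing care is $S^*$, whose closed form in \eqref{end_eq} involves a subtraction and so is not manifestly positive. To sidestep this I would not use that subtraction form but instead read $S^*$ off the equilibrium version of the $S$-equation, writing $S^*=\big((1-p-q)\Lambda+\xi R^*\big)\big/\big(\mu+\beta(I^*+T_n^*)\big)$; under the standing assumption $p+q\le 1$ the numerator is positive and the denominator is positive, so $S^*>0$.

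The argument is essentially a sign computation, so I do not anticipate a serious obstacle. The one point requiring attention is the positivity of $S^*$: the subtraction form printed in \eqref{end_eq} obscures its sign, and establishing $S^*>0$ cleanly relies on rewriting it through the $S$-balance equation and invoking $p+q\le 1$. Once this is in place, uniqueness is immediate, since each admissible $I^*$ determines the remaining components uniquely and there is exactly one positive $I^*$. I would then conclude that \eqref{model} possesses a unique endemic equilibrium for every $q\neq 0$.
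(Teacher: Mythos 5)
Your proposal is correct, and its core is the same reduction the paper uses: an endemic equilibrium is identified with a positive root of the quadratic $P(I,q)=A_1I^2+A_2I+A_3$, with $A_1<0$ and $A_3=q\Lambda>0$ forcing exactly one positive root. Where you differ is in execution, in two ways that both improve on the paper's argument. First, the paper, after noting that $P(\cdot,q)$ is concave down with positive intercept (hence has two real roots of opposite signs), still splits into the three cases $R_0>1$, $R_0=1$, $R_0<1$ to exhibit the roots explicitly; your Vieta/discriminant argument (product of roots $A_3/A_1<0$, discriminant $A_2^2-4A_1A_3>0$) makes the sign of $A_2$ irrelevant and collapses the case analysis into one line. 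This also sidesteps some slips in the paper's case-by-case formulas (e.g.\ the quadratic-formula denominators written as $4A_1$ rather than $2A_1$, and the roots in the $R_0=1$ case written as $\pm A_3/A_1$ rather than $\pm\sqrt{-A_3/A_1}$). Second, you verify biological feasibility of the resulting equilibrium, in particular positivity of $S^*$ via the balance form $S^*=\bigl((1-p-q)\Lambda+\xi R^*\bigr)/\bigl(\mu+\beta(I^*+T_n^*)\bigr)$ under $p+q\le 1$; the paper omits this check entirely, even though its closed-form expression for $S^*$ in \eqref{end_eq} involves a subtraction and is not manifestly positive. So your argument is not only correct but slightly more complete than the paper's: same skeleton, but a unified root count and an explicit confirmation that the unique positive root actually lies in the feasible region $\Omega$.
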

We now claim the global asymptotic stability of the unique endemic equilibrium in the following result:
\begin{theorem}\label{gbstabend}
 	The endemic equilibrium of system \eqref{model} is globally asymptotically stable whenever it exists.
 \end{theorem}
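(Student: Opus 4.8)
The plan is to construct a Goh--Volterra (logarithmic) Lyapunov function for the core transmission variables and close the argument with LaSalle's invariance principle. Writing $g(x)=x-x^{*}-x^{*}\ln(x/x^{*})$, which is nonnegative with a strict minimum at $x=x^{*}$ and blows up as $x\to0^{+}$, I would take
\begin{equation*}
V=c_{S}\,g(S)+c_{I}\,g(I)+c_{T_n}\,g(T_n)+c_{J}\,g(J)+c_{R}\,g(R),\qquad c_{i}>0.
\end{equation*}
By Theorem \ref{thm3.1} every solution is positive and stays in the bounded set $\Omega$; when $q\neq0$ the open orthant is forward invariant and attracting since $\dot I\ge q\Lambda>0$ on $\{I=0\}$, and when $q=0,\ R_0>1$ one first establishes uniform persistence from the instability of $E_0$, so in either case trajectories remain in the interior where $V$ is a well defined, positive definite function with compact sublevel sets and global minimum at $E^{*}$.

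Differentiating along \eqref{model} via $g'(x)\dot x=(1-x^{*}/x)\dot x$ and eliminating the recruitment constants with the equilibrium identities, every natural--mortality flux yields a negative definite term $-c_i\mu\,(x_i-x_i^{*})^{2}/x_i$, and the infective immigration contributes the favourable $-c_I\,q\Lambda\,(I-I^{*})^{2}/(I I^{*})\le0$. Choosing $c_S=c_I$ and $c_{T_n}=\beta S^{*}/(\mu+\delta_2)$, and substituting $(\mu+\omega+\delta_1)I^{*}=q\Lambda+\beta S^{*}(I^{*}+T_n^{*})$ together with $(\mu+\delta_2)T_n^{*}=\sigma\omega I^{*}$, the bilinear incidence contributions assemble into
\begin{equation*}
\beta S^{*}I^{*}\!\left(2-\frac{S^{*}}{S}-\frac{S}{S^{*}}\right)+\beta S^{*}T_n^{*}\!\left(3-\frac{S^{*}}{S}-\frac{S\,T_n\,I^{*}}{S^{*}T_n^{*}I}-\frac{I\,T_n^{*}}{I^{*}T_n}\right),
\end{equation*}
and both brackets are $\le0$ because in each the subtracted ratios multiply to $1$, so by the arithmetic--geometric mean inequality their sum is at least the number of terms. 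The core $(S,I,T_n)$ block is therefore under control on its own.

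The essential difficulty is the waning--immunity feedback loop $I\to J\to R\xrightarrow{\ \xi\ }S$. The pure transfer arcs are benign: using $\phi(x):=x-1-\ln x\ge0$ one has, for example, $1+\tfrac{I}{I^{*}}-\tfrac{J}{J^{*}}-\tfrac{IJ^{*}}{I^{*}J}\le\phi(I/I^{*})-\phi(J/J^{*})$, a clean potential difference. The obstruction is the reinfection arc $R\to S$: it appears in $\dot V$ only through $c_S\,\xi(R-R^{*})(1-S^{*}/S)$, a product of the $R$-- and $S$--deviations that does \emph{not} collapse to an AM--GM group nor to a potential difference $\phi(R/R^{*})-\phi(S/S^{*})$, and because the mass--action block above already closes by itself, the $+\phi(I/I^{*})$ injected by the $I\to J$ arc has no partner to cancel it. Thus a naive sum of logarithms fails precisely when $\xi>0$. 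I see three ways forward: (i) determine $c_J,c_R$ (and readjust $c_S$) by the matrix--tree/graph--theoretic recipe of Shuai--van den Driessche for the cycle and try to force the $\phi$--potentials to telescope; (ii) augment $V$ with quadratic terms in $J$ and $R$ and absorb the residual cross term into the spare negative definite quadratics $-c_R(\mu+\xi)(R-R^{*})^{2}/R$ and $-c_J(\mu+\delta_3+\rho)(J-J^{*})^{2}/J$ by enlarging $c_R$; or (iii), the robust fallback that avoids these cancellations altogether, the Li--Muldowney geometric approach via the second additive compound matrix of the Jacobian, which, combined with the local asymptotic stability of $E^{*}$ and uniform persistence, rules out nonconstant periodic orbits and yields global attractivity.

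Granting a successful weight choice (or the compound--matrix estimate), one gets $\dot V\le0$ throughout the interior of $\Omega$, with equality forcing $S=S^{*}$ through the mortality term; propagating this equality successively through the $R$--, $J$--, $T_n$-- and $I$--equations pins every coordinate to its endemic value, so the largest invariant subset of $\{\dot V=0\}$ is $\{E^{*}\}$. LaSalle's invariance principle then gives convergence of every interior trajectory to $E^{*}$, and since $E^{*}$ is the unique endemic equilibrium whenever it exists (Proposition \ref{pre} and \S\ref{endeq}), this establishes global asymptotic stability on the biologically relevant region. I expect closing the waning--immunity loop, i.e.\ taming the product--type cross term generated by $\xi>0$, to be the genuinely hard step.
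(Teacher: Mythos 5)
Your framework is the same as the paper's: the paper's Lyapunov function, written in the scaled variables $x=S/S^*,\ y=I/I^*,\ z=T_n/T_n^*,\ u=J/J^*,\ v=R/R^*$, is $Z=S^*(x-1-\ln x)+I^*(y-1-\ln y)+T_n^*(z-1-\ln z)+J^*(u-1-\ln u)+R^*(v-1-\ln v)$, i.e.\ exactly your $V$ with all weights $c_i=1$, and the paper also finishes with LaSalle's invariance principle. Your treatment of the $(S,I,T_n)$ block likewise reproduces the paper's groups (e.g.\ $3-\tfrac1x-\tfrac{xz}{y}-\tfrac yz$). The genuine gap is that your argument stops exactly at the step the theorem requires: the waning-immunity loop is never closed. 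You list three candidate strategies (Shuai--van den Driessche weights, added quadratic terms, Li--Muldowney compound matrices) but carry out none of them, and your conclusion is explicitly conditional (``granting a successful weight choice\dots''). An outline that defers the hard step is not a proof of the statement.

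Moreover, your diagnosis that ``a naive sum of logarithms fails precisely when $\xi>0$'' is refuted by the paper's own computation. With unit weights, the reinfection arc contributes the ratio $v/x$ (together with terms linear in $x$ and $v$ and constants), and the paper absorbs it into cyclic AM--GM groups: $b_5\bigl(3-x-\tfrac1v-\tfrac vx\bigr)$ and, in one case, $b_9\bigl(4-\tfrac1v-\tfrac vx-\tfrac yz-\tfrac{xz}{y}\bigr)$, while the $I\to J\to R$ arcs sit inside $b_7\bigl(4-\tfrac1y-v-\tfrac yu-\tfrac uv\bigr)$ and $b_8\bigl(3-u-\tfrac yu-\tfrac1y\bigr)$; in every group the subtracted ratios multiply to $1$, so each bracket is $\le 0$ and equality forces $x=y=z=u=v=1$. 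The subtlety you missed is not that the function must be modified, but that the decomposition of $Z'$ into such groups is non-unique and constrained: the free coefficient $b_1$ must satisfy \eqref{con_b1}, which forces the paper's three-case analysis according to the sign of $(1-p-q)\Lambda-\beta S^*T_n^*$. That regrouping (in effect your option (i), but with trivial weights) is the missing idea; neither quadratic augmentation nor compound-matrix machinery is needed.
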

\begin{proof}
	The proof follows by constructing a suitable Lyapunov function and is provided in \nameref{apndxa}. The case $q=0$, could be dealt similarly.
\end{proof}

To observe the equilibria for the model \eqref{model}, let $\Lambda = 10^6$, $\mu=0.0002$, $p=0.1$, $\xi=0.05$ and other values from Table \ref{fitval}. 
By choosing the above hypothetical parameter values and MATLAB R2021a, we can describe the forward transcritical bifurcation (given in Theorem \ref{fortrans}) diagram for model \eqref{model} at $R_0=1$ for the case $q=0$ (see, Figure \ref{eqlibrm}(a)). By plotting $I^*$ with respect to $R_0$, it is clear that there is one threshold $R_0 =1$ at which forward transcritical bifurcation occurs. In the region $R_0<1,$ only DFE exists and in $R_0>1$, a unique endemic equilibrium exists. 

Let all parameters values be same as in Figure \ref{eqlibrm}(a), except $q$. For five different values of $q$, the endemic equilibrium is plotted in Figure \ref{eqlibrm}(b). It can be seen that the threshold $R_0 =1$ does not work for $q \neq 0.$ The model \eqref{model} always has a unique endemic equilibrium for $q \neq 0$ in the whole region $R_0<1$ as well as $R_0>1$. Moreover, the endemic level increases with the increasing value of $q$.  

\begin{figure}[H]
	(a)\includegraphics[scale=0.5]{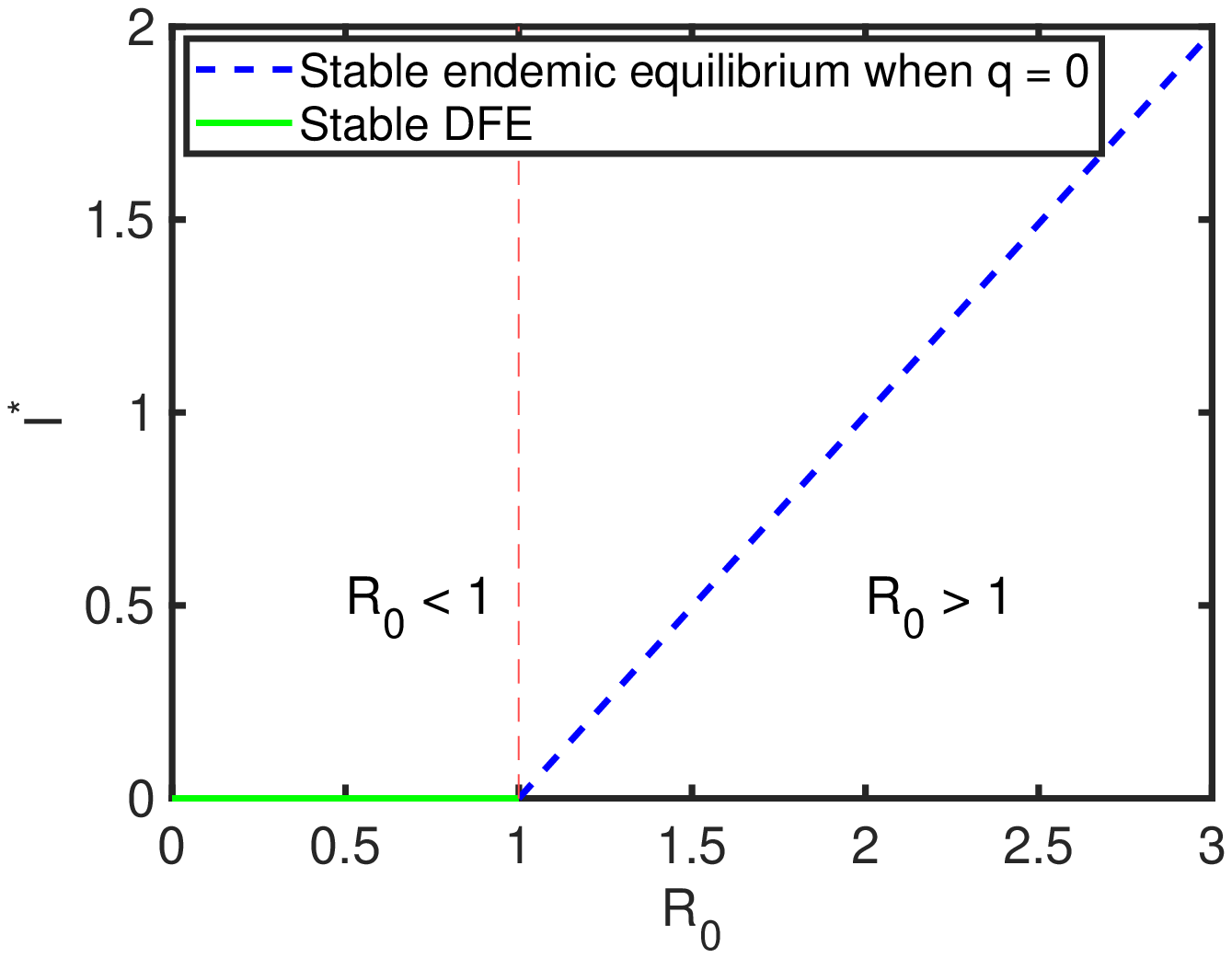}
	(b)\includegraphics[scale=0.5]{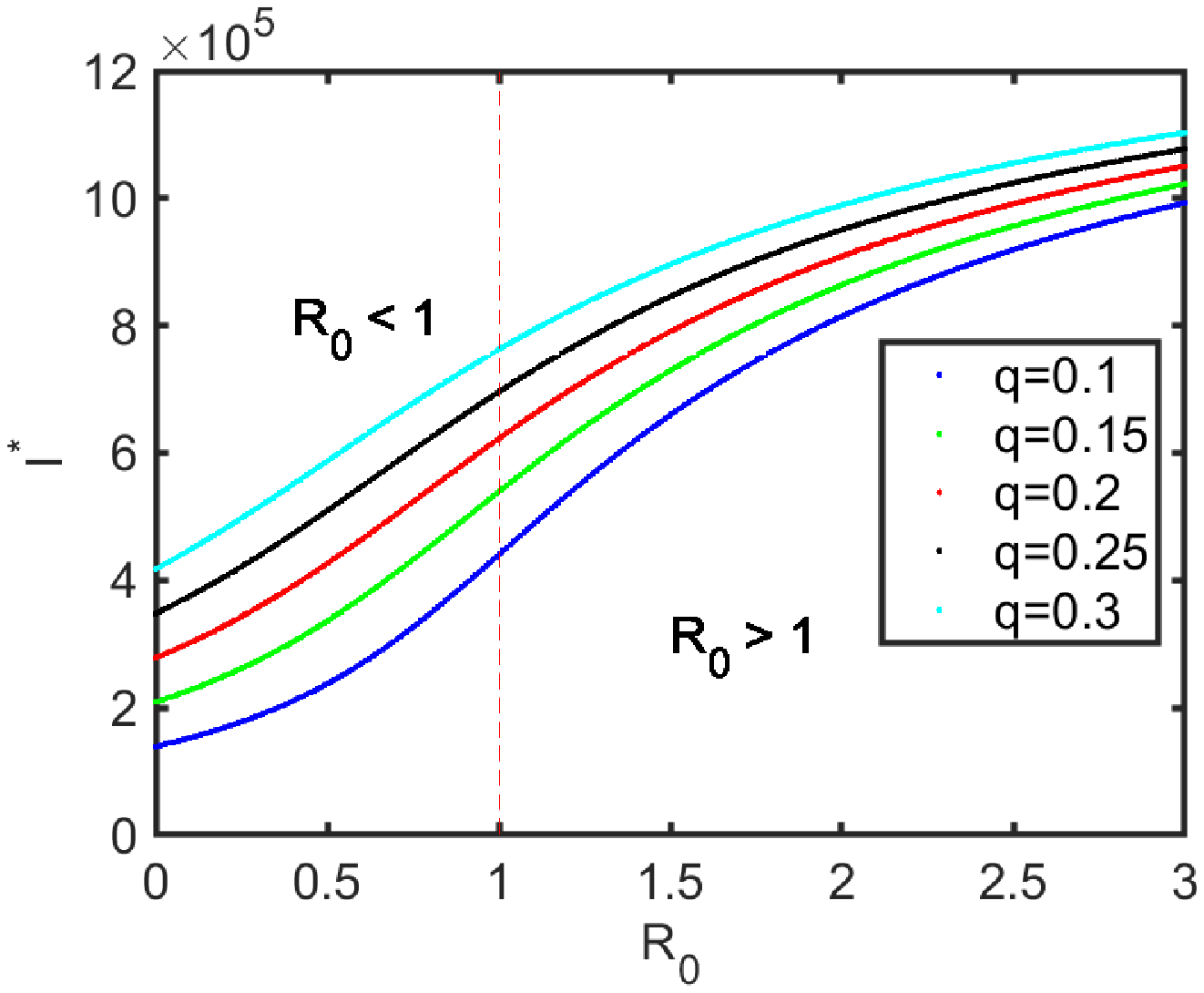}
	\caption{The figure depicts the equilibria of model \eqref{model} for both the cases $q=0$ and $q \neq 0$. (a) Forward transcritical bifurcation diagram shows the existence of the DFE for $R_0 < 1$ (green line) and a unique endemic equilibrium $R_0 > 1$ (blue dashed curve) for the case $q=0$. (b) A unique endemic equilibrium for the case $q \neq 0$, different colored curves show the level of endemicity for different values of the parameter  $q$.}\label{eqlibrm}
\end{figure}

\section{The peak and final size relation of an epidemic}\label{peakandfinal}
The final size relations of an epidemic or outbreak are biological quantities allied with the dynamics of epidemic models (without demographic effects), permit for the accurate quantification of infection load in the community and can be benefitted to evaluate the effectiveness and impact of numerous mitigation and interventions strategies, for instance, the SARS epidemic in 2002-2004, COVID-19. An epidemic model without any recruitment of individuals (from either births/immigration or loss of immunity) can be benefited to designate the short-term disease transmissions with a short period of infection and permanent immunity (e.g., a specific strain of SARS-CoV-2 or influenza). In these cases, births and deaths may be ignored because of the short time period. For a novel disease, loss of immunity may also be neglected because of the interest in the initial phase of the epidemic, at that time the quantity of infected persons is small. The above factors are also not evident at the single outbreak disease scale. The final size relations are relations comprising the number of the population remain in each disease-free class throughout the epidemic and the basic reproduction number. 

The original model \eqref{model} with no demographic effect, i.e. $\Lambda = \mu = 0$, and without loss of immunity ($\xi = 0$) reduces to the following: 
\begin{equation} \label{modeloutbreak}
\begin{aligned}
\frac{dS}{dt} &= - \beta S(I+T_n), \\
\frac{dI}{dt} &=  \beta S(I+T_n) - (\omega + \delta_1)I, \\
\frac{dT_n}{dt} & = \sigma \omega I - \delta_2 T_n, \\
\frac{dJ}{dt} & = (1- \sigma) (1-\theta) \omega I - (\delta_3 + \rho)J, \\
\frac{dR}{dt} & = \rho J,
\end{aligned}
\end{equation}
with initial conditions $S(0)>0, I(0)\geq 0, T_n(0)\geq 0, J(0)\geq 0, R(0)\geq 0$. 

The model \eqref{modeloutbreak} has a disease-free equilibrium given by $(N_0, 0,0,0,0)$, where $N_0 \sim N(0)$ is the initial size of total susceptible population in the absence of disease.

Using the notations from Arino et al. \cite{Arino2007}, let $x \in \mathbb{R}^3$, $y \in \mathbb{R}^1$, and $z \in \mathbb{R}^1$ represent the set of diseased compartments, the set of susceptible compartments, and the set of compartments removed from the disease, respectively. Therefore, it ensues from the model \eqref{modeloutbreak}, that $x=(I, T_n, J)^T$, $y=S$, and $z=R$. Further, let $D$ be the $m \times m$ diagonal matrix whose diagonal entries represent the relative susceptibilities of the associated susceptible compartments. It is appropriate to specify $\Pi$ to be an $n \times m$ matrix with the feature that the $(i,j)$ entry signifies the fraction of the $j^{th}$ susceptible class that moves into the $i^{th}$ diseased class upon getting infection. Let $b$ be an $n$-dimensional row vector of relative horizontal transmissions. It follows, in the context of the model \eqref{modeloutbreak}, that 
\begin{align*}
D=1, \quad \Pi = \left[\begin{array}{c}
1 \\ 
0 \\ 
0
\end{array}  \right], \quad b = \left[\begin{array}{ccc}
1& 1 & 0
\end{array}  \right]. 
\end{align*}
Using the above notations, definitions and variables, the system \eqref{modeloutbreak} rewrites to:
\begin{equation}\label{eqmodel}
\begin{aligned}
\frac{dx}{dt} &= \Pi Dy\beta bx - Vx,\\
\frac{dy}{dt} &= -\Pi Dy\beta bx, \\
\frac{dz}{dt} &= Wx,
\end{aligned}
\end{equation}
where $W$ is a $k \times n$ matrix with the characteristic that the entry $(i,j)$ denotes the rate at which individuals of the $j^{th}$ diseased class
moves into the removed ($i^{th}$) class upon recovery and the matrix $V$ is same as described in Section \ref{dfe} (with $\mu = 0$). It is significant saying that the basic reproduction number, $R_{01}$, (of the system \eqref{modeloutbreak} or equivalently \eqref{eqmodel}) could be obtained utilizing the definition $R_0 = \beta(0,y_0,z_0)bV^{-1}\Pi Dy_0$ provided in Theorem 2.1 of \cite{Arino2007}. It must be noticed that this theorem also claims the local asymptotic stability for the disease-free equilibrium of the system \eqref{modeloutbreak}. The basic reproduction number, $R_{01}$, for system \eqref{modeloutbreak} is given by 
\begin{equation}
R_{01} = \frac{\beta N_0}{(\omega + \delta_1)}\Bigg[1+\frac{\sigma \omega}{\delta_2}\Bigg].
\end{equation}

\subsection{Peak size relation} 
To investigate the peak size of an epidemic, we use the approach given in Feng \cite{FEN2007}. We define a weighted sum of disease variables given by $Y(t) = \frac{1}{R_{01}}\beta b V^{-1}x$, which yields 
\begin{equation}\label{wght}
Y(t) =  I + \frac{\delta_1 + \omega}{(\delta_2 + \sigma \omega)} T_n.
\end{equation}
The infected compartments $I(t)$ and $T_n(t)$ are only counted since they participate in the disease spread. Further, by differentiating \eqref{wght} with respect to the time $t$, we obtain
\begin{equation}
\frac{d Y(t)}{dt} = \frac{d I}{dt} + \frac{\delta_1 + \omega}{(\delta_2 + \sigma \omega)} \frac{dT_n}{dt},
\end{equation}
and substituting $\frac{d I}{dt}$ and $\frac{dT_n}{dt}$ from the system \eqref{modeloutbreak}, yields
\begin{equation}
\frac{d Y(t)}{dt} = \beta S\Big(1-\frac{1}{S R_{01}}\Big)(I+T_n).
\end{equation}
Further, we have
\begin{equation}\label{dyds}
\frac{d Y}{dS} = -\Big(1-\frac{1}{S R_{01}}\Big).
\end{equation}
Integrating \eqref{dyds} and using initial conditions $S(0) = S_0$ and $Y(0) = Y_0$ yields 
\begin{equation}
Y + S - \frac{\ln (S)}{R_{01}} = Y_0 + S_0 - \frac{\ln (S_0)}{R_{01}},
\end{equation}
where $Y_0 = I_0 + \frac{\delta_1 + \omega}{(\delta_2 + \sigma \omega)} T_{n_0}$. The maximum value of $Y(t)$ at any time $t$ is the number of infectives when $\frac{dY}{dt} = 0,$ i.e. when $S =  \frac{1}{R_{01}}$. It is specified by  
\begin{equation}\label{pksz}
Y_{max} = Y_0 + S_0 - \frac{1}{R_{01}} + \frac{\ln (\frac{1}{R_{01}})}{R_{01}} - \frac{\ln (S_0)}{R_{01}}.
\end{equation}
Hence Eq. \eqref{pksz} provides the peak size of an epidemic.

\subsection{Final size relation}
To analyze the final size of an epidemic, we consider $N = S_0$ and $S_{\infty}$ to be a non-negative smooth decreasing function that approaches to a limit as $t \rightarrow \infty$, i.e. $S_{\infty} > 0.$ Utilizing the method \cite{BRA2019}, let $I_{\infty} \rightarrow 0, T_{n_{\infty}} \rightarrow 0$ and $J_{\infty}\rightarrow 0$.

By addition of the first two equations of system \eqref{modeloutbreak}, we acquire
\begin{equation*}
S_{\infty} + I_{\infty} - S_0 - I_0 = -(\omega + \delta_1)\int_{0}^{\infty} I(s)ds,
\end{equation*}
but $S_0 = N, I_{\infty} = 0$ gives
\begin{equation}\label{eq20}
\int_{0}^{\infty} I(s)ds = \frac{N - S_{\infty}}{(\omega + \delta_1)} + \frac{I_0}{(\omega + \delta_1)}.
\end{equation}
Now integrating the third equation of system \eqref{modeloutbreak}, we obtain
\begin{equation*}
T_{n_{\infty}} - T_{n_{0}} = \sigma \omega \int_{0}^{\infty} I(s)ds - \delta_2 \int_{0}^{\infty} T_n(s)ds,
\end{equation*}
which yields
\begin{equation}\label{eq21}
\int_{0}^{\infty} T_n(s)ds = \frac{\sigma \omega}{\delta_2} \frac{N - S_{\infty}}{(\omega + \delta_1)} + \frac{\sigma \omega}{\delta_2} \frac{I_0}{(\omega + \delta_1)} + \frac{1}{\delta_2} T_{n_{0}}.
\end{equation}
Similarly, we have
\begin{equation}\label{eq22}
\int_{0}^{\infty} J(s)ds = \frac{(1-\sigma)(1-\theta)\omega}{(\delta_3 + \rho)(\omega + \delta_1)}(N - S_{\infty}) + \frac{(1-\sigma)(1-\theta)\omega}{(\delta_3 + \rho)(\omega + \delta_1)} I_0 + \frac{1}{(\delta_3 + \rho)} J_0.
\end{equation}
From the first equation of system \eqref{modeloutbreak}, we have
\begin{equation}\label{eq23}
\frac{1}{S}\frac{dS}{dt} = - \beta (I + T_n).
\end{equation}
Therefore, integrating Eq. \eqref{eq23} on $[0, \infty)$ gives
\begin{equation}
\ln S_{\infty} - \ln S_0 = -\beta \Big[\int_{0}^{\infty}I(s)ds + \int_{0}^{\infty}T_n(s)ds \Big],
\end{equation}
and substituting the values from \eqref{eq20}-\eqref{eq22} into \eqref{eq23} and simplifying gives 
\begin{equation}\label{eq25}
\ln \Big(\frac{S_{\infty}}{S_0}\Big) = -R_{01}(N - S_{\infty}) - R_{01} Y_0,
\end{equation}
where $Y_0 =  I_0 + \frac{\delta_1 + \omega}{(\delta_2 + \sigma \omega)} T_{n_0}.$ Therefore, Eq. \eqref{eq25} provides the final size relation with the initial infected populations $I_0$ and $T_{n_0}$. If the initially infected individuals are assumed to be zero, i.e., $I_0=T_{n_0}=J_0=0,$ and if a less number of infected individuals are introduced into the community then we have $S_0 \sim N$ such that the final size relation becomes in the form 
\begin{equation}
\ln \Big(\frac{S_{\infty}}{S_0}\Big) = -R_{01}(N - S_{\infty}),
\end{equation}
which yields
\begin{equation}
S_{\infty} = S_0 e^{-R_{01}N\Big(1 - \frac{S_{\infty}}{N}\Big)},
\end{equation}
thus, $\Big(1 - \frac{S_{\infty}}{N}\Big)$ is the clinical attack rate and $S_0 - S_{\infty}$ denotes the epidemic size, i.e. the number of infected individuals throughout the course of the epidemic.  

%Furthermore, the final size relation of the model \eqref{modeloutbreak} (or, equivalently, \eqref{eqmodel}) can be written directly from Theorem 5.1 in \cite{Arino2007} as follows:
%\begin{equation*}
%\ln \Bigg[ \frac{S(0)}{S_\infty}\Bigg] = \frac{R_0^{'}}{S(0)}\Big[S(0)-S_\infty\Big] + \beta b V^{-1}x(0),
%\end{equation*}
%which yields
%\begin{equation}
%\ln \Bigg[ \frac{S(0)}{S_\infty}\Bigg] = \frac{R_0^{'}}{S(0)}\Big[S(0)-S_\infty\Big] + \frac{R_0^{'}}{N(0)} I(0) + \frac{\beta}{\delta_2}T_n(0) + \frac{\beta}{(\rho + \delta_3)}J(0),
%\end{equation}
%with $I(0)>0, T_n(0)>0,$ and $J(0)>0.$ 
%
%If we consider the initial conditions $I(0)>0, T_n(0) = J(0) = 0$, then the final size relation given by 
%\begin{equation}
%\ln \Bigg[ \frac{S(0)}{S_\infty}\Bigg] = \frac{R_0^{'}}{S(0)}\Big[S(0)-S_\infty\Big] + \frac{R_0^{'}}{N(0)} I(0). 
%\end{equation}

\subsection{Infection and Case Fatality Rates}
Here, we compute infection fatality rate (IFR) and case fatality rate (CFR) based on the model \eqref{modeloutbreak}. Infection fatality rate (IFR) depends on the total infected population, i.e., involving the negative tested and asymptomatic individuals. In terms of the fatality and recovery rates, we have
\begin{equation}\label{ifr1}
\text{IFR} = \frac{D_{\infty}}{D_{\infty} + R_{\infty}}.100 (\%),
\end{equation}
where $D_{\infty}$ and $R_{\infty}$ signify the dead and recovered individuals at end of the epidemic ($t \rightarrow \infty$), respectively. The number of infected individuals is a sum of dead and recovered individuals. It can straightforwardly be revealed that using the equations of model \eqref{modeloutbreak}, we obtain 
\begin{equation}\label{ifr2}
\text{IFR} = \frac{\delta_1 I_{\infty} + \delta_2 T_n + \delta_3 J_{\infty}}{\delta_1 I_{\infty} + \delta_2 T_n + \delta_3 J_{\infty} + \rho J_{\infty}}.100 (\%).
\end{equation}
Eq. \eqref{ifr2} holds the infection fatality rate (IFR) at the epidemic's end. In addition, the case fatality rate (CFR) contains the number of deaths related to the diagnosed individuals, and IFR can not exceed the CFR since the number of undetected cases is added to its denominator.

\subsection{Parameter estimation and practical identifiability} 
First, we neglect the demographic effect in the model ($\Lambda = 0$ and $\mu = 0$) and waning immunity rate ($\xi = 0$) because we collect the data of early COVID-19 outbreak in India for 55 weeks only. The data of COVID-19 in India has been collected from 27 January 2020 to 8 February 2021 from the official website of the World Health Organization (WHO) \cite{whocovid}. We grasp some parameter values from the literature, such as disease-induced death rate, it is necessary because we do not fit the death cases for the model. The recovery rate is fixed as 1/2 because the average recovery time of infected individuals is two weeks \cite{WANG2022}. However, sometimes it varies between 1-4 weeks. The testing efficacy is assumed 71\% \cite{FANG2020}. The testing return time is assumed to be one week for the early outbreak; it may also vary from 1-10 days. However, it depends on the type of tests and is also improved for further outbreaks of COVID-19. The total population size of India is 138e+07. The initial size of susceptible population is given by $S(0) = N(0) - I(0) -T_{n}(0) - J(0) - R(0).$ 

We use the maximum likelihood method \cite{EIS2014} and MATLAB function \textit{fminsearch} to estimate the parameters. The Figures \ref{figfit}(a) and \ref{figfit}(b) show the weekly new cases and cumulative cases of the early outbreak of COVID-19 in India and the model output curves. The residuals in Figure \ref{figfit}(c) reveal that the model captures the data very well.  
The parameter values and initial values are given in Table \ref{fitval} and \ref{estmdini}, respectively. From the data and model output, we observe that the epidemic peak at approximately 35 weeks and a level near infected individuals. 

%In the next section \ref{5.5}, we observe the impact of parameters on the peak size either it would be delayed or reduced. 

We further analyze the practical identifiability of the model using Fisher Information Matrix (FIM) and profile likelihoods \cite{EIS2014, BRO2017, KAO2018}. The FIM method is a numerical approach to analyzing the identifiability of the model. The FIM is evaluated by $F = Y^T Y$, where $Y$ is the sensitivity matrix described by $Y_{(i,j)} = \frac{\partial x}{\partial q_i}(t_j)$ for parameters $q_1, q_2, \cdots , q_n$ and time points $t_1, t_2, \cdots, t_m$. The above form is an abridged form of the FIM for normally distributed measurement error. The rank of $F$ reveals the number of identifiable parameters and parameter combinations. The model is structurally unidentifiable if $F$ is singular; the model may be practically unidentifiable if $F$ is close to singular. The FIM also determines which parameter subsets are unidentifiable or identifiable. To maximize a likelihood function, we further compute profile likelihoods for the estimated parameters. The likelihood is defined by letting a constant, normally distributed measurement error with a standard deviation equal to 10\% of the mean of the data and mean equal to the model trajectory. Maximizing the likelihood function is similar to minimizing a cost function based on the negative log likelihood identical to least squares. The profile likelihood for the parameters is identified by the maximum values of the likelihood function across the range of the parameter values. The parameter is structurally unidentifiable if the profile likelihood is flat; it is practically unidentifiable if the profile likelihood is sufficiently shallow. However, structural identifiability is an analytical method, but FIM gives the idea of locally structural identifiability. For more details and a deep understanding of the profile likelihood, one can refer to the paper Raue et al. \cite{RAU2019}.  

While estimating the parameters and computing FIM for our model \eqref{modeloutbreak}, we find that the rank of FIM is two, inferring that there should be two identifiable parameters and parameter combinations, which also implies that the model is locally structurally identifiable. From profile likelihoods in Figure \ref{costfun}, it is clear that the profile likelihoods are not flattened and shallow around the estimated parameters; the bowl-shaped curves imply that the parameters $\beta$ and $I_0$ are practically identifiable. The red dashed lines are the thresholds for the approximate 95\% confidence bound of the profile likelihood. The profile likelihood curves of identifiable parameters should cross the thresholds on either side of the minimum (red dot), and the parameter values where they cross would be the confidence bounds. The Figure \ref{costfun}(a) and (b) show that the 95\% confidence bound for the parameters $\beta$ and $I_0$ are [1.012e-09, 1.145e-09] and [1.677e+04, 2.625e+04], respectively.  
 
\begin{table}[H]
	\caption{Estimated values of the model parameters \eqref{modeloutbreak}.}\label{fitval}
	\begin{tabular}{p{3cm}p{3cm}p{3cm}p{3cm}}
		\hline
		Parameters	& Fitted values & Units & References \\
		\hline
		
		$\beta$	& 1.0860e-09 & Per week & Estimated \\
		
		$\delta_1 = \delta_2 = \delta_3$	&  0.7 & Per week & \cite{BUG2020} \\
		
		$\omega$ & 1 & Per week & Assumed \\
		
		$\sigma$ & 0.29 & Dimensionless & \cite{FANG2020}  \\
		
		$\theta$ & 0.95 & Dimensionless & Assumed \\
		
		$1/\rho$ & 2 & Per week & \cite{WANG2022}  \\
		\hline
	\end{tabular}
\end{table}

\begin{table}[H]
	\caption{Initial conditions with respect to the model \eqref{modeloutbreak}.}\label{estmdini}
	\begin{tabular}{ccc}
		\hline 
		Initial conditions & Values & References \\ 
		\hline
	$S(0)$	& 137e+07 & $N(0) - I(0) - T_n(0) - J(0)$ \\ 
	$I(0)$ & 2.1296e+04 & Estimated \\ 
	$T_n(0)$  & e+04 & Assumed \\ 
	$J(0)$  & 2 & Data \\ 
	$R(0)$ & 0 & Data \\ 
		\hline 
	\end{tabular} 
\end{table}

For the baseline parameter values given in Table \ref{fitval}, the value of the basic reproduction number is computed as $R_{01} = 1.4378>1.$

\begin{figure}[H]
	(a)\includegraphics[scale=0.4]{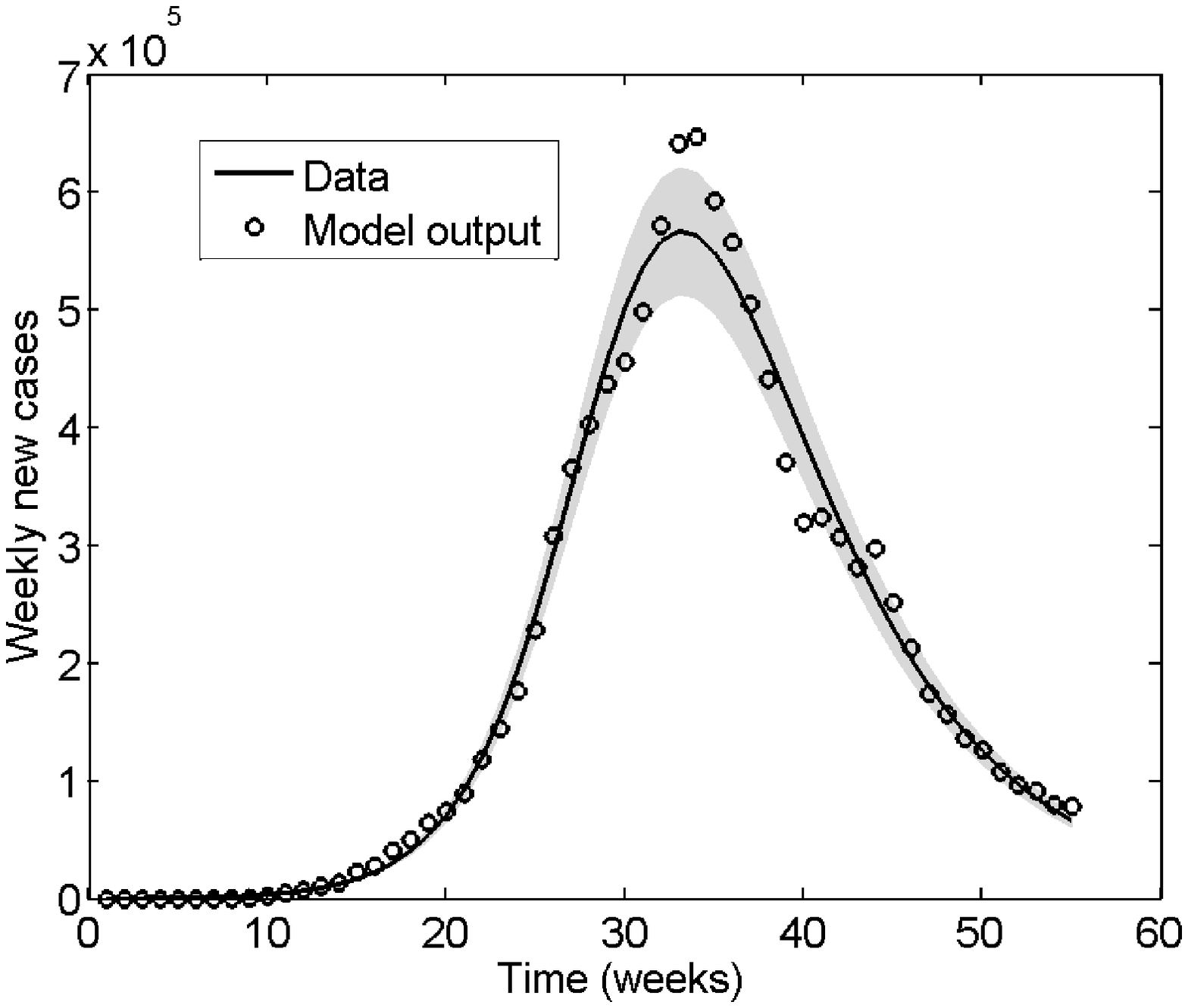}
	(b)\includegraphics[scale=0.4]{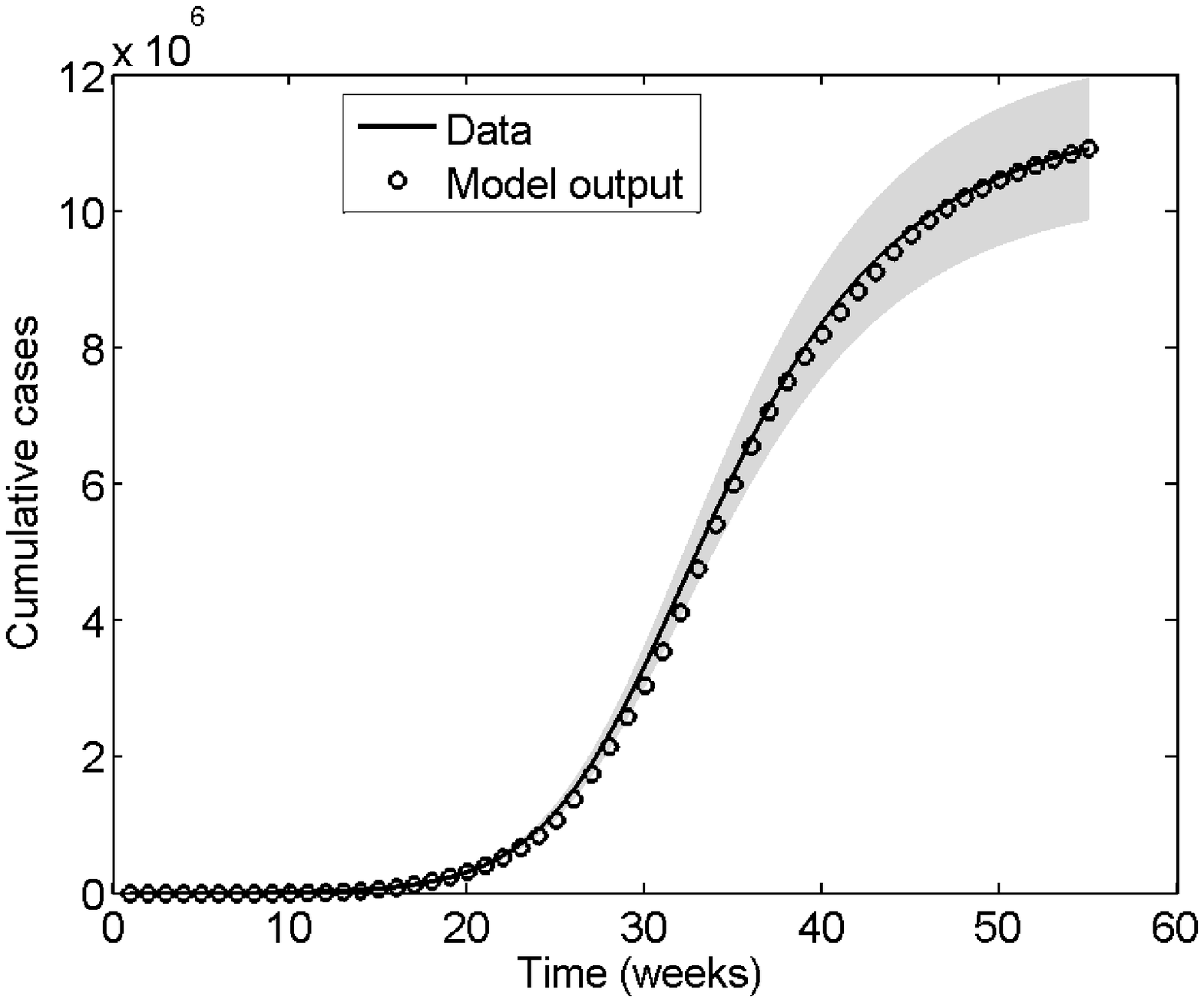}
	(c)\includegraphics[scale=0.5]{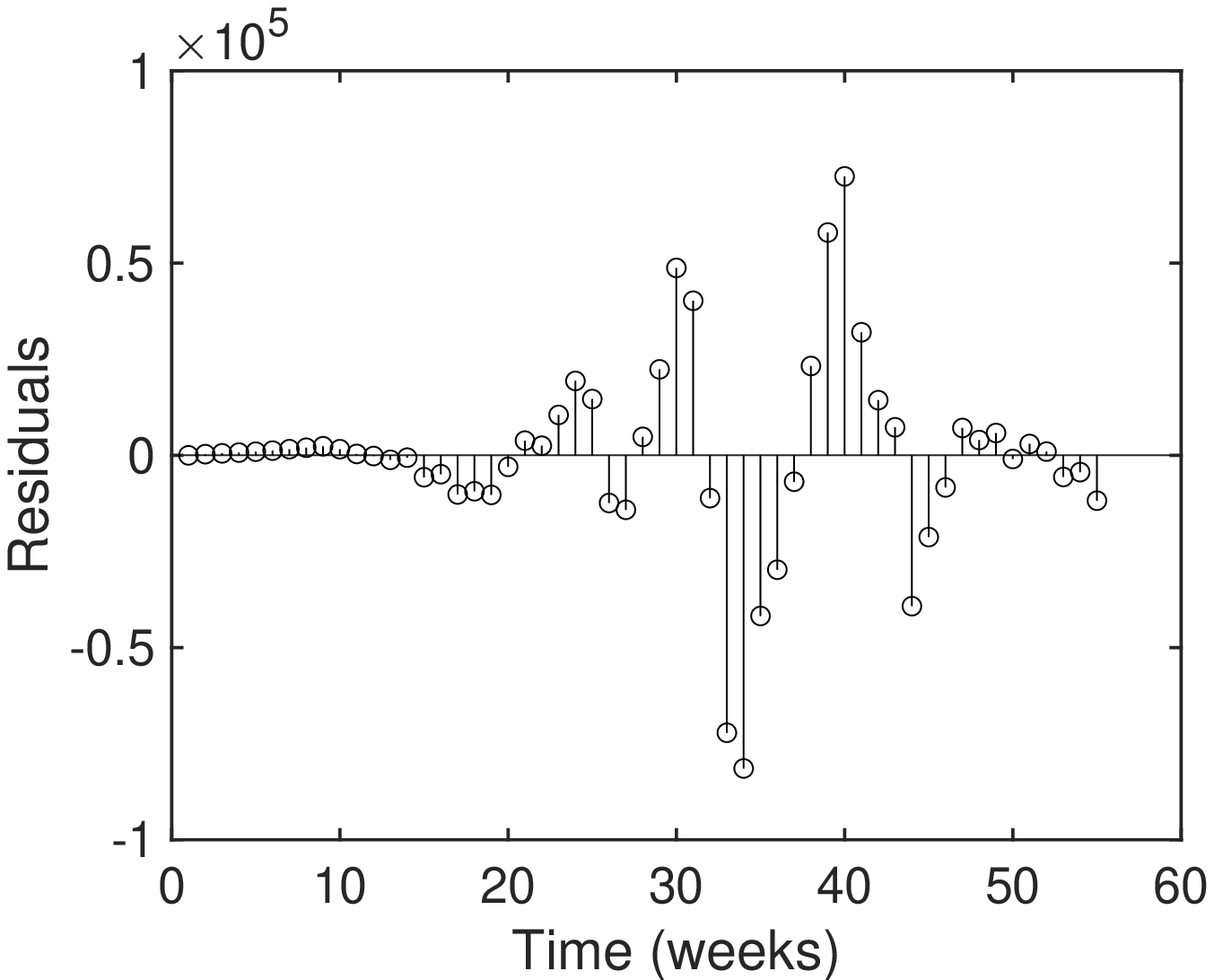}
	\caption{Model \eqref{modeloutbreak} is fitted to weekly reported and cumulative cases of the early COVID-19 outbreak in India. (a) The black circles show the data of weekly new cases, and the black curve demonstrates the model output. (b) The black circles show the data of cumulative cases, and the black curve illustrates the model output. (c) Residuals of the fit.}\label{figfit}
\end{figure}

\begin{figure}[H]
	(a)\includegraphics[scale=0.45]{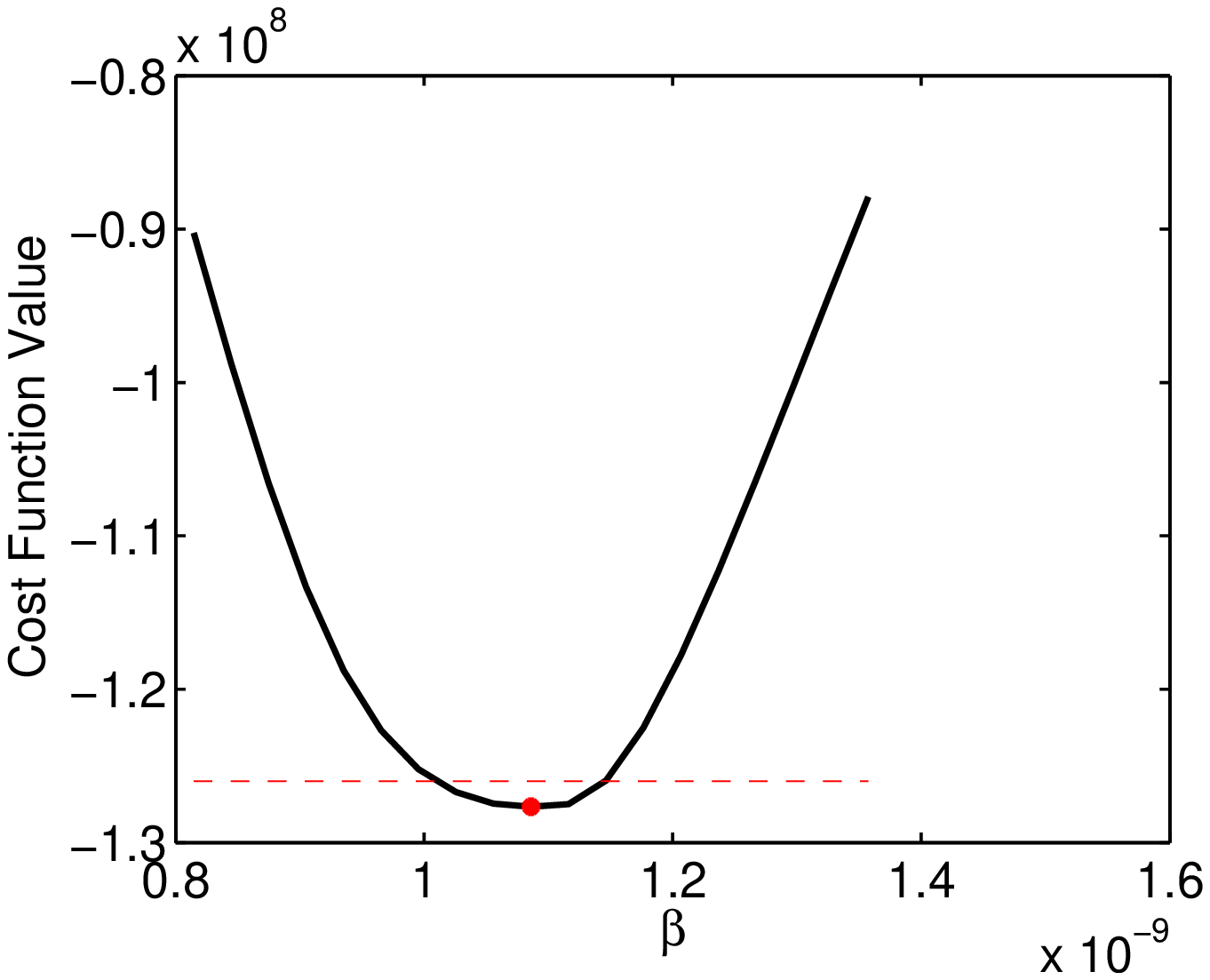}
	(b)\includegraphics[scale=0.45]{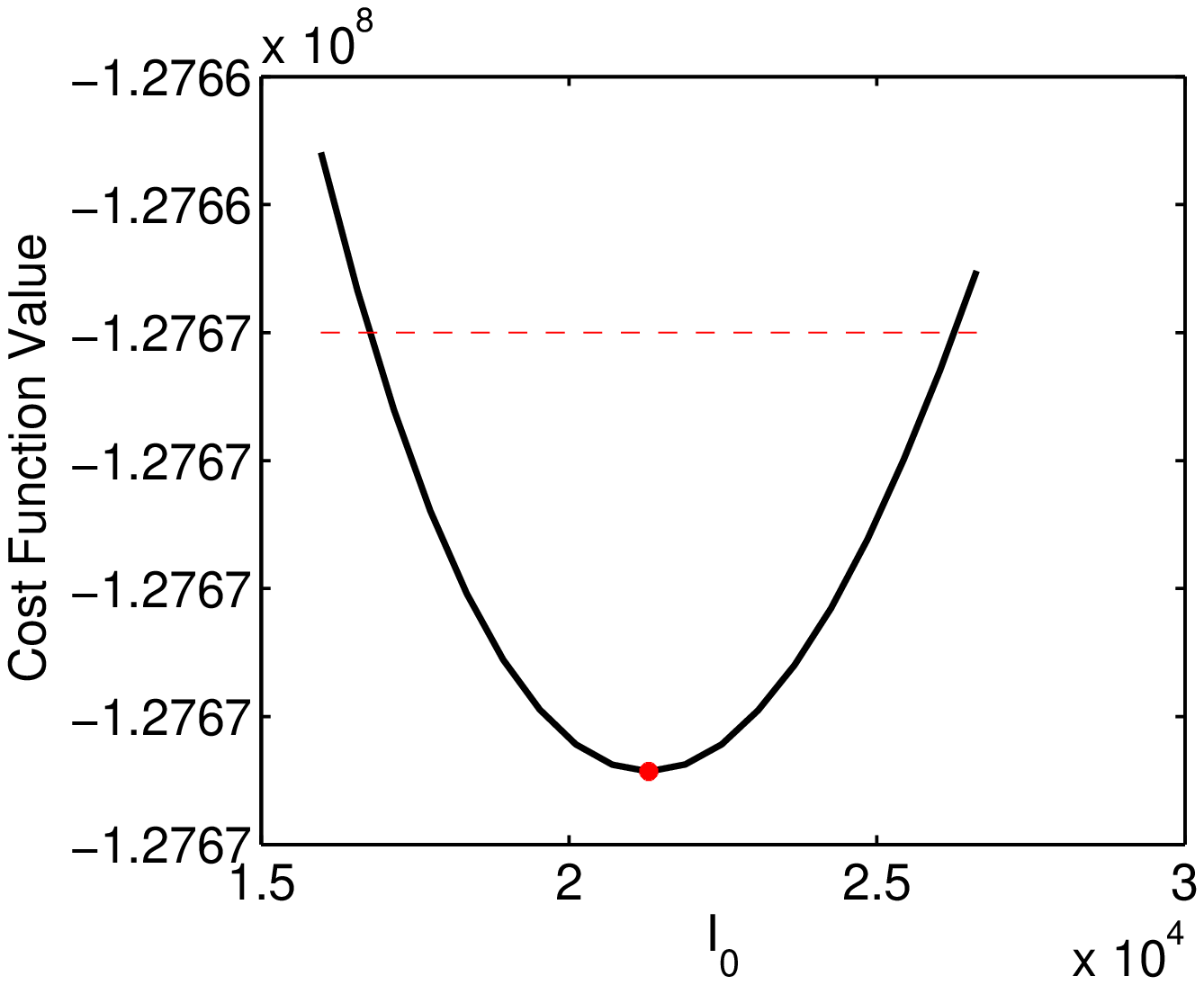}
	\caption{Cost functions based on profile likelihoods are shown for the parameters $\beta$ and $I_0$. Black curves show the cost function value, red dots denote the local minima, and the red dashed line signifies the 95 \% confidence thresholds.}\label{costfun}
\end{figure}

\subsection{Sensitivity analysis}\label{sens}
This section inspects how a slight variation in the critical model parameters influences (changes) weekly new and cumulative cases of infection as well as the basic reproduction number. The most important parameter to reduce in the system is the basic reproduction number of the disease transmission. The infection in the population will be removed if the basic reproduction number is reduced to less than unity. Even if the basic reproduction number cannot be reduced to lower than one, sensitivity analysis may benefit to uncover which parameters, if performed upon, would fetch the most significant decrease in the basic reproduction number. Sensitivity analysis gives the idea about what measures may degrade the weekly new, cumulative cases of infection and basic reproduction number.  

We perform the global sensitivity analysis of model \eqref{modeloutbreak} using the methodology of Latin Hypercube Sampling (LHS) and partial rank correlation coefficients (PRCCs) \cite{MAR2008} to investigate and recognize parameters accountable for most affecting the basic reproductive number as well as weekly new and cumulative cases of infection. The model parameters are randomly sampled using uniform distribution to execute the global sensitivity analysis. Then utilizing the baseline values from Table \ref{fitval}, the total sample size is set to 1000 simulations per LHS run. Partial rank correlation coefficients (PRCCs) values could be positive or negative. The parameter values with negative PRCCs signify that the processes described by such parameters can potentially control the epidemic when increased. On the other hand, the parameters with positive PRCCs specify that the processes described by such parameters can potentially make the epidemic worse if enhanced. PRCCs and their P-values for weekly and cumulative cases of infection are given in Figure \ref{prccWandC}. PRCC results illustrate that weekly and cumulative cases of infection are most sensitive to transmission rate $(\beta)$, testing rate $(\omega)$ and disease-induced death rate $(\delta_1)$ of infectives in compartment $I$. As observed from Figure \ref{prccWandC}(b) and (d), $\beta$, $\omega$, $\sigma$, $\delta_1$ and $\delta_2$ has significant impact on the output of weekly cases (P-value $< 0.05$) and $\beta$, $\omega$, $\sigma$, $\delta_1$, $\rho$ and $\delta_3$ has significant impact on the output of cumulative cases (P-value $< 0.05$). Respective PRCCs and P-values are also given in Table \ref{tblprcandp}.  

\begin{table}[H]
	\caption{PRCC and P-values for cumulative and weekly new cases.}\label{tblprcandp}
	\begin{tabular}{cp{3cm}cp{3cm}c}
		\hline
		Parameters	& PRCC values for cumulative cases & P-values & PRCC values for weekly new cases & P-values \\ 
		\hline
		$\beta$	& 0.8308 & 0 & 0.8277 & 0 \\ 
		$\omega$ & 0.3881 & 0 & 0.3544 & 0 \\ 
		$\delta_1$	& -0.8658 & 0 & -0.8593 & 0 \\ 
		$\sigma$	& -0.1799 & 0 & -0.1004 & 0.0015 \\ 
		$\delta_2$	& -0.0001 & 0.9965 & 0.0229 & 0.4718 \\ 
		$\delta_3$	& -0.2530 & 0 & -0.0041 & 0.8974 \\ 
		$\rho$	& 0.2586 & 0 & -0.0191 & 0.5472 \\
		$\theta$  & -0.6985 & 0 & -0.6860 & 0 \\
		\hline
	\end{tabular} 
\end{table}

Sensitivity analysis of $R_{01}$ of the model \eqref{modeloutbreak} based on LHS shows that the parameters $\beta$, $N_0$ and $\delta_1$ are highly sensitive, have PRCC values less than -0.5 or greater than 0.5 (see Figure \ref{prccR0}(a)). It could also be observed that the basic reproduction number is directly proportional to the transmission rate $(\beta)$ and total population size $(N_0)$; hence these parameters affect most. 
Figure \ref{prccR0}(b) represents the five-number summary (maximum value, upper quartile, median, lower quartile, and minimum value) for $R_{01}$. The median is around 1.2039, the lower quartile is around 0.8519, and the upper quartile is around 1.6733. The minimum and maximum values of $R_{01}$ are 0.3130 and 4.067, respectively. Figure \ref{prccR0}(c) depicts the histogram which tells the uncertainty in $R_{01}.$ The box plot and histogram plot are also produced via Latin hypercube sampling with sample sizes of 1000. Furthermore, scatter plots of the basic reproduction number $(R_{01})$ against the parameter $\beta$, $N_0$ and $\delta_1$ are also shown in Figure \ref{sct} for Latin hypercube sampling with sample sizes 1000. These scatter plots demonstrate the linear relationships (monotonicity) between outcomes of the basic reproduction numbers and input parameters.  

\begin{figure}[H]
	(a)\includegraphics[scale=0.5]{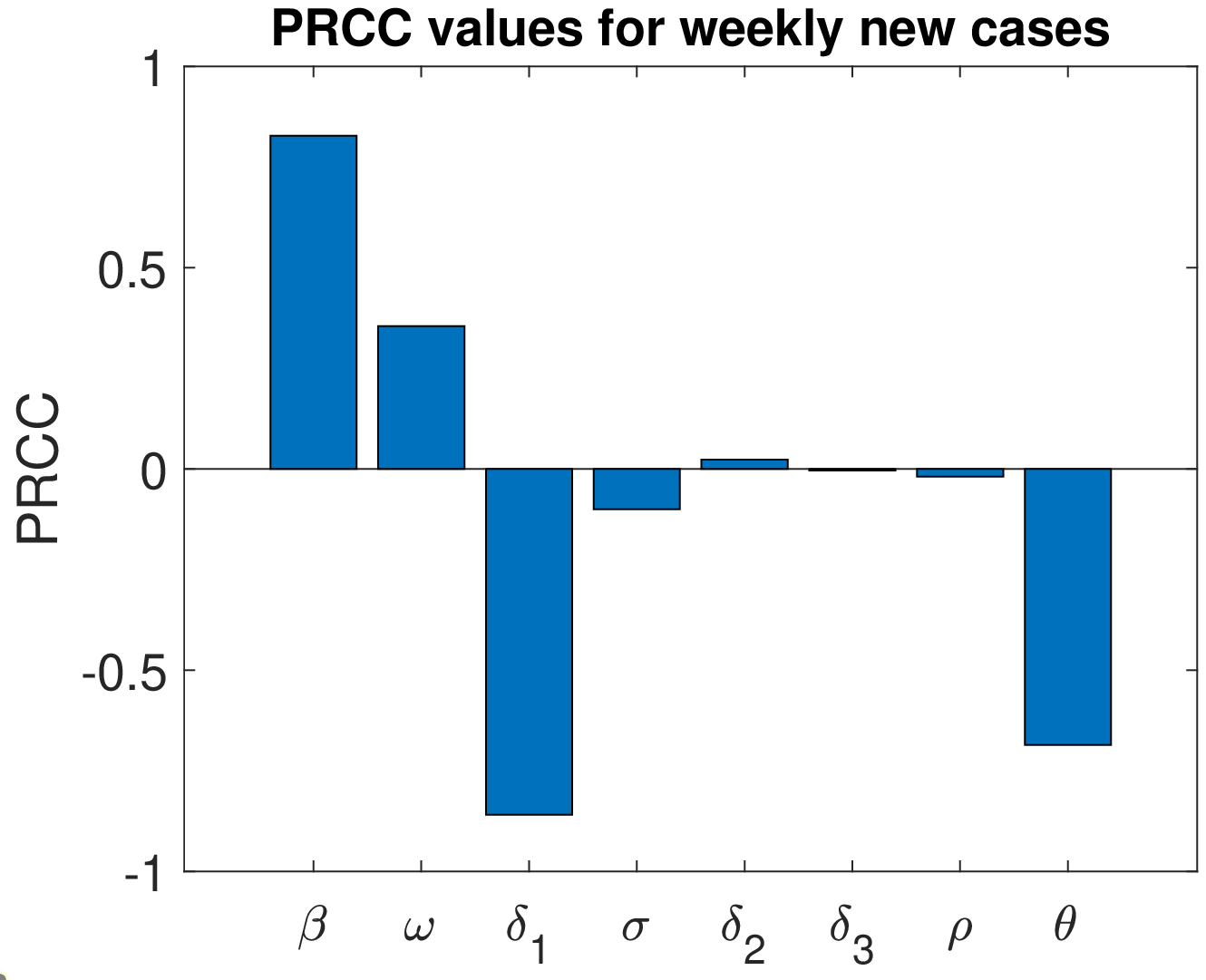}
	(b)\includegraphics[scale=0.5]{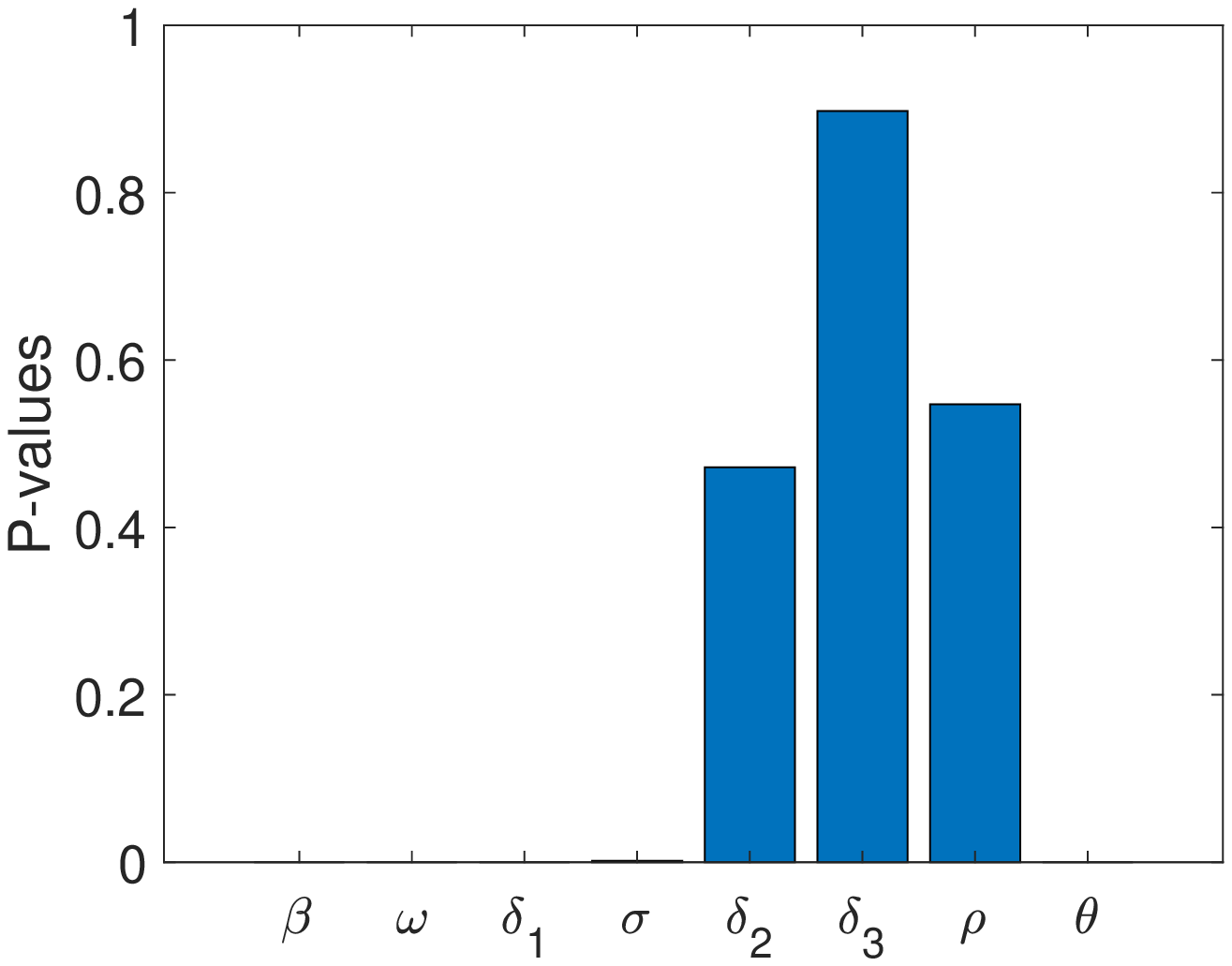}
	(c)\includegraphics[scale=0.5]{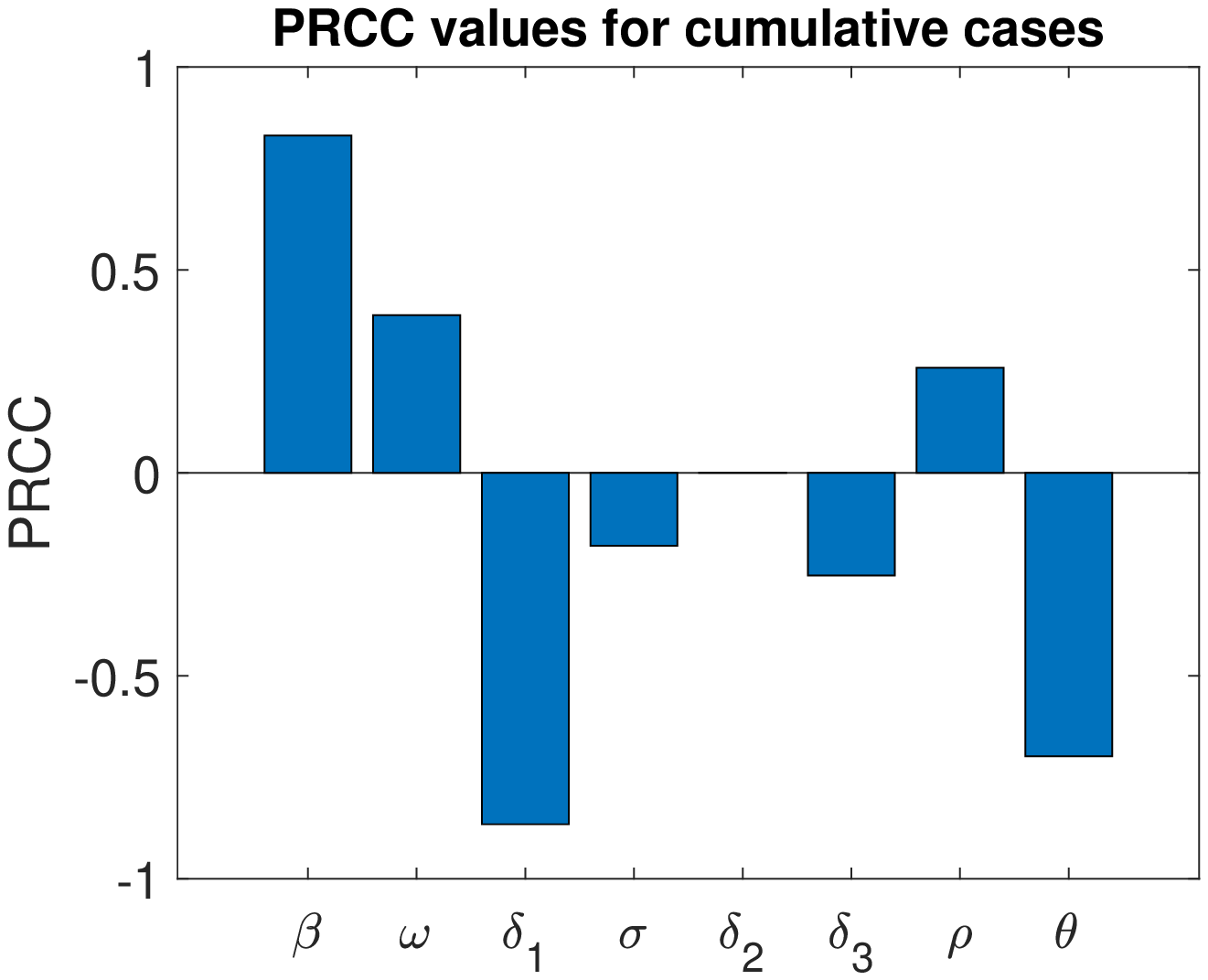}
	(d)\includegraphics[scale=0.5]{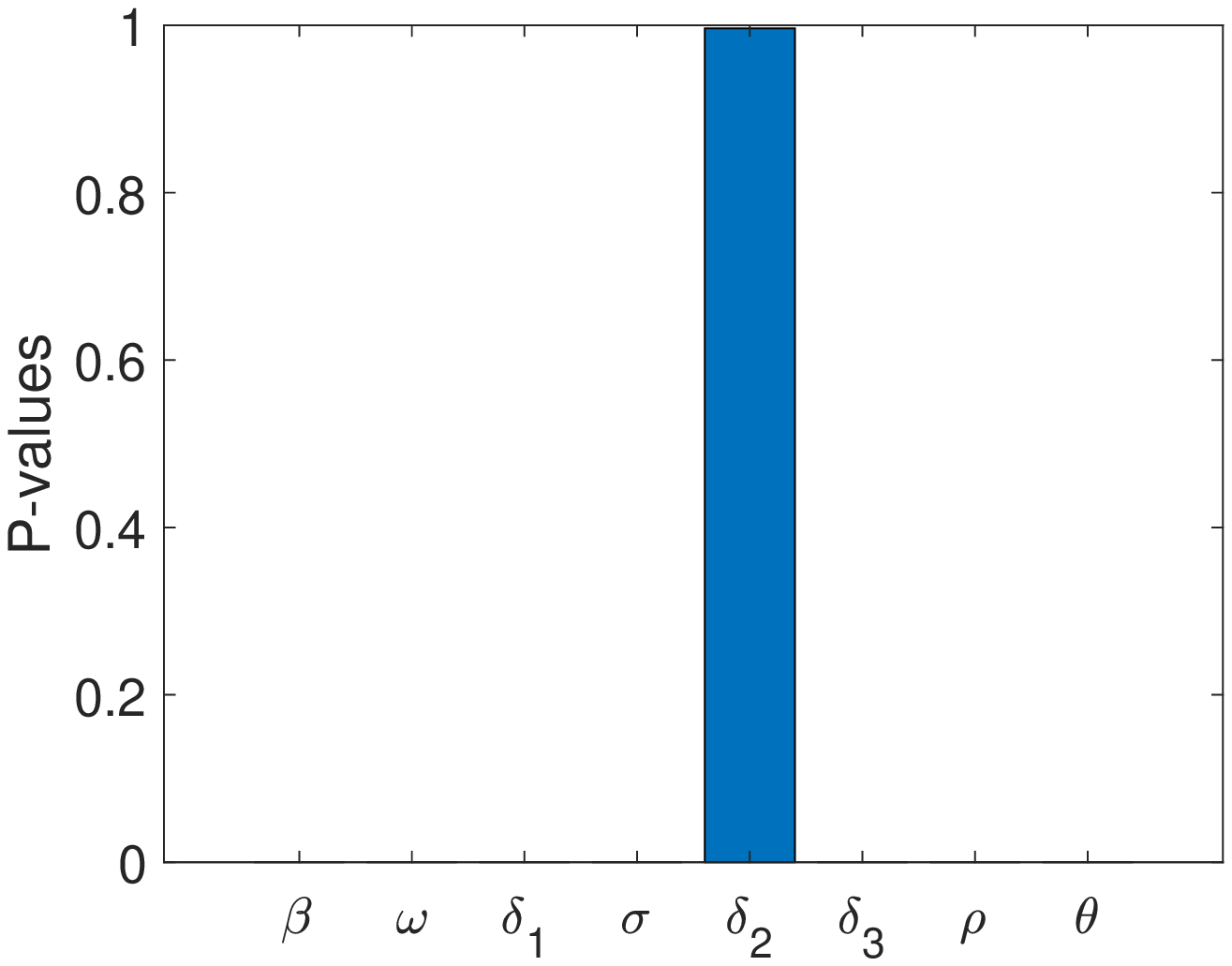}
	\caption{PRCC plots and P-values for weekly new and cumulative cases.}\label{prccWandC}
\end{figure}

\begin{figure}[H]
	(a)\includegraphics[scale=0.5]{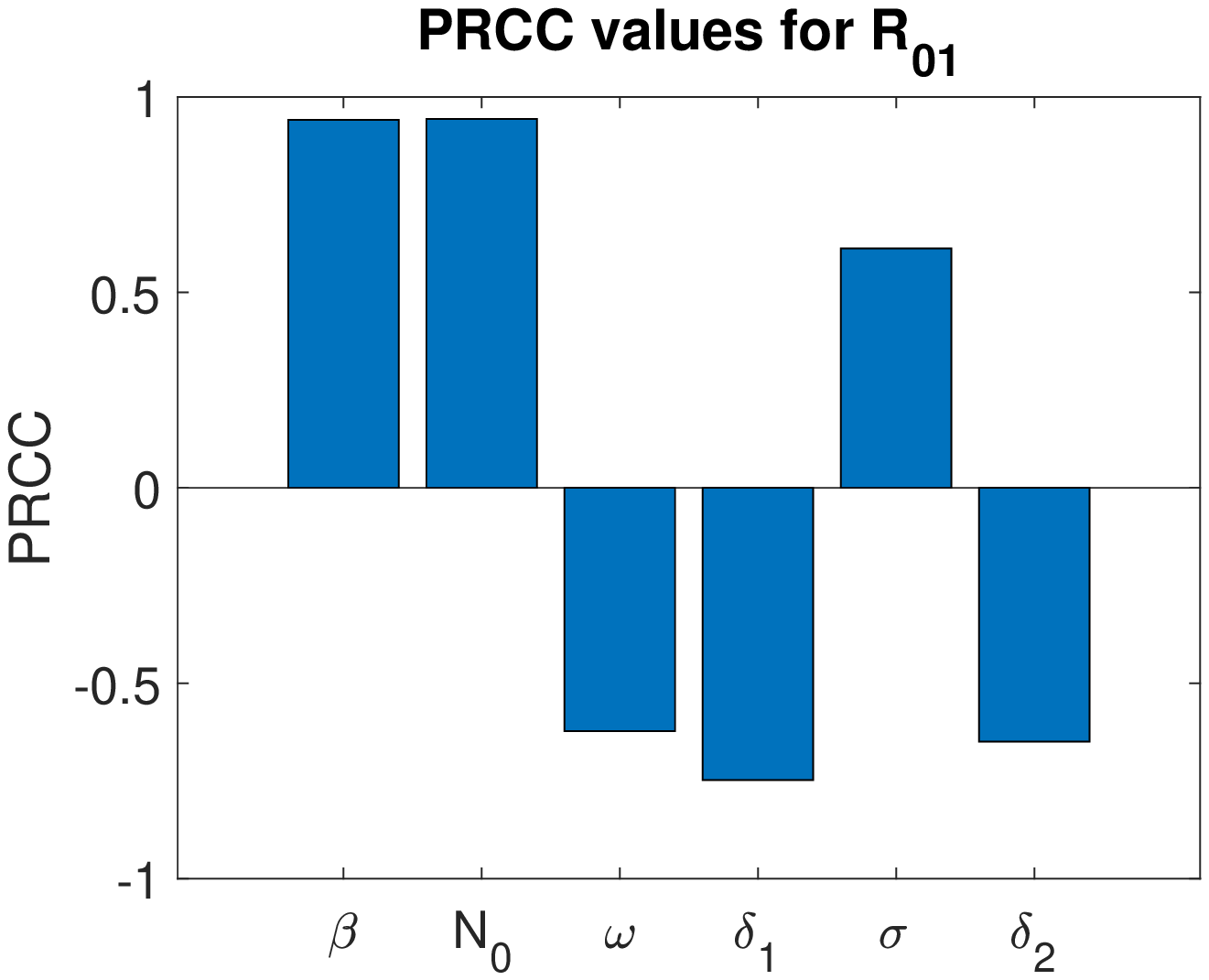}
	(b)\includegraphics[scale=0.5]{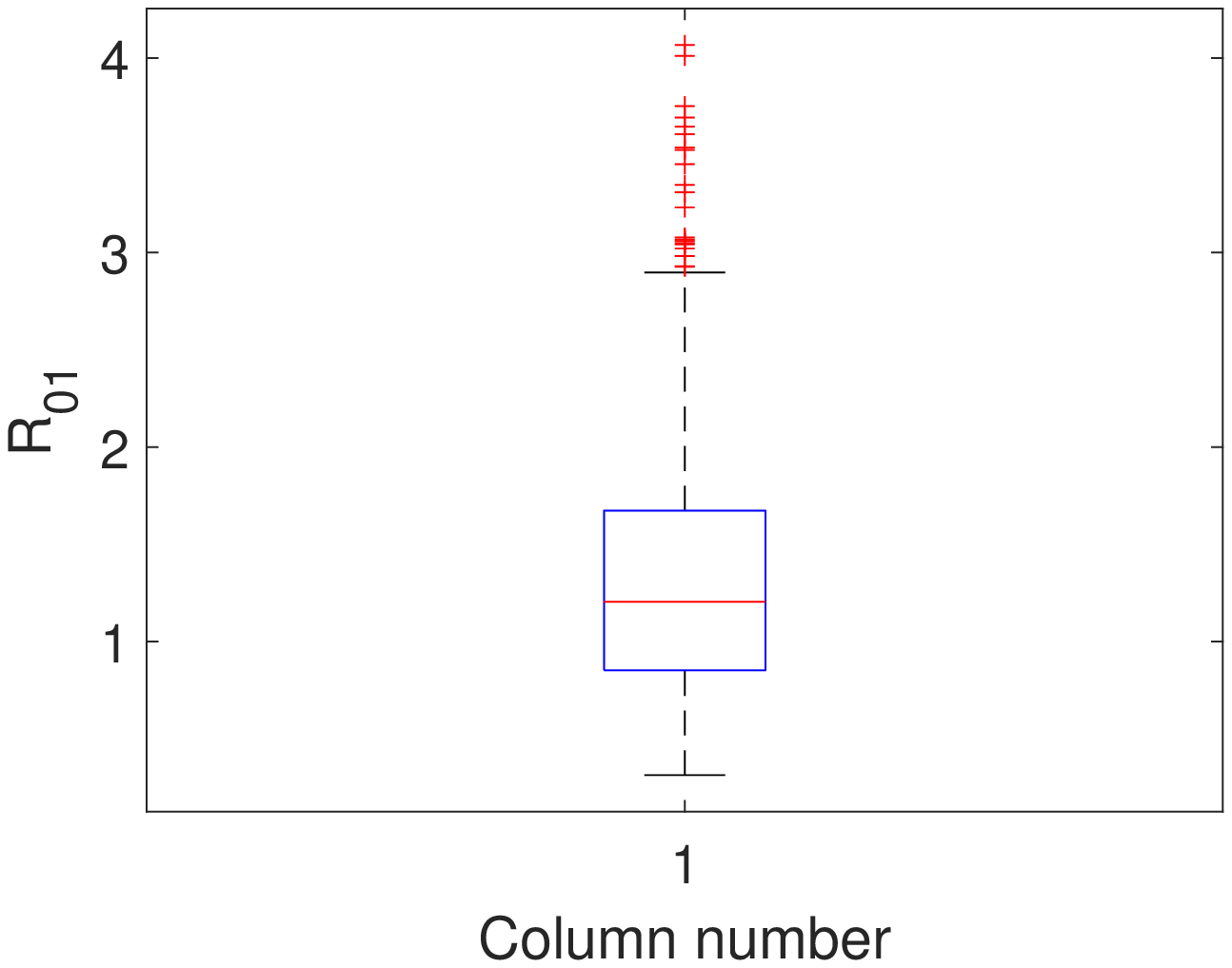}
	(c)\includegraphics[scale=0.5]{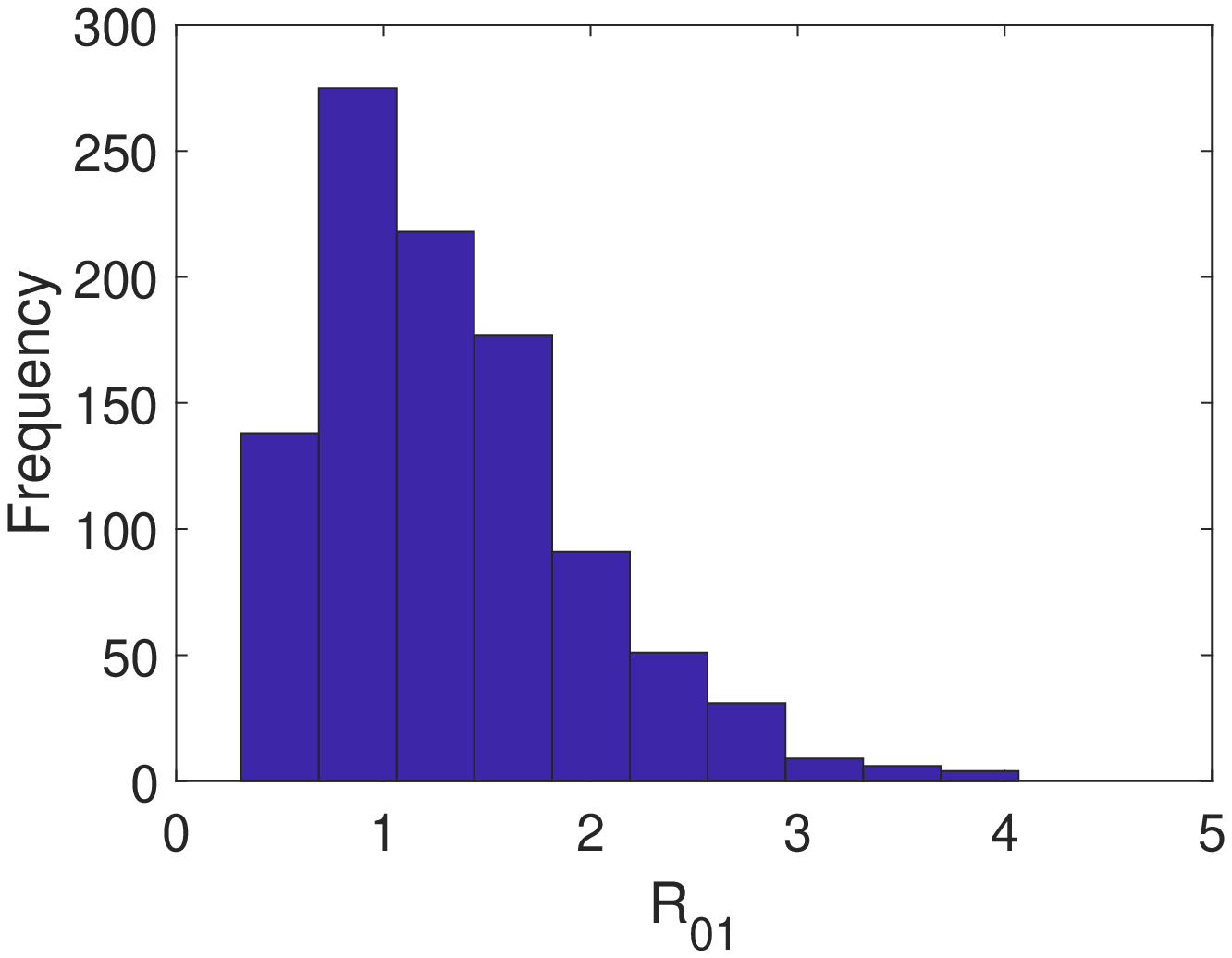}
	\caption{(a) PRCC plots for the basic reproduction number $(R_{01})$. (b) Box plot for $R_{01}.$ (c) Histogram of the distribution of $R_{01}.$}\label{prccR0}
\end{figure}

\begin{table}[H]
	\caption{PRCC values for the basic reproduction number $(R_{01})$.}
	\begin{tabular}{cc}
		\hline
		Parameters	& PRCC values for $R_{01}$ \\
		\hline
		$\beta$	& 0.9415 \\ 
		$N_0$	& 0.9441 \\ 
		$\omega$	& -0.6223 \\ 
		$\delta_1$	& -0.7475 \\ 
		$\sigma$	& 0.6124 \\ 
		$\delta_2$	& -0.6493 \\
		\hline
	\end{tabular} 
\end{table}

\begin{figure}[H]
	(a)\includegraphics[scale=0.5]{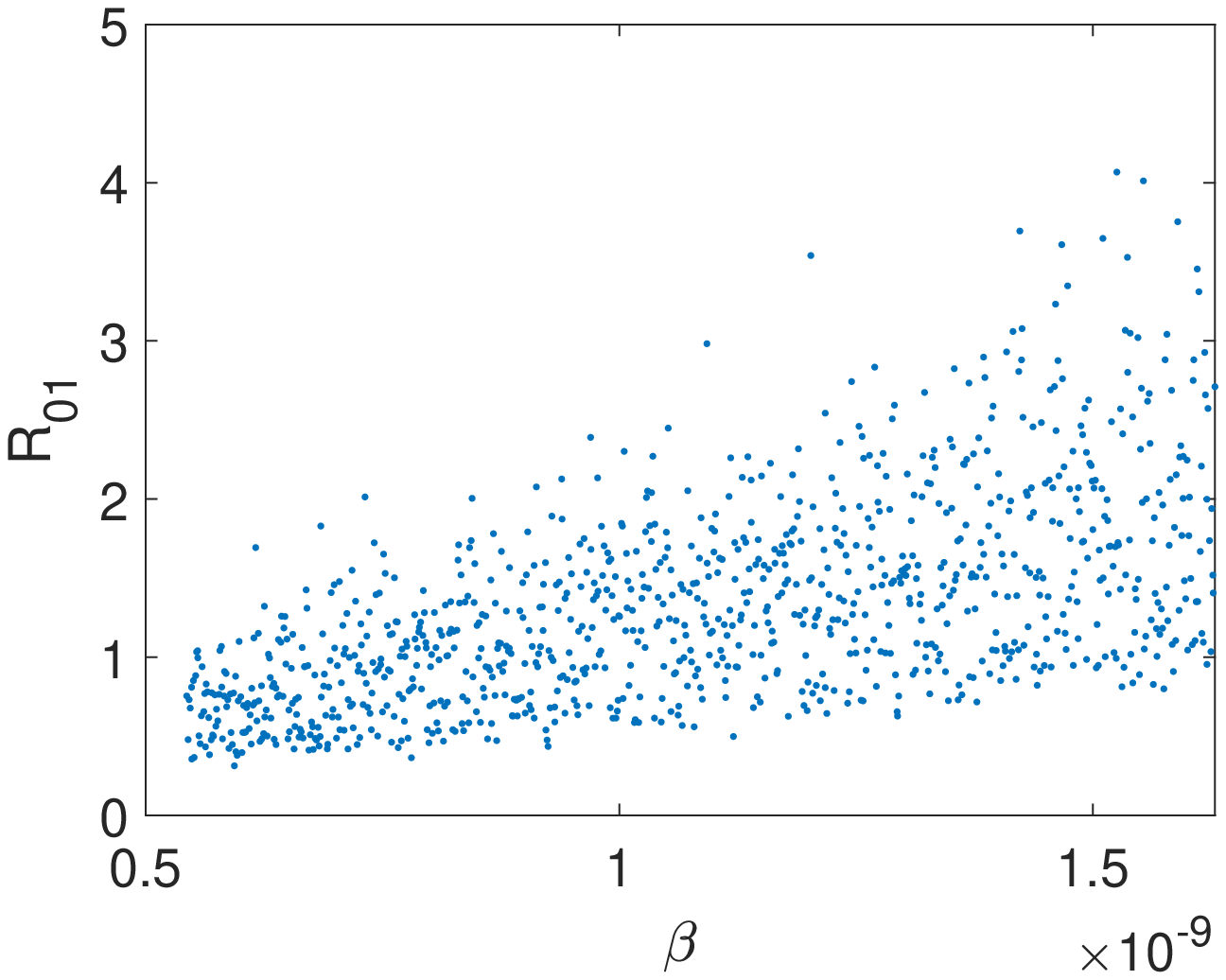}
	(b)\includegraphics[scale=0.5]{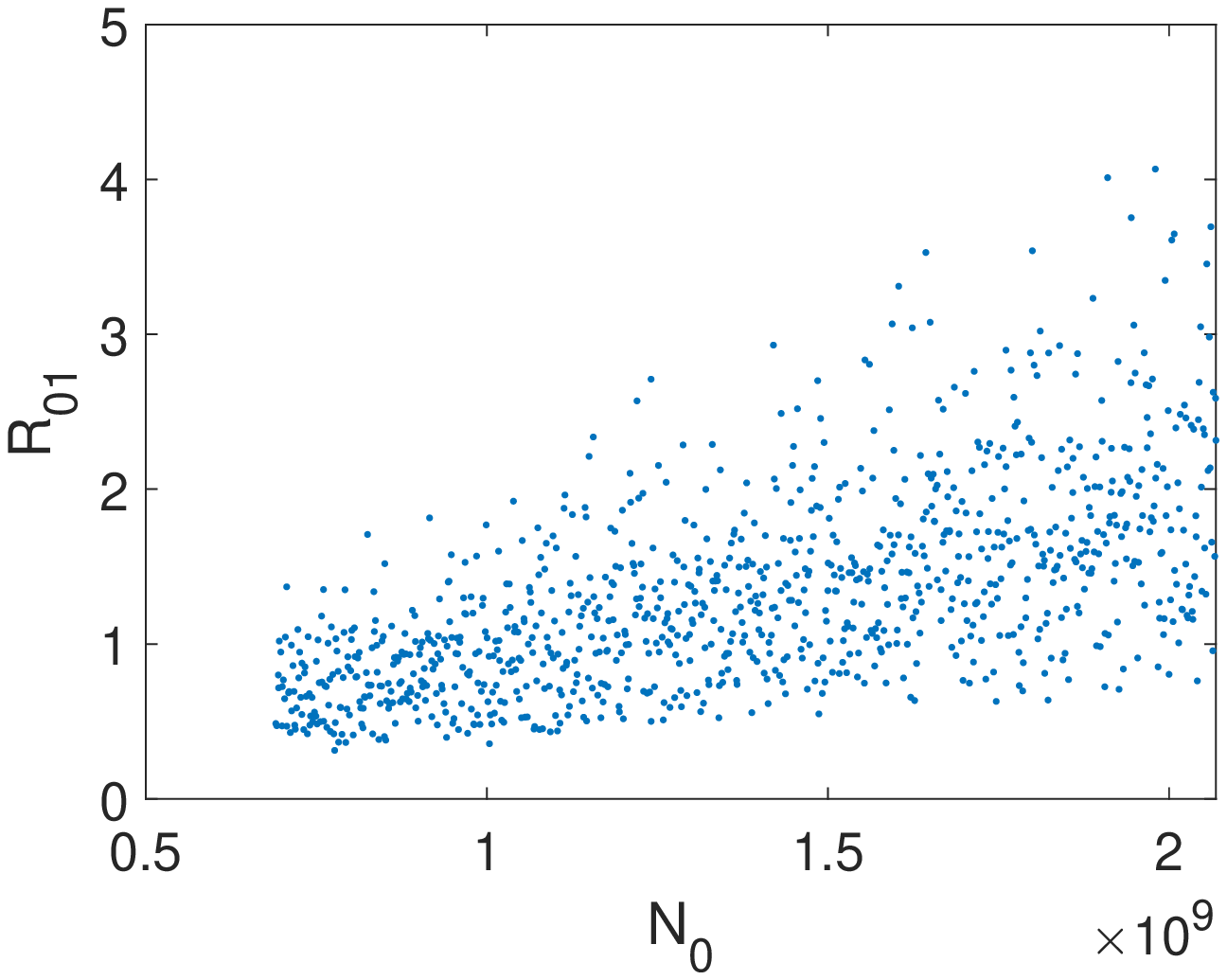}
	(c)\includegraphics[scale=0.5]{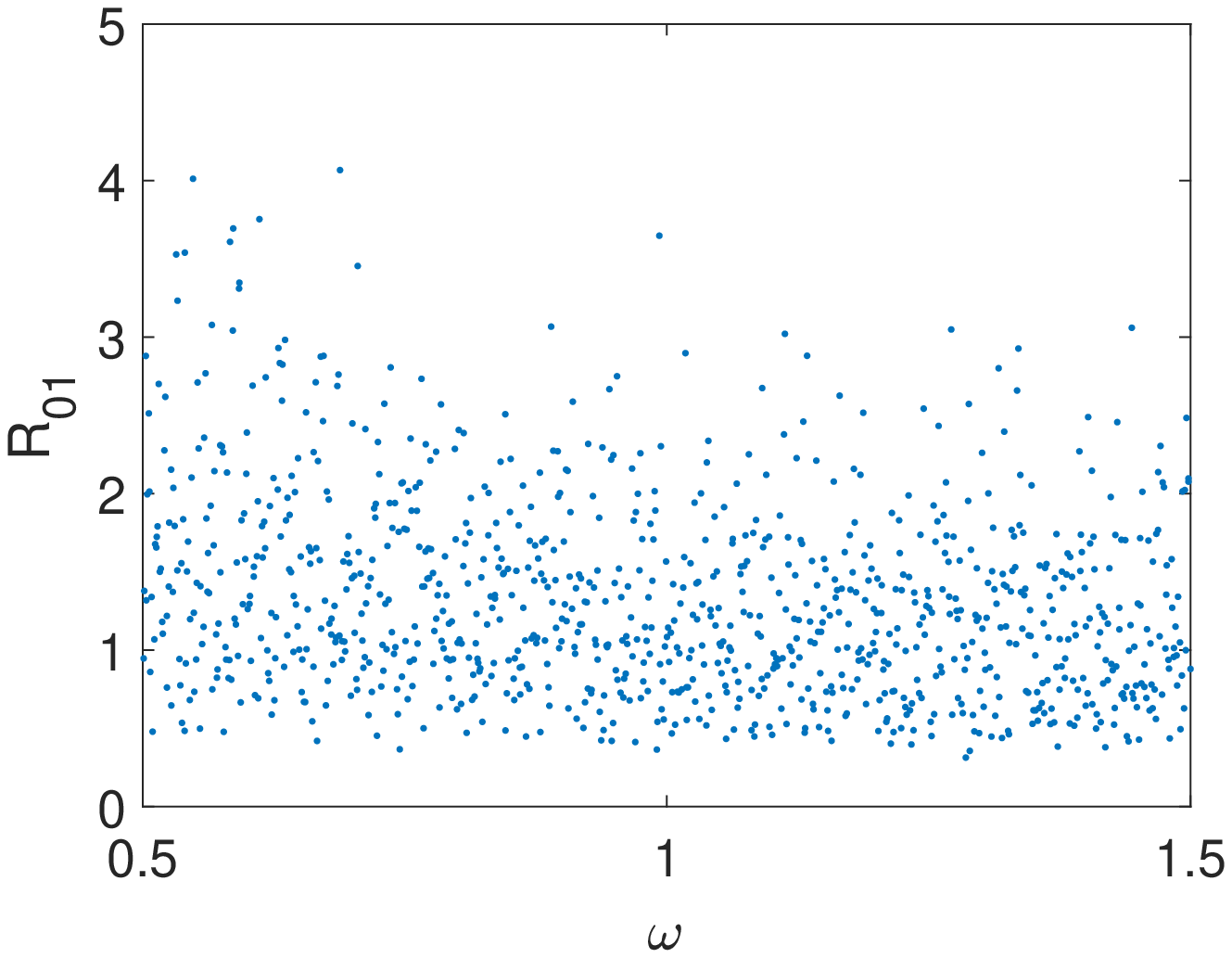}
	(d)\includegraphics[scale=0.5]{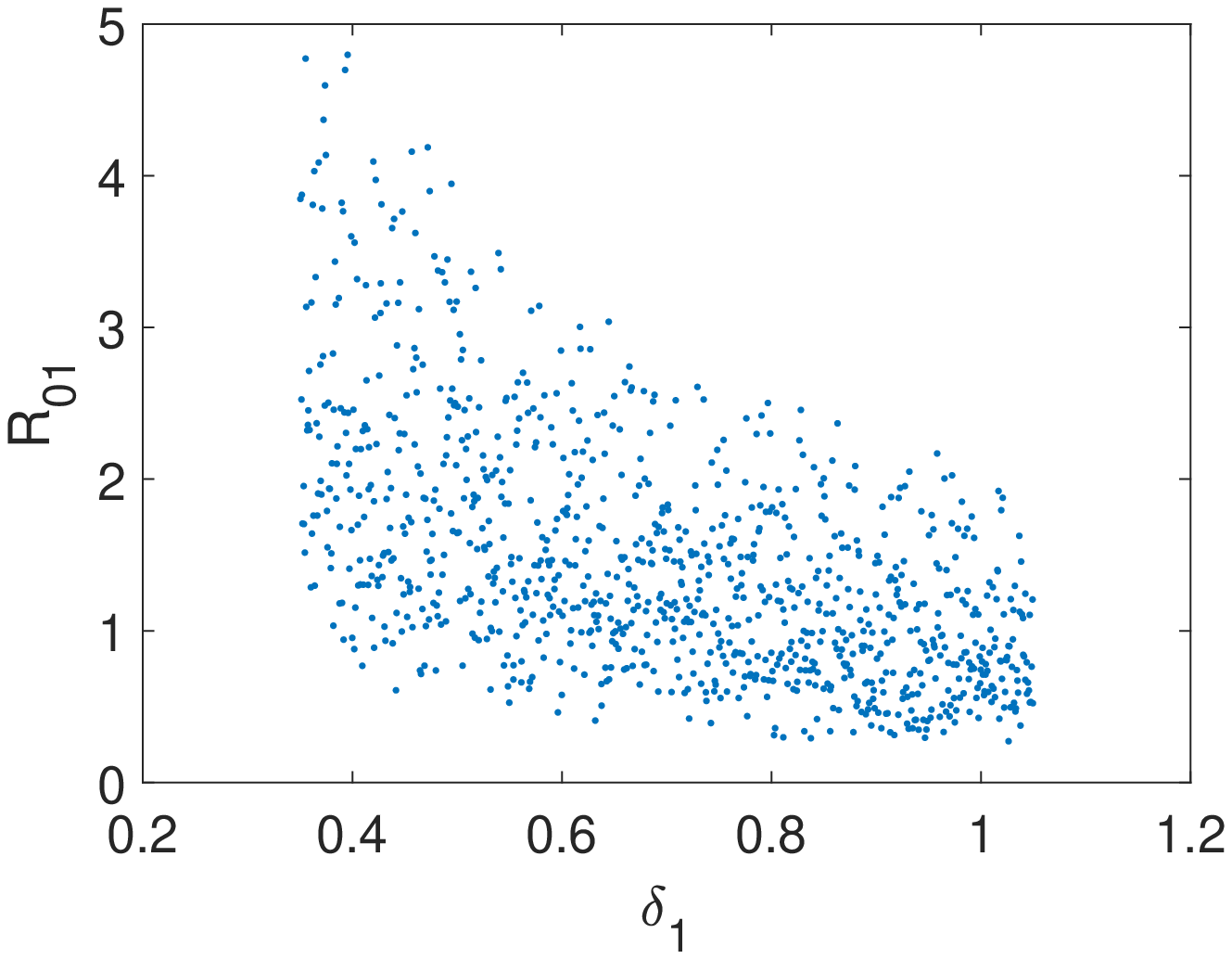}
	(e)\includegraphics[scale=0.5]{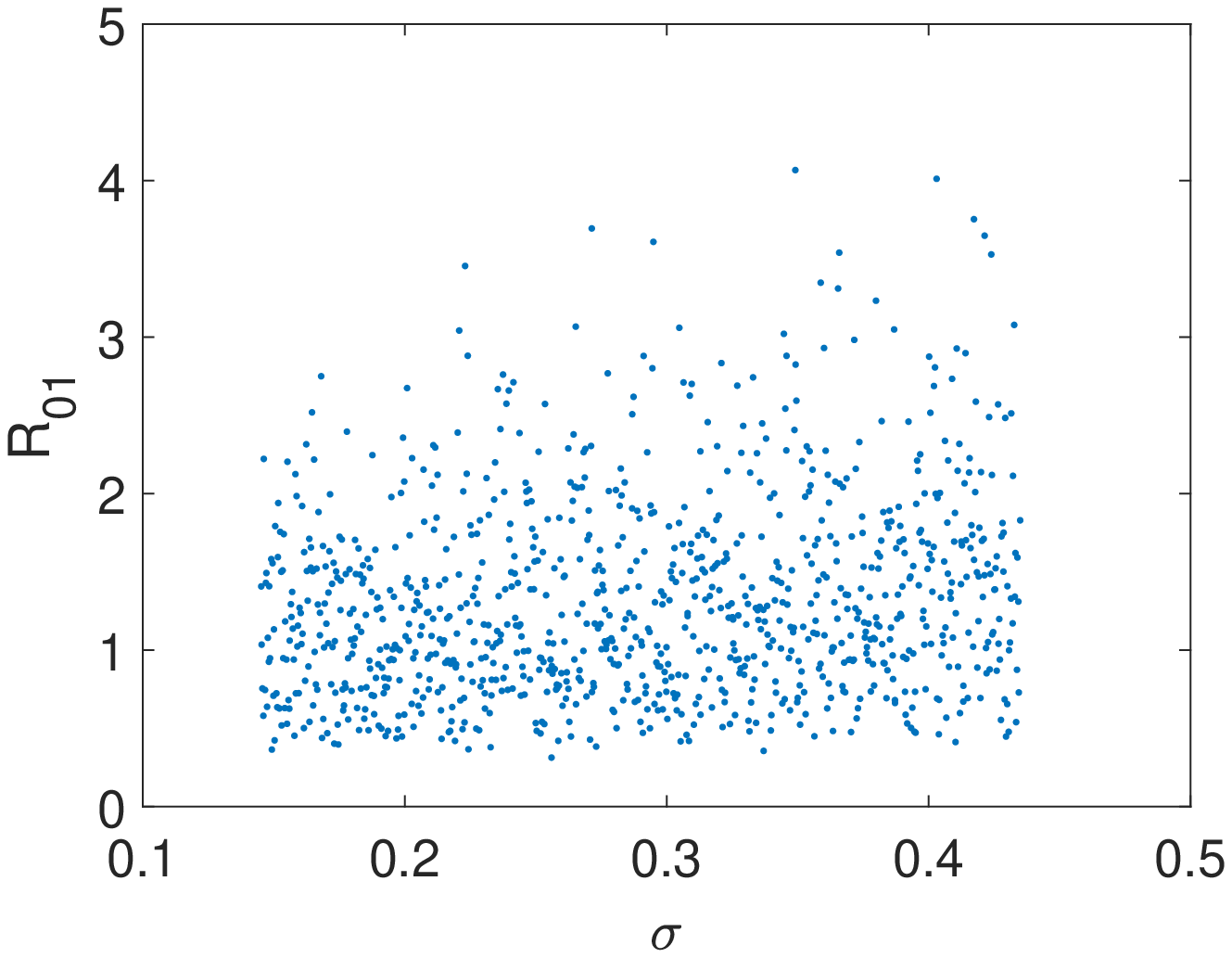}
	(f)\includegraphics[scale=0.5]{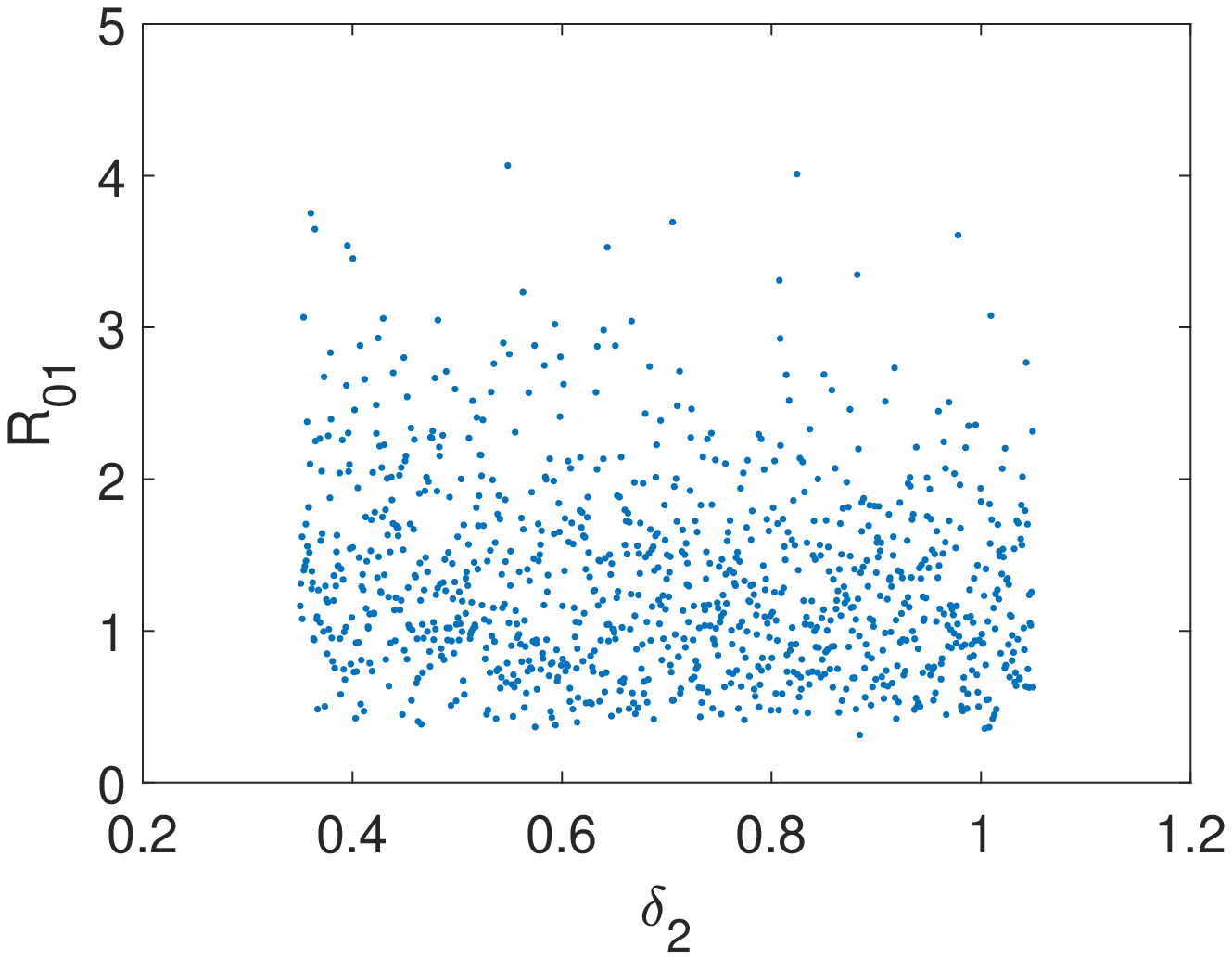}
	\caption{Scatter plots of the basic reproduction number $(R_{01})$ with respect to the parameters ($\beta, N_0, \omega, \delta_1, \sigma$ and $\delta_2$).}\label{sct}
\end{figure}

\subsection{Impact of parameters on the outbreak's peak, cumulative cases, basic reproduction number $(R_{01})$ and final size}\label{5.5} 
This section assesses the impact of testing rate $(\omega)$, testing efficacy $(1-\sigma)$ and transmission rate $(\beta)$ on the weekly new cases at the peak, cumulative cases, and the basic reproduction number $(R_{01})$. We simulate model \eqref{modeloutbreak}, using the baseline parameter values given in Table \ref{fitval} and \ref{estmdini}. In this work, we mainly want to emphasize the impact of testing rate and efficacy, but the transmission rate is a very important parameter in the disease epidemic. The transmission rate could be controlled by various control strategies, such as vaccination, using face masks, social distancing, public awareness, etc. Thus it is necessary to assess the impact of different values of $\beta$ on the model outcomes. \\\\
\textbf{Assessment of the impact of the testing rate:} 
First, we investigate the impact of $\omega$ on the weekly new cases of the early outbreak of COVID-19 in India shown in Figure \ref{impctomega}(a). Hence, we utilize five different values for $\omega$, where $\omega = 1$ is the baseline value as given in Table \ref{fitval}, while other four values are $10\%$, $20\%$, $30\%$, and $40\%$ increment in the baseline value. From Figure \ref{impctomega}(a), it can be observed that increasing the testing rate can dramatically decrease the number of infected individuals at the outbreak's peak and delay the outbreak's peak. It is because the confirmed positive tested individuals move to the isolation/quarantine compartment and do not participate in the disease transmission. The blue curve represents the weekly new cases for the baselines values given in Table \ref{fitval}. If the testing rate is increased by 10\% from its baseline value, the weekly new cases at the peak are decreased by 19.93\% and delayed the peak by four weeks (red curve). Further, If the testing rate is increased by 20\% from its baseline value, the weekly new cases at the peak are decreased by 37.63\% and delayed the peak by eight weeks (yellow curve). Furthermore, If the testing rate is increased by 30\% from its baseline value, the weekly new cases at the peak are decreased by 52.90\% and delayed the peak by fourteen weeks (purple curve). Similar findings are recorded for the cumulative cases represented in Figure  \ref{impctomega}(b). The testing rate also significantly reduces the basic reproduction number ($R_{01}$), represented in Figure \ref{impctomega}(c).\\\\
\noindent \textbf{Assessment of the impact of the testing efficacy:} We assess the impact of testing efficacy ($1-\sigma$) on the weekly new cases, cumulative cases, and $R_{01}$ shown in Figure \ref{impctsigma}. For this we plot the weekly new cases for the testing efficacies 55\%, 60\%, 71\% (baseline value, $1-\sigma = 0.71$), 80\%, and 85\% in Figure \ref{impctsigma}(a). The increasing testing efficacy intensely decreases the newly detected cases and cumulative cases and delays the outbreak's peak. Figure \ref{impctsigma}(a) shows that if the testing efficacy is increased by 12.67\% from its baseline value, i.e., $1-\sigma = 80\%$, the weekly new cases at the peak are decreased by 59.05\% and delayed the peak by 15 weeks (red curve). On the other hand, if the testing efficacy is decreased by 15.49\% from its baseline value, i.e., $1-\sigma = 60\%$, the weekly new cases at the peak are increased by 60.67\%, and the peak appeared eight weeks earlier (purple curve). Further, if the testing efficacy is decreased by 22.53\% from its baseline value, i.e., $1-\sigma = 85\%$, the weekly new cases at the peak are increased by 78.88\%, and the peak appeared by 11 weeks earlier (green curve). Similar outcomes in the cumulative cases are recorded with the increasing and decreasing testing efficacy shown in Figure \ref{impctsigma}(b). If there are many infected individuals in the community, then perfection in the testing results may detect more infected individuals and move them into the isolation/quarantine compartment. The perfection in testing outcomes also substantially reduces the basic reproduction number ($R_{01}$), as shown in Figure \ref{impctsigma}(c).\\\\ 
\noindent \textbf{Assessment of the impact of different values of the transmission rate:} The outcomes of assessment for distinct values of transmission rate $(\beta)$ are depicted in Figure \ref{impctbeta}. Figure \ref{impctbeta}(a) shows the variation in the parameter $\beta$ for weekly new cases, where the yellow curve is for the baseline values of the parameters given in Table \ref{fitval}. We observe that a 10\% increment in the baseline value of $\beta$ increases the weekly new cases at the peak by 97.47\% (purple curve). Additionally, a 20\% increment in the baseline value of $\beta$ could increase the weekly new cases at the peak by 205.89\% (green curve). On the other hand, a 10\% reduction in the baseline value of $\beta$ could reduce the weekly new cases at the peak by 40.23\% (red curve) and delay the peak time by 16 weeks. Akin outcomes are noted for the cumulative cases for the different values of $\beta$ illustrated in \ref{impctbeta}(b). The basic reproduction number $(R_{01})$ also increases with the increasing value of $\beta$ depicted in Figure \ref{impctbeta}(c).

\begin{figure}[H]
	(a)\includegraphics[scale=0.45]{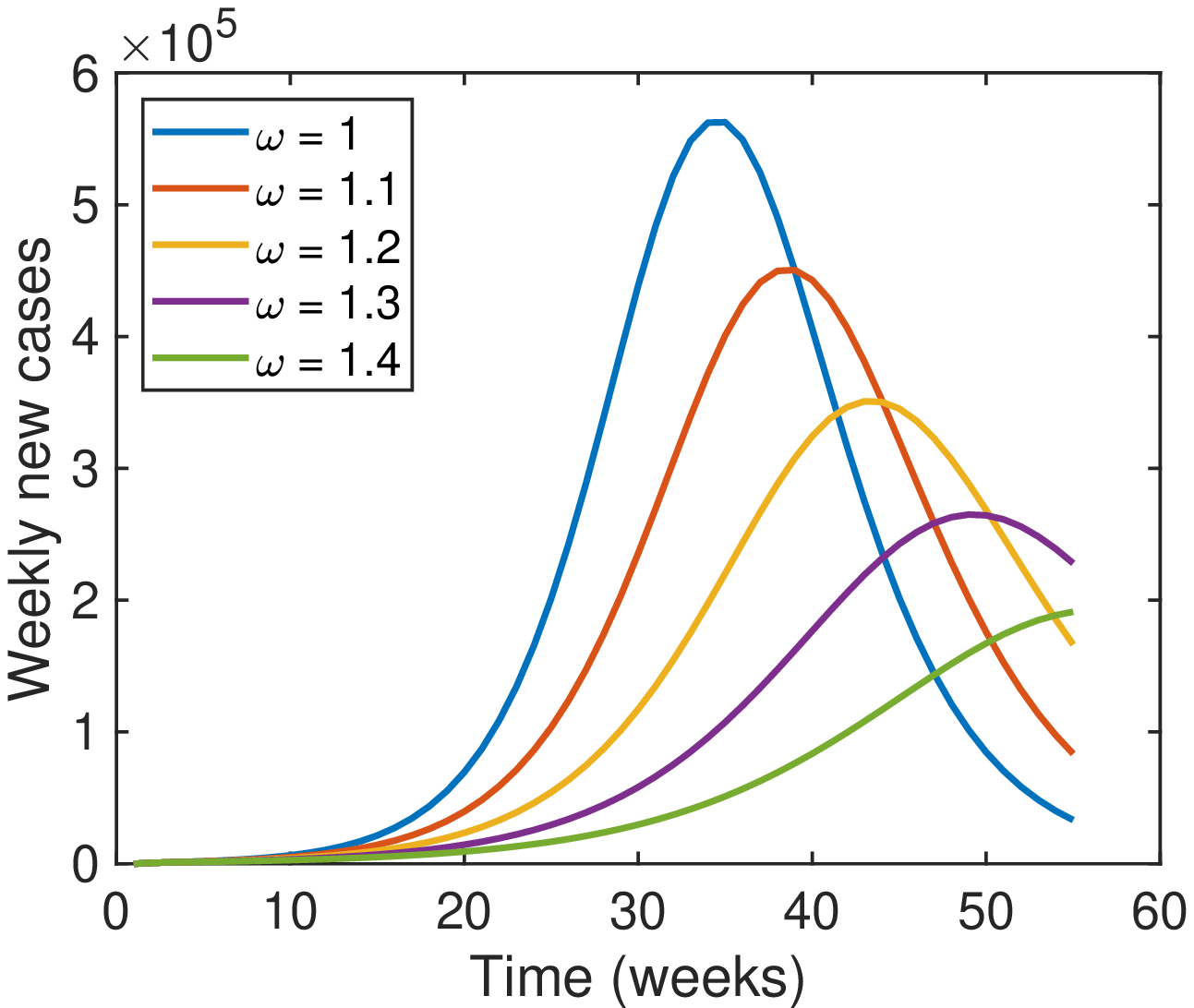}
	(b)\includegraphics[scale=0.45]{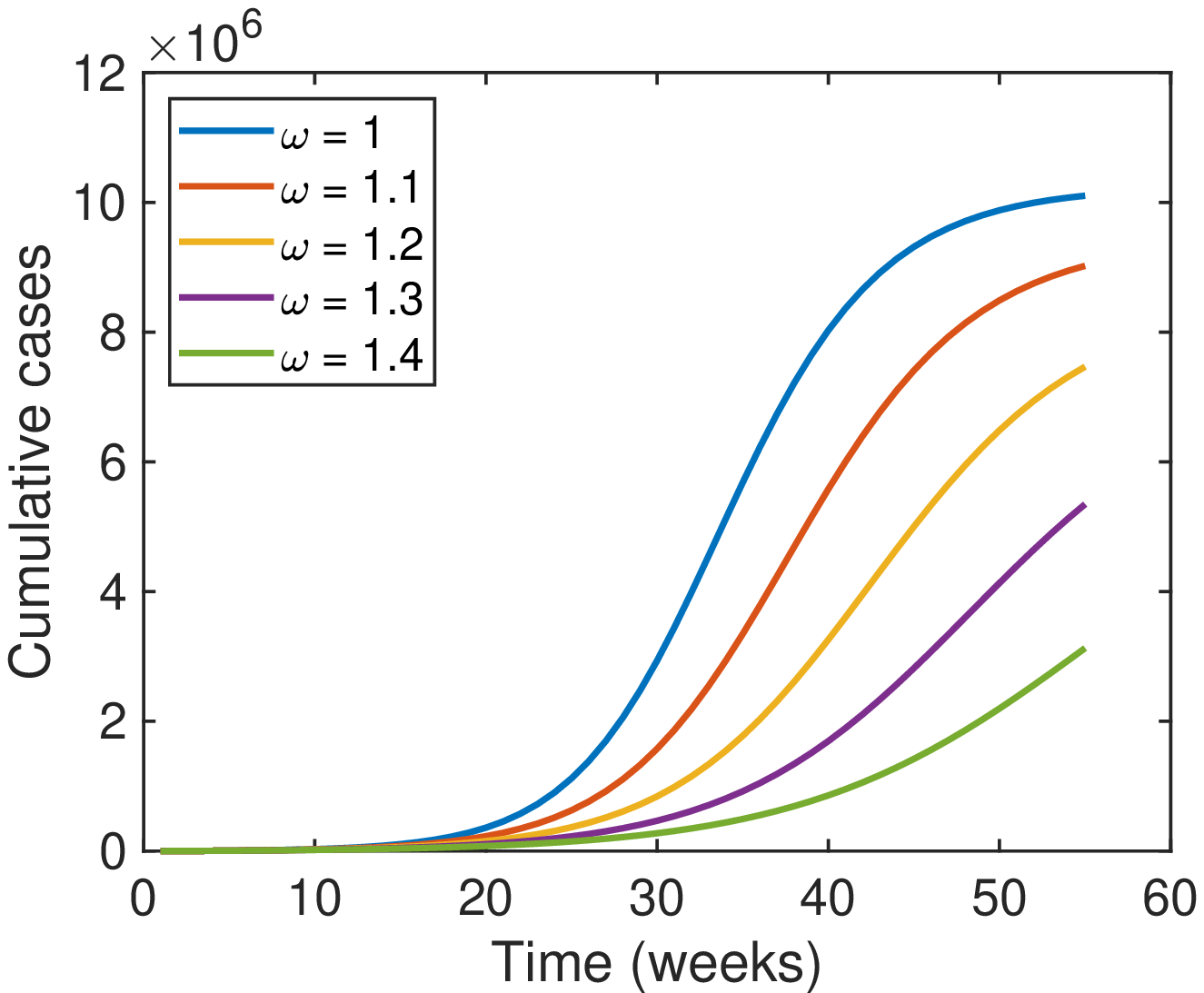}
	(c)\includegraphics[scale=0.45]{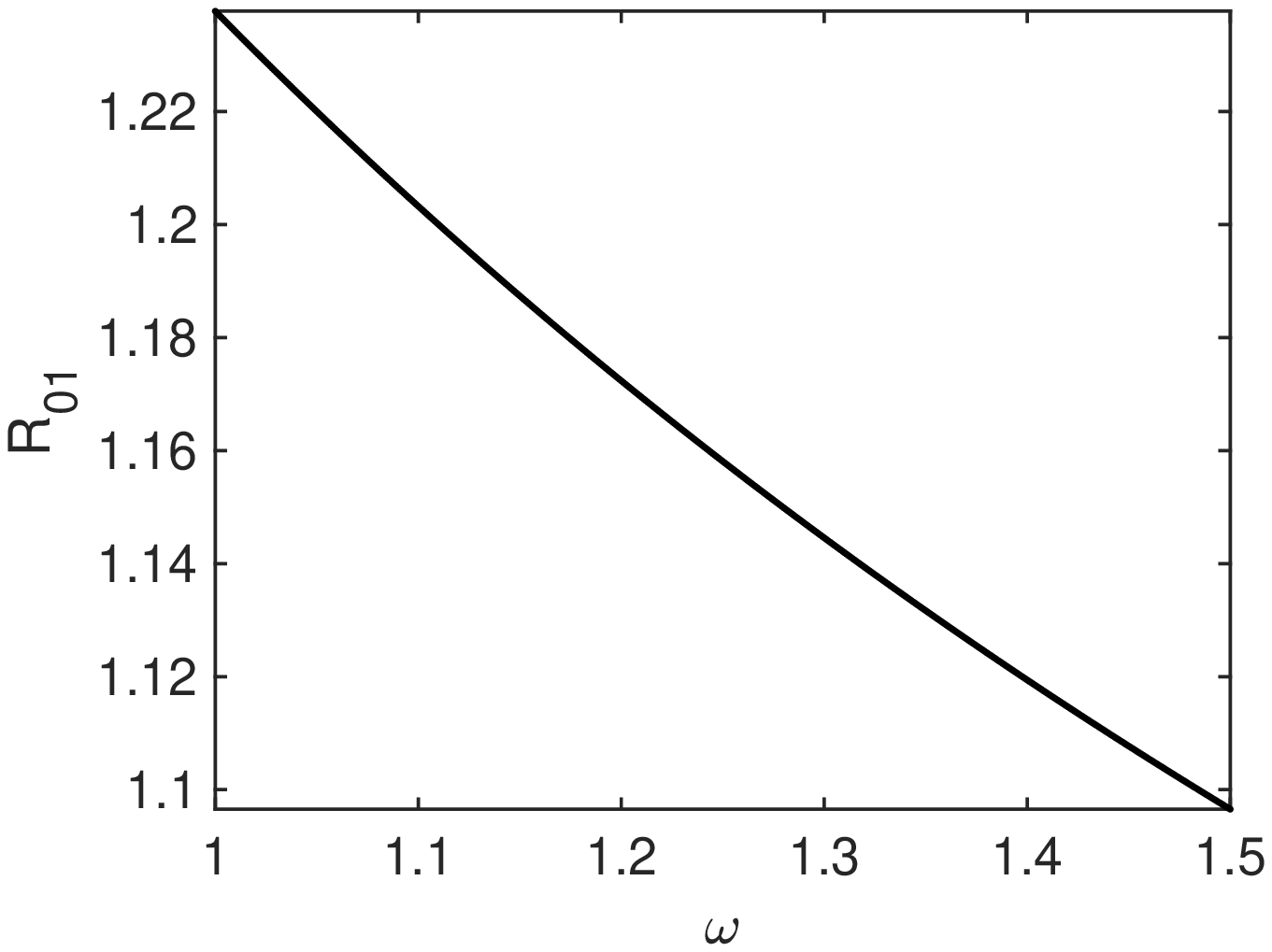}
	\caption{Simulations of the model \eqref{modeloutbreak} exhibiting the impact of testing rate $(\omega)$. (a) Figure depicts weekly new cases for different values of $\omega$. (b) Cumulative cases for different values of $\omega$. (c) The basic reproduction number, $R_{01}$, with respect to $\omega$.}\label{impctomega}
\end{figure}  

\begin{figure}[H]
	(a)\includegraphics[scale=0.45]{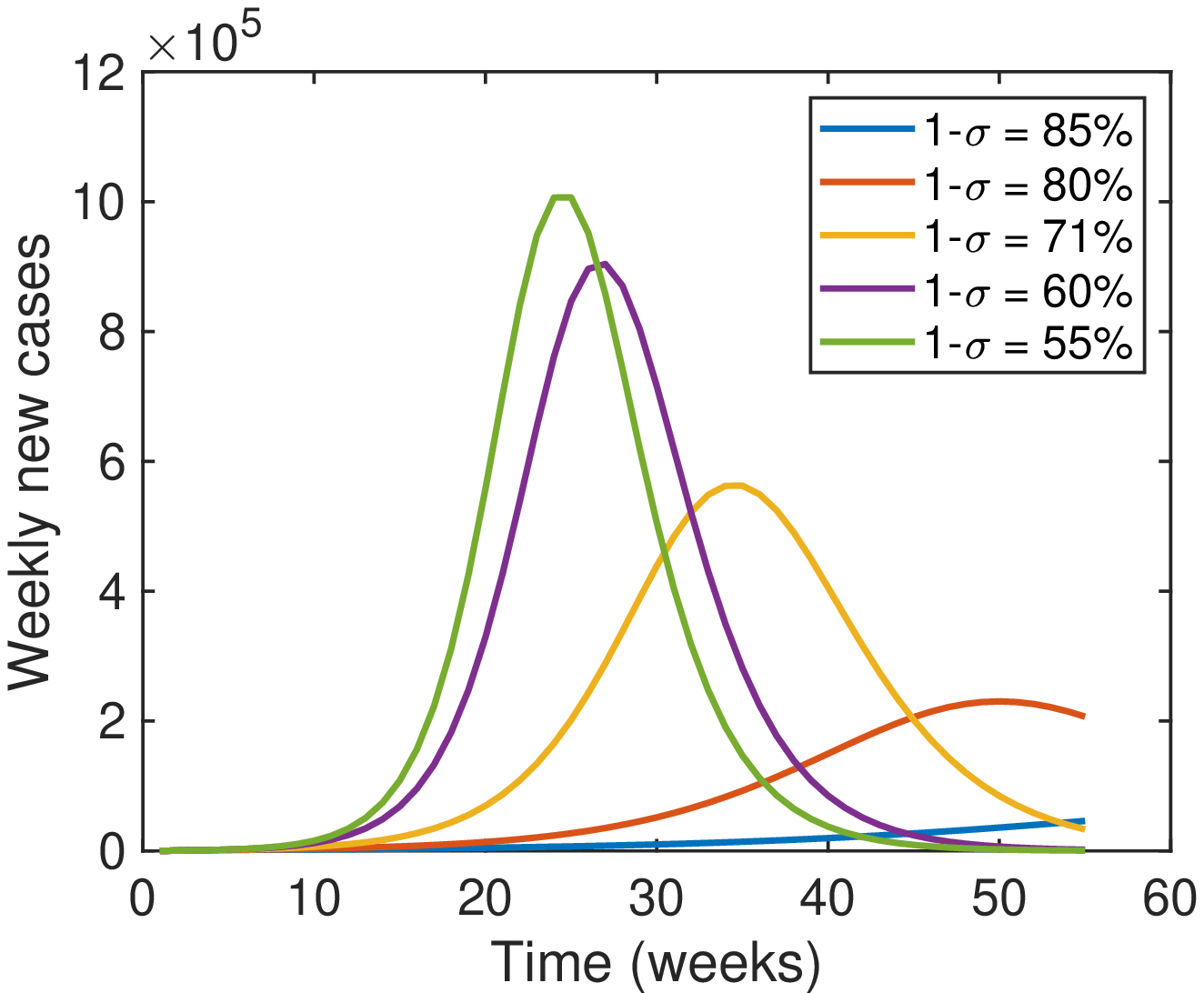}
	(b)\includegraphics[scale=0.45]{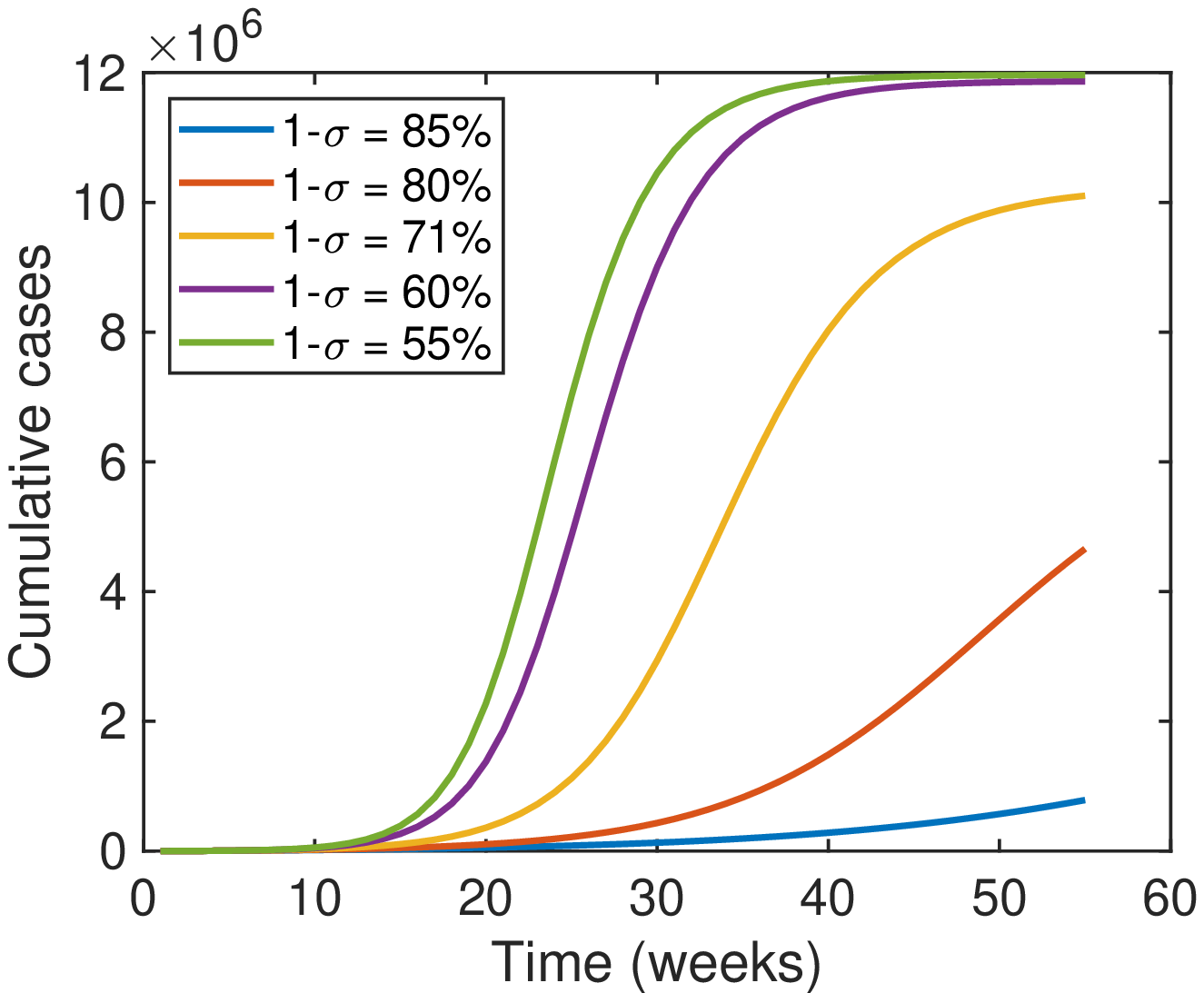}
	(c)\includegraphics[scale=0.45]{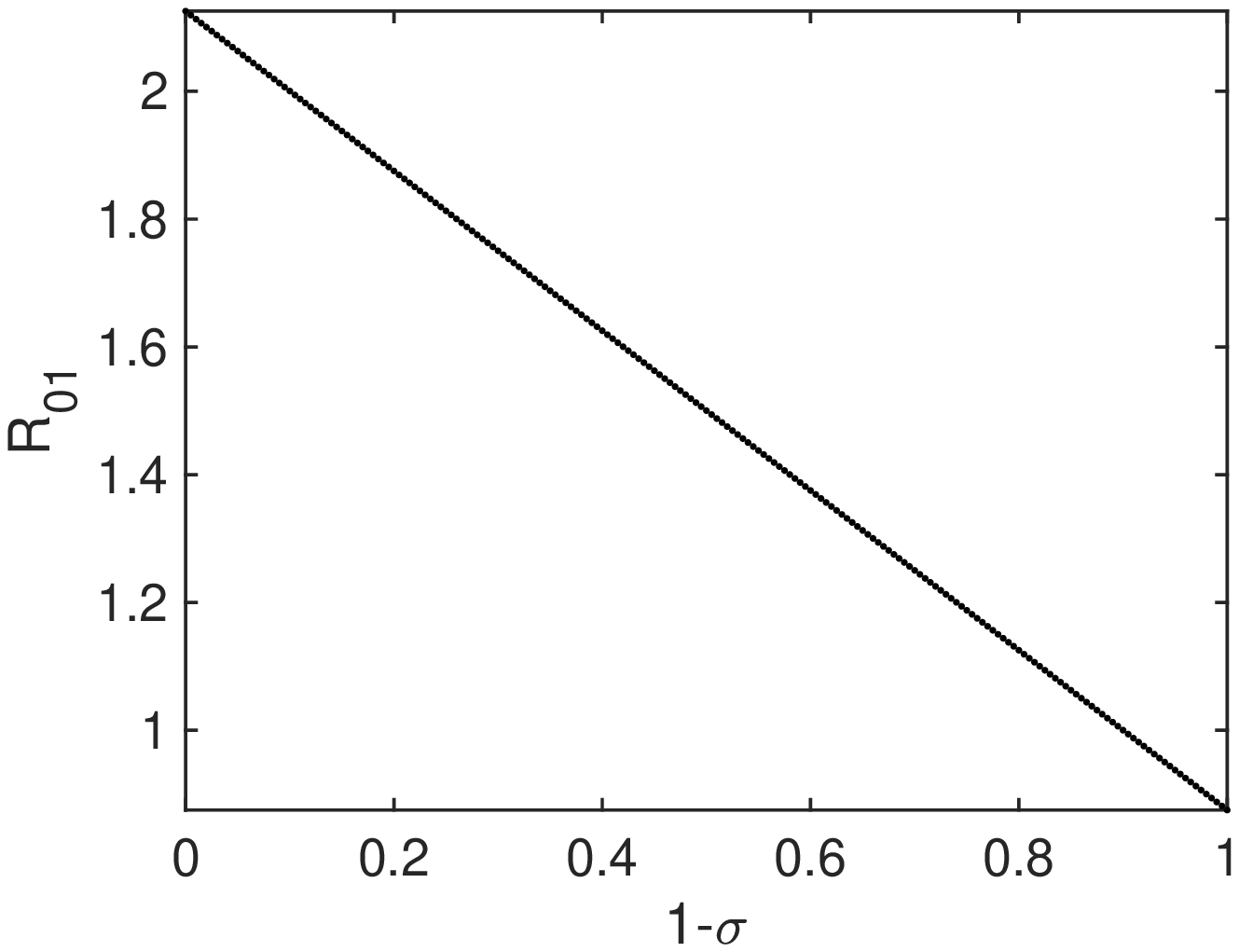}
	\caption{Simulations of the model \eqref{modeloutbreak} showing the impact of testing efficacy $(1-\sigma)$. (a) Weekly new cases for different values of $1-\sigma$. (b) Cumulative cases for different values of $1-\sigma$. (c) $R_{01}$ with respect to $1-\sigma$.}\label{impctsigma}
\end{figure}

\begin{figure}[H]
	(a)\includegraphics[scale=0.45]{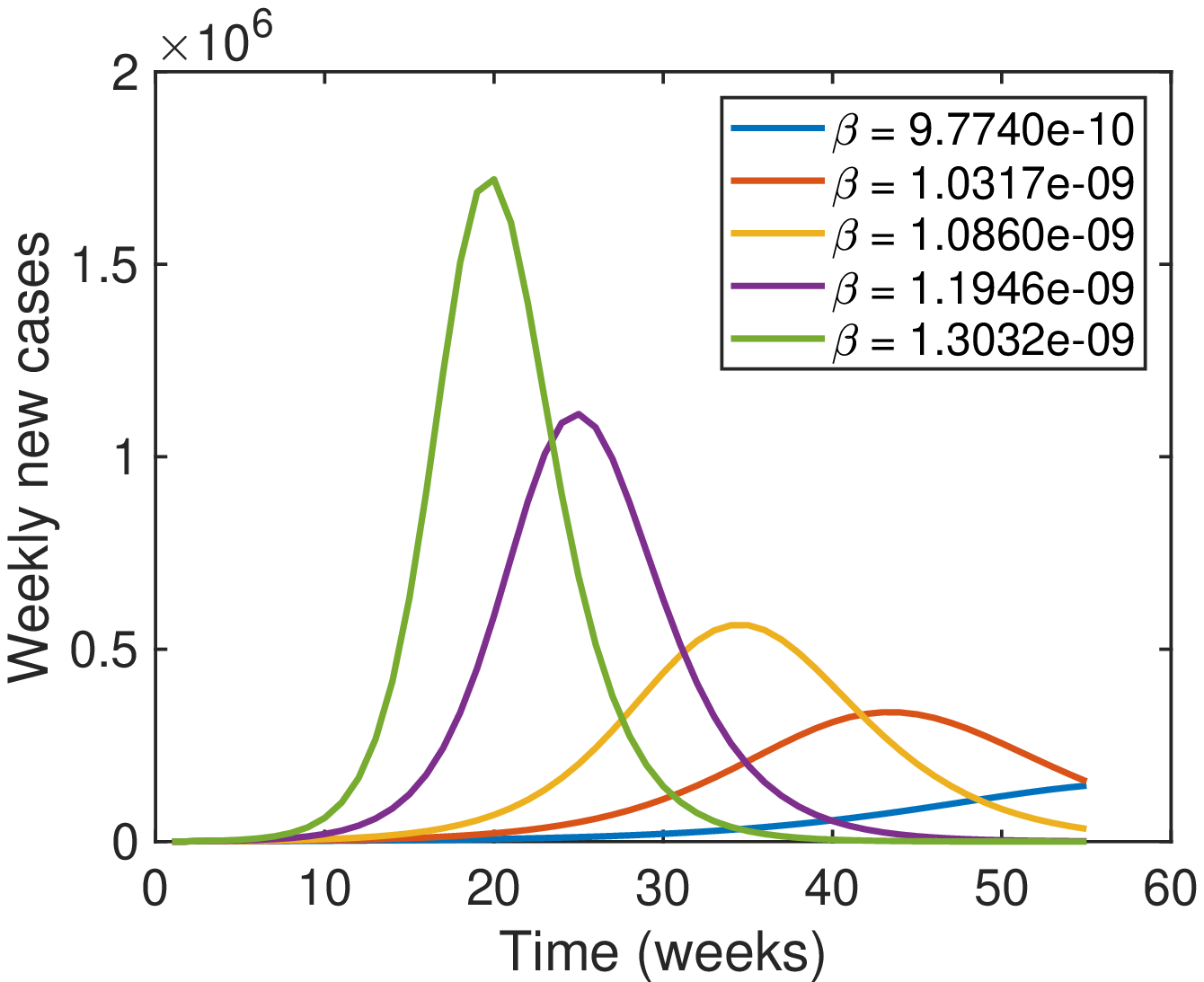}
	(b)\includegraphics[scale=0.45]{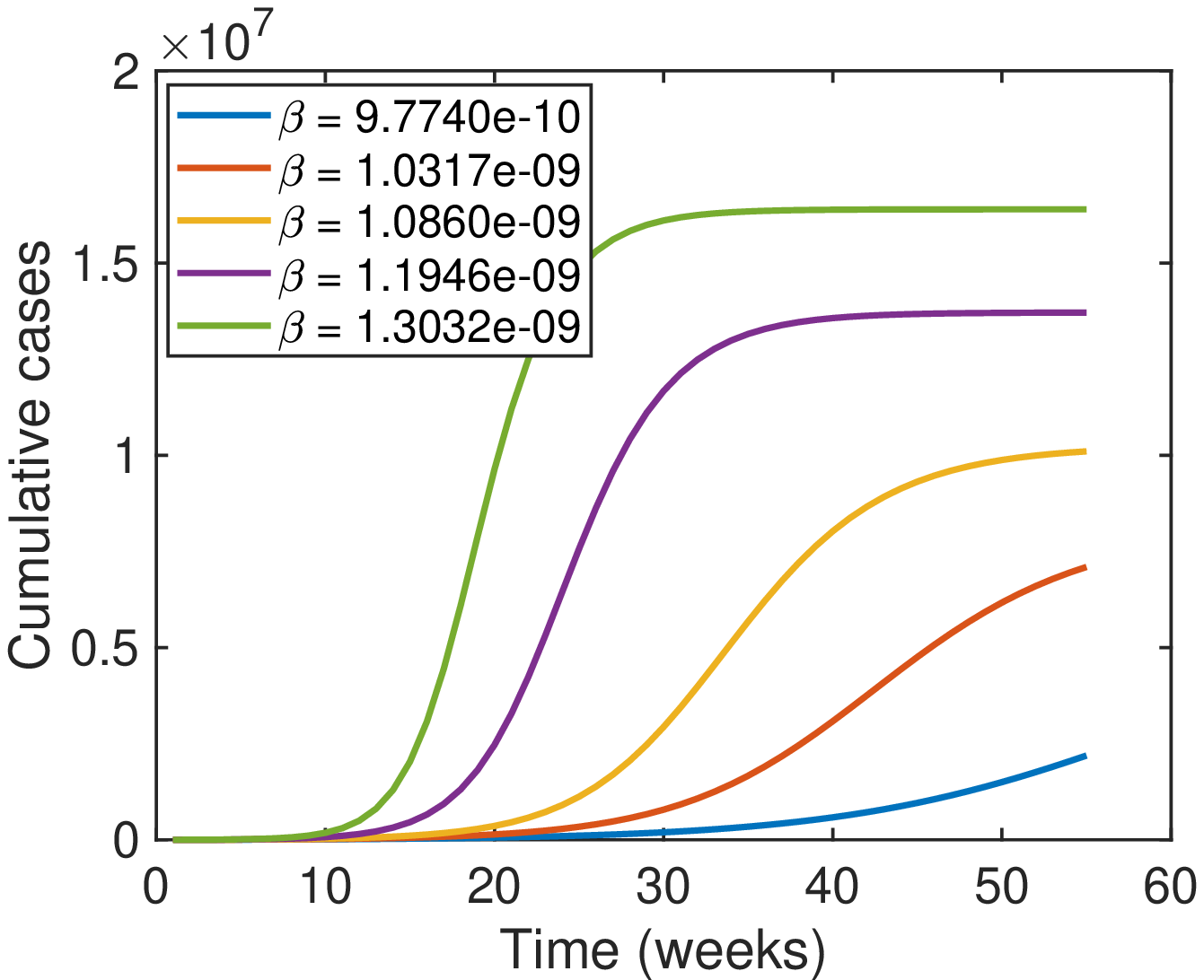}
	(c)\includegraphics[scale=0.45]{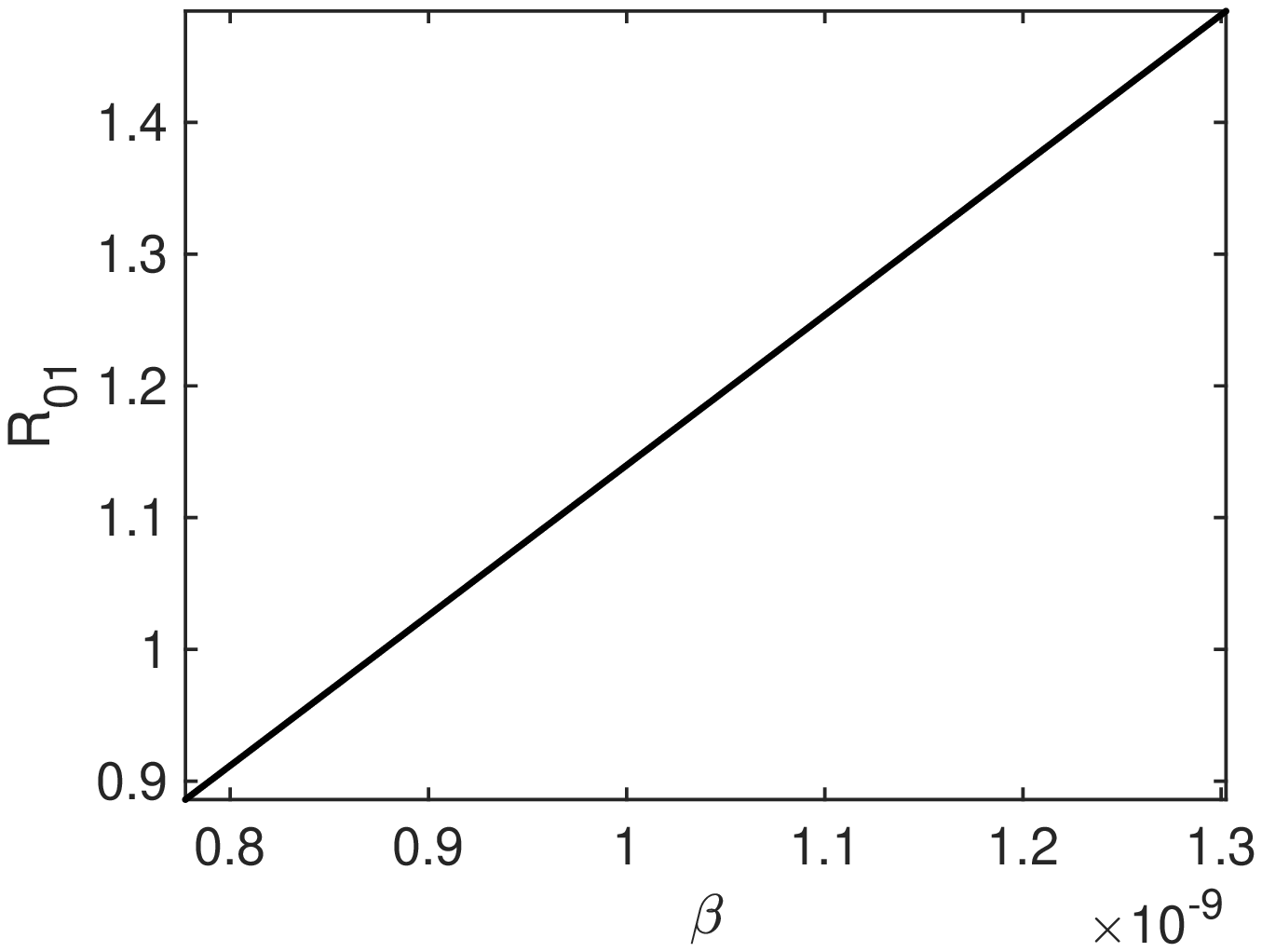}
	\caption{Simulations of the model \eqref{modeloutbreak} representing the impact of different values of the transmission rate $(\beta)$. (a) The trend of weekly new cases for distinct values of $\beta$. (b) Cumulative cases for different values of $\beta$. (c) $R_{01}$ with respect to $\beta$.}\label{impctbeta}
\end{figure}

\begin{figure}[H]
	(a)\includegraphics[scale=0.45]{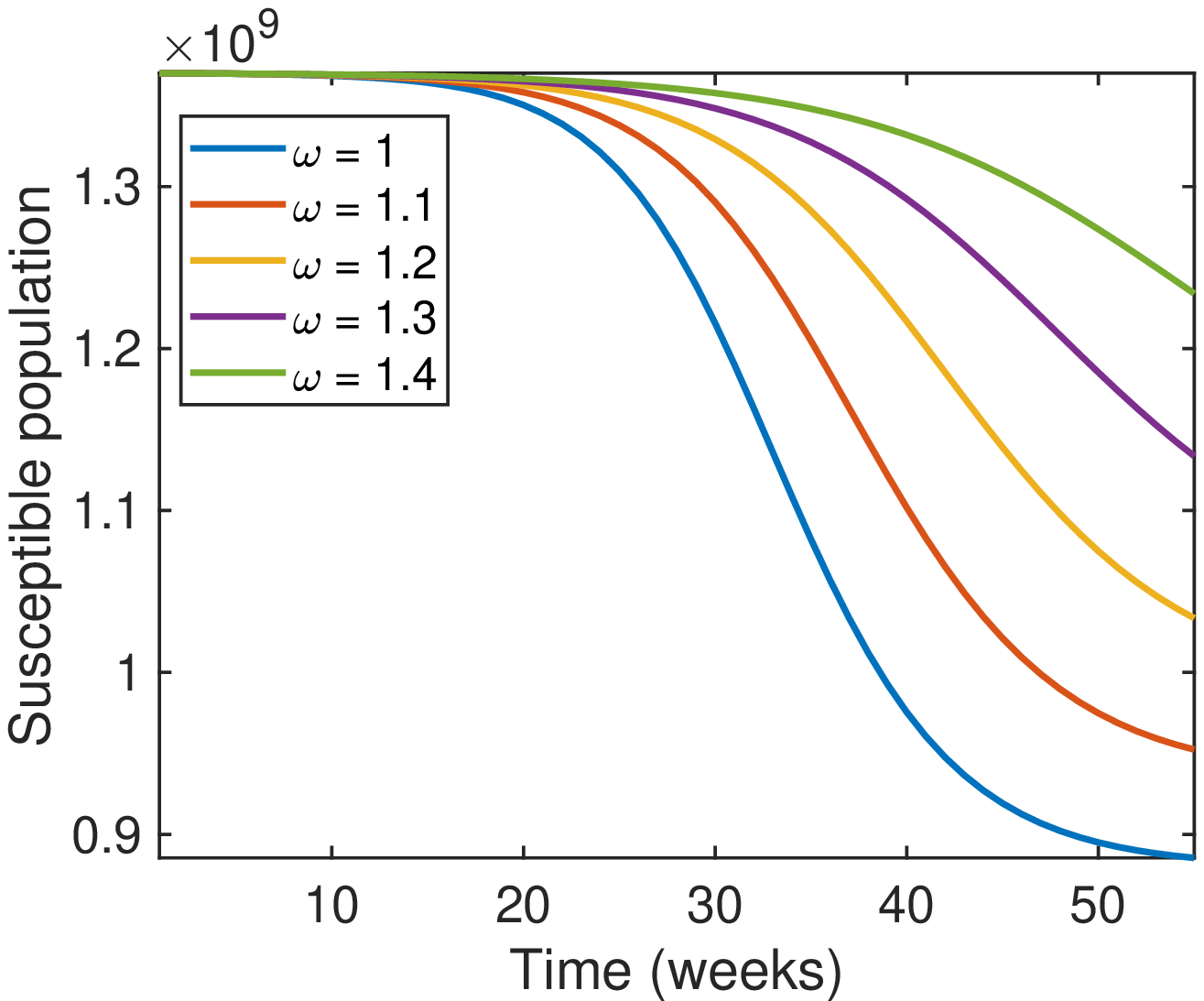}
	(b)\includegraphics[scale=0.45]{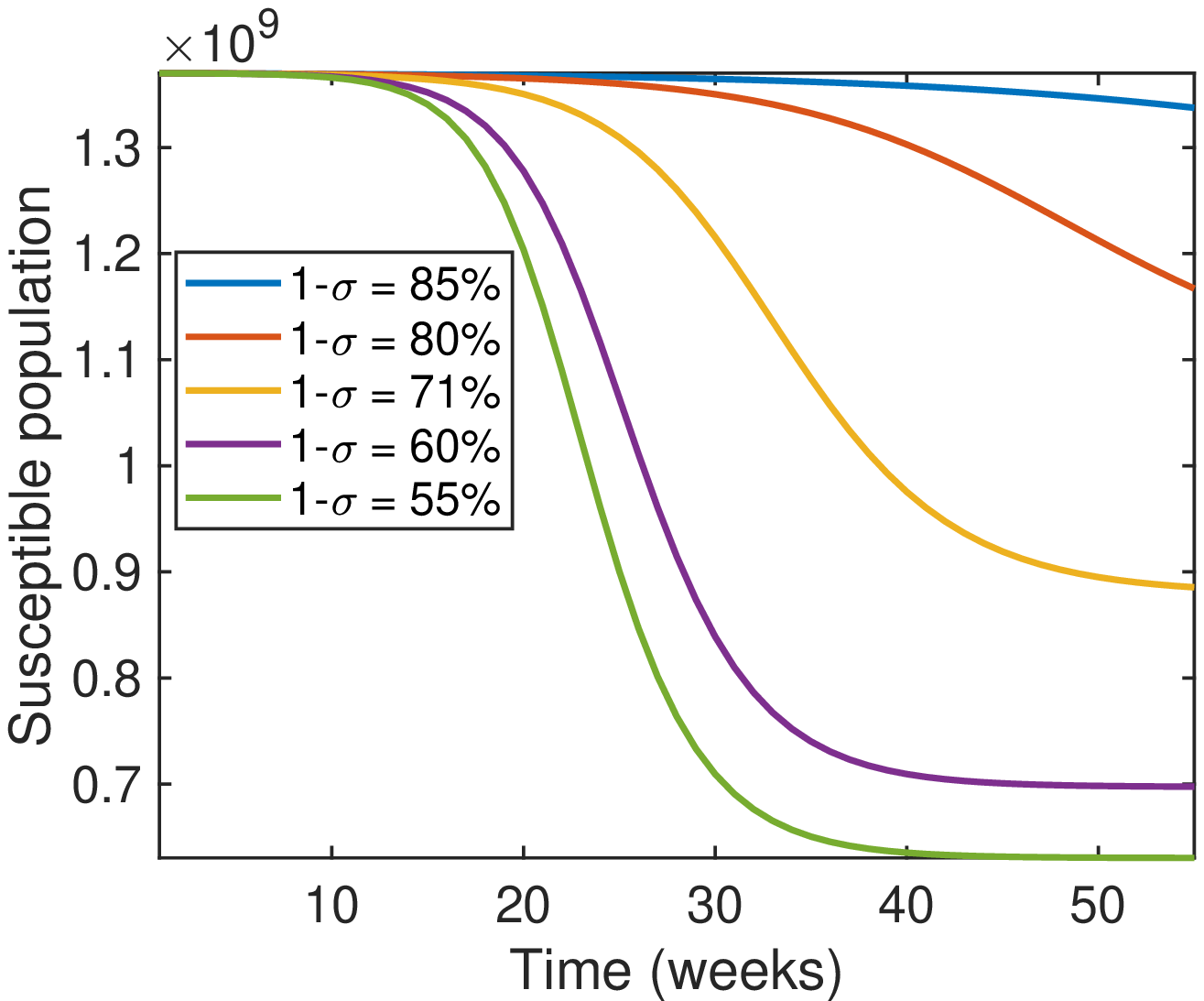}
	(c)\includegraphics[scale=0.45]{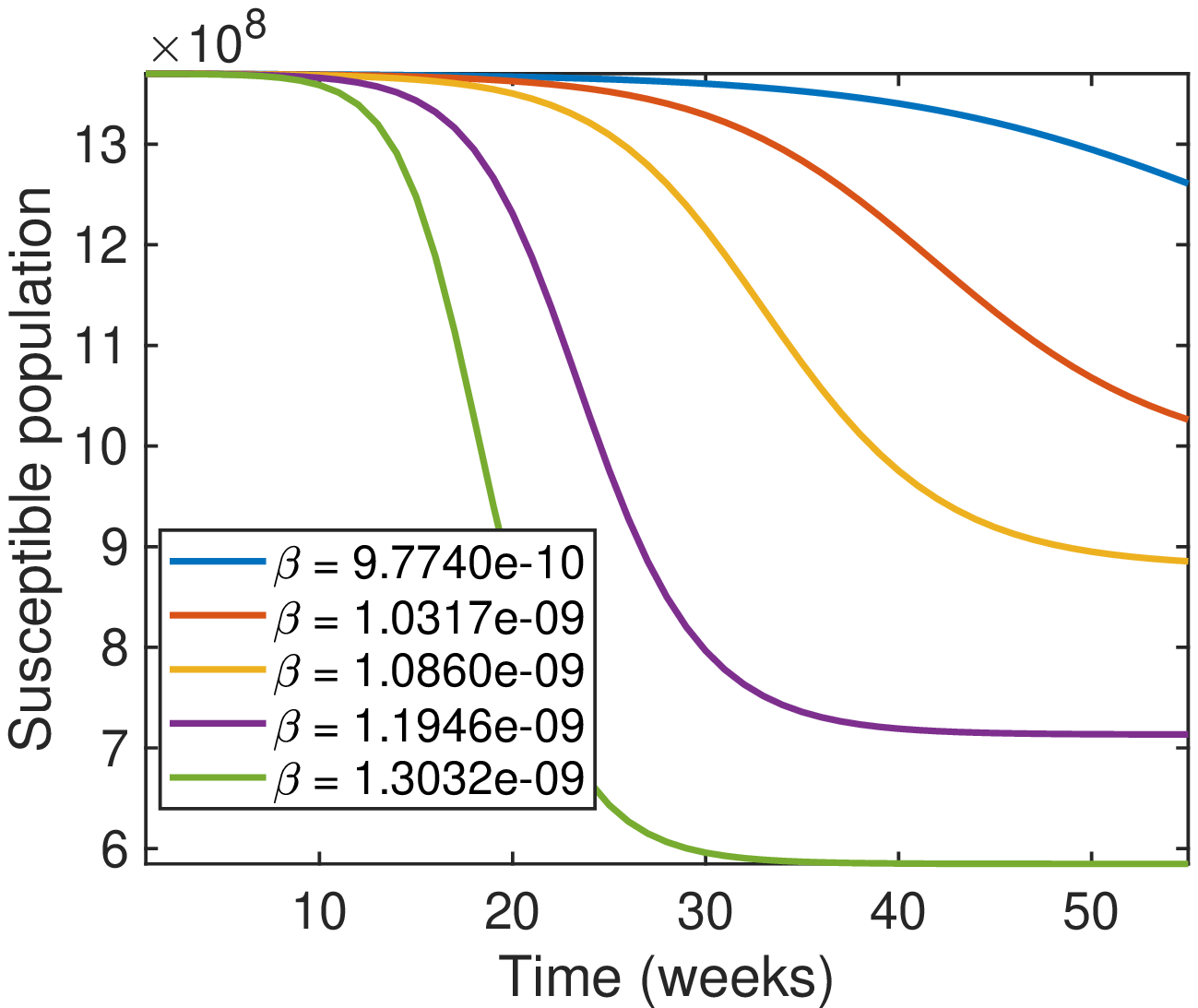}
	\caption{Simulations show the final size relations of the early COVID-19 epidemic in India, i.e., susceptible individuals, $S_{\infty}$, who escaped the epidemic for the different values of (a) $\omega$, (b)  $1-\sigma$, and (c) $\beta$.}\label{finalsize}
\end{figure}

\noindent \textbf{Severity of the epidemic:} Numerical simulations for the final size of the epidemic are shown in Figure \ref{finalsize}. For the baseline parameter values from Table \ref{fitval} and initial conditions in Table \ref{estmdini}, the final size of susceptible population is $S_{\infty} = 885470000$. We assess the impact of increasing testing rate and efficacy as well as the variation in transmission rate on the final size of susceptibles. From Figure \ref{finalsize}(a), we observe that if the testing rate is increased by 10\%, 20\%, 30\%, and 40\% from its baseline value, then the final susceptible individuals are increased by 7.55\%, 16.71\%, 28.02\%, and 39.35\%, respectively. Figure \ref{finalsize}(b) reveals that increasing testing efficacy by 12.67\% and 19.71\% from its baseline value increases the final susceptible population by 31.79\% and 51.04\%, respectively, while decreasing testing efficacy by 15.49\% and 22.53\% from its baseline value reduces the final susceptible individuals by 21.21\% and 28.80\%, respectively. These results imply that increasing the testing rate and efficacy reduces the epidemic's severity, i.e., more individuals escape the epidemic. Figure \ref{finalsize}(c) exposes the impact of variation in the transmission rate. If the transmission rate increases by 10\% and 20\% from its baseline value, the final susceptible individuals decrease by 19.43\% and 33.96\%, respectively. In contrast, if it decreases by 5\% from its baseline value, the final number of susceptible individuals increases by 15.90\%. These results ensure that the disease is more severe for high transmission rate, which is quite obvious.    
\\\\
\noindent \textbf{Combined impact of the parameters:} 
We also observe the combined impact of the testing rate ($\omega$), testing efficacy ($1-\sigma$), and transmission rate ($\beta$) on $R_{01}$ by plotting the surface plot in Figure \ref{surfplot}. Since $R_{01}$ is directly proportional to $\beta$ therefore it quickly decreases with the decreasing value of $\beta$, however $\omega$ and $1-\sigma$ also have noteworthy  influence, illustrated in Figure \ref{surfplot}(a) and (b). Figure \ref{surfplot}(c) compares the testing rate and efficacy, which shows that when testing efficacy is high, then a higher testing rate reduces $R_{01}$ more, while if the testing efficacy is low, then the high testing rate keeps $R_{01}$ high, implies that testing rate is most significant when the testing efficacy is high. Figure \ref{surfplot}(c) also exhibits that the testing efficacy considerably reduces the basic reproduction number compared to the testing rate. 

In addition, Figure \ref{contourplot} demonstrates the impact of the testing rate and efficacy for different transmission rates in reducing the basic reproduction number, $R_{01}$. These contour plots exhibit that the testing efficacy plays a more significant role than the testing rate. These contour plots also reveal that if the transmission rate increases, then $R_{01}$ increases for the lower efficacy. Hence, testing efficacy needs to be increased to fetch the basic reproduction number ($R_{01}$) less than unity when the transmission rate is high. 

\begin{figure}[H]
	(a)\includegraphics[scale=0.45]{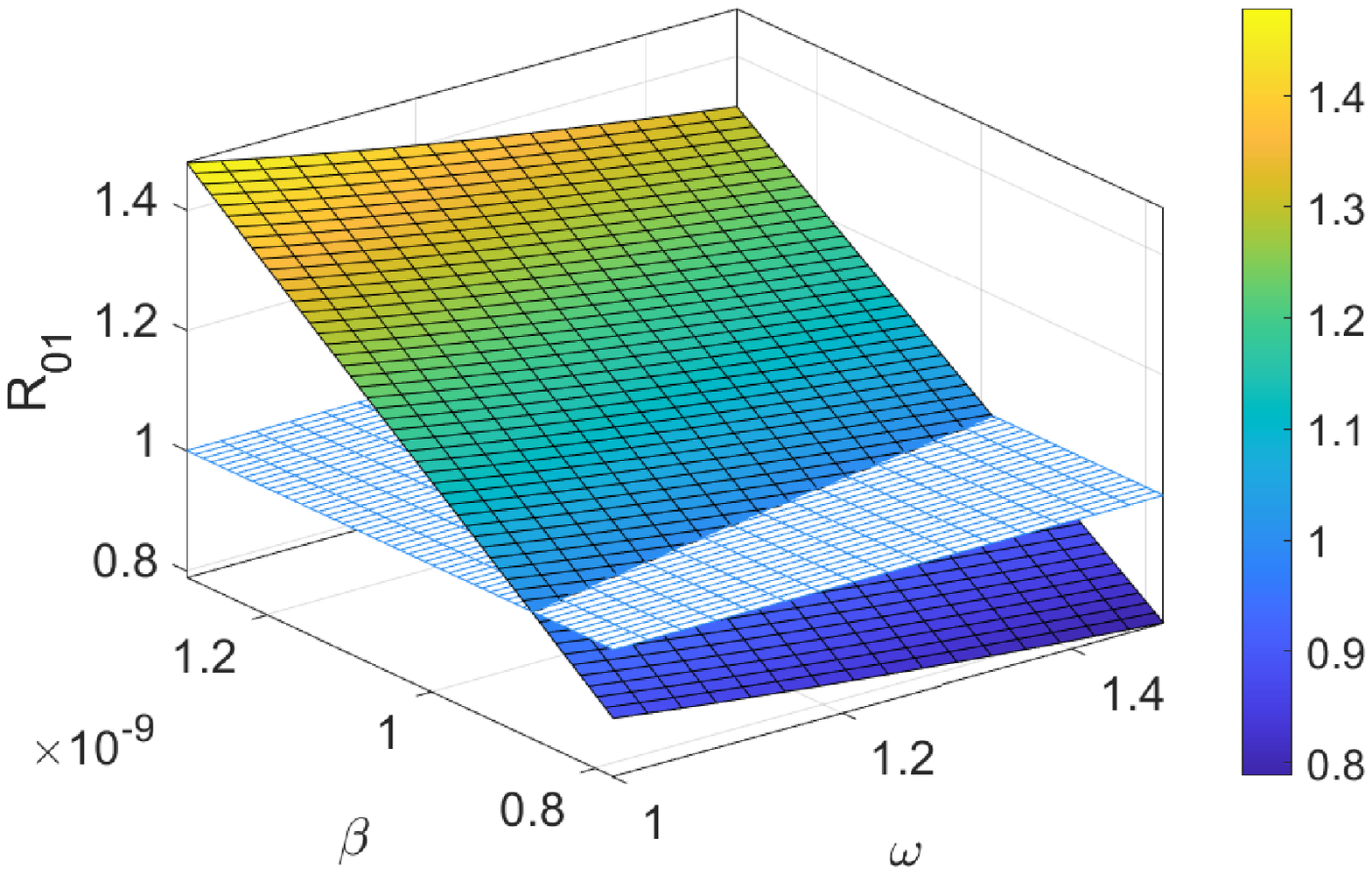}
	(b)\includegraphics[scale=0.42]{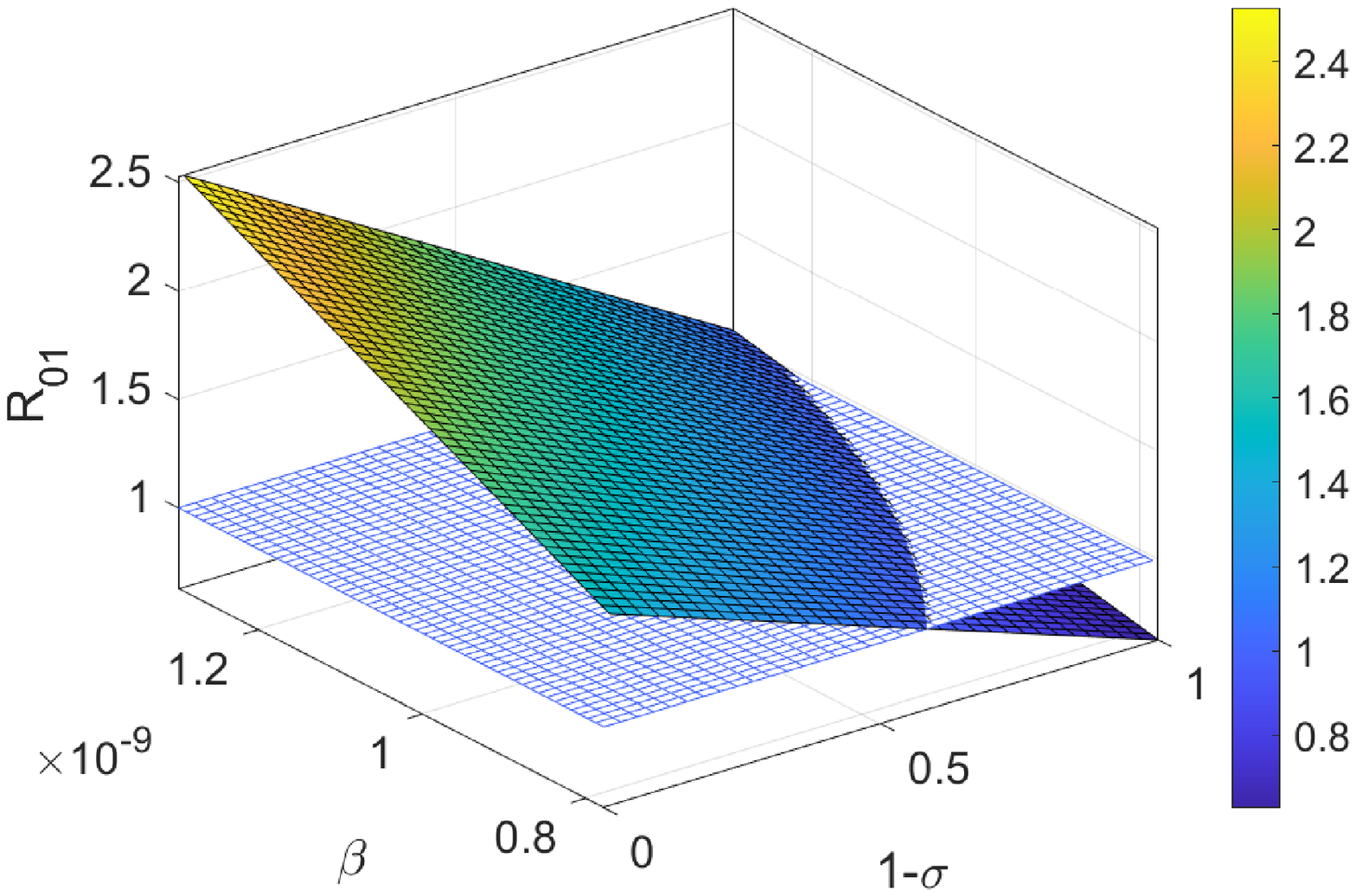}
	(c)\includegraphics[scale=0.41]{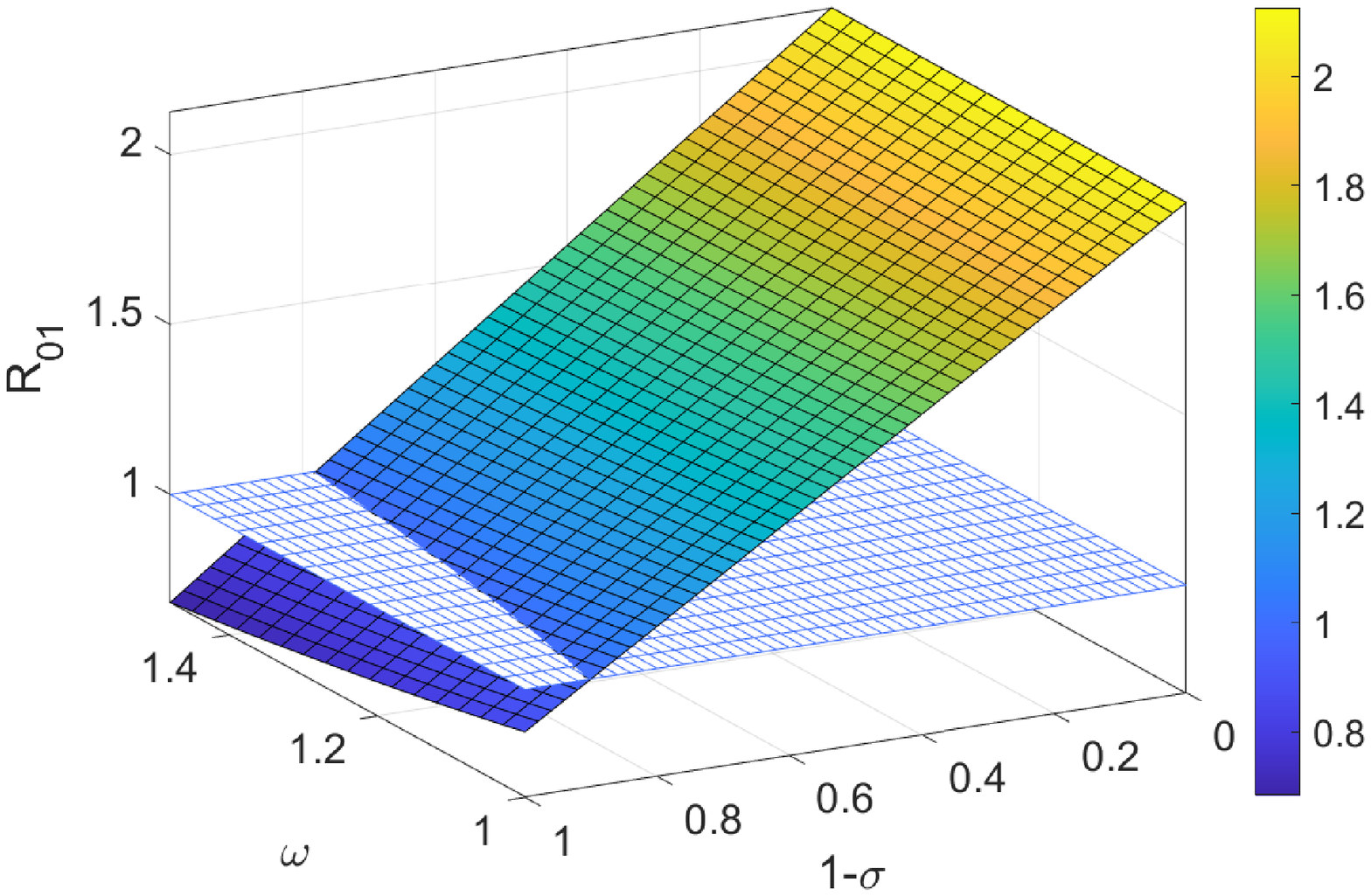}
	\caption{Surface plots demonstrate the impacts of different combinations of the parameters by considering $R_{01}$ as a function of (a) $\beta$ and $\omega$, (b) $\beta$ and $\sigma$,  (c) $\sigma$ and $\omega$.}\label{surfplot}
\end{figure}

\begin{figure}[H]
	\includegraphics[scale=1]{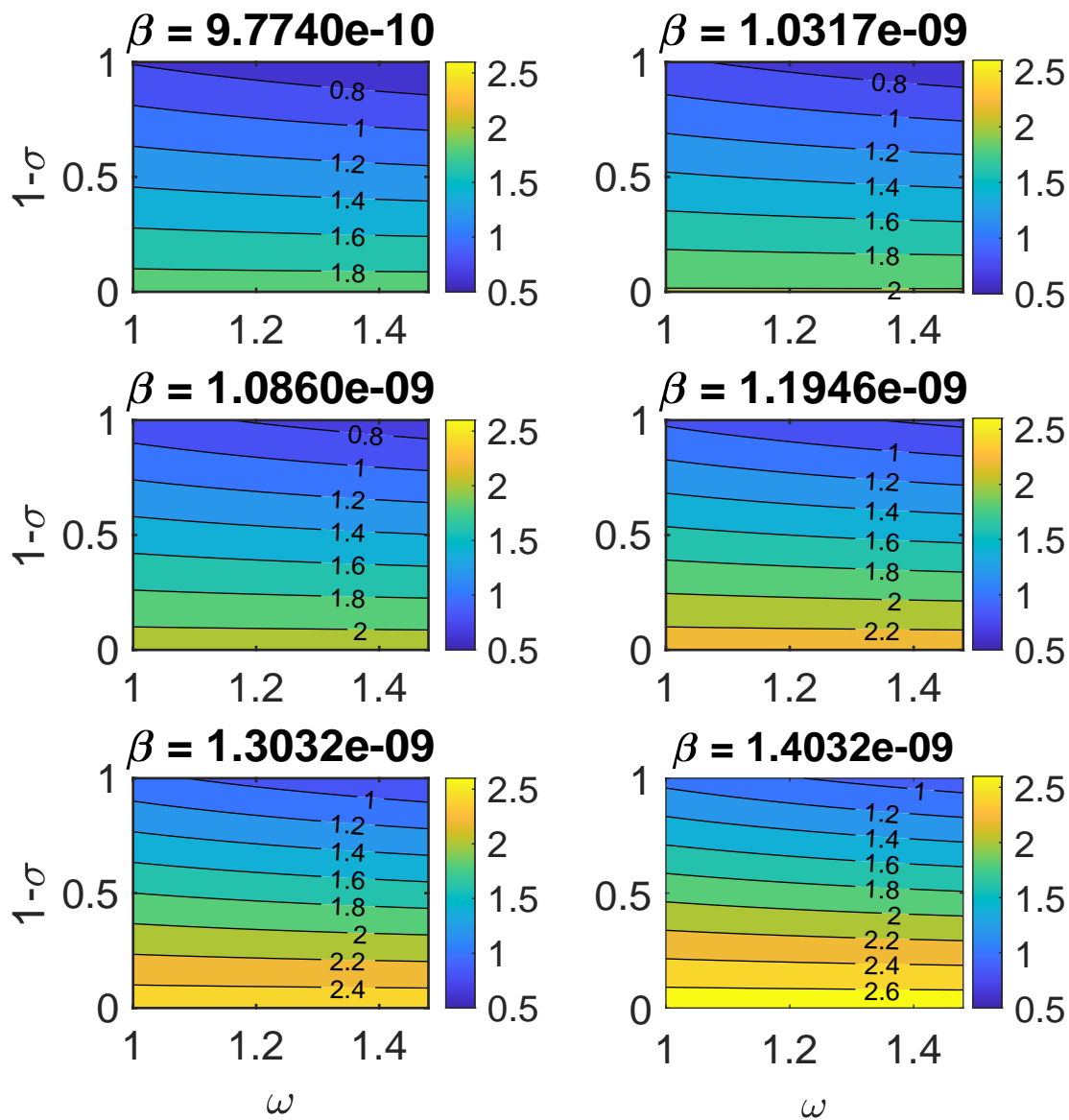}
	\caption{Effectiveness of testing rate ($\omega$) and efficacy ($1-\sigma$) for different transmission rates ($\beta$) in reducing the basic reproduction number, $R_{01}$.}\label{contourplot} 
\end{figure}

\section{Discussion and conclusions}\label{diss}
The control of infectious diseases depends on many factors or non-pharmaceutical/pharmaceutical intervention strategies, such as vaccination, drugs, isolation, use of face masks, social distancing, awareness, etc. A lots of studies have already been done with several intervention strategies  \cite{BAJ2020,BUG2020,BUG2022,FEN2007,WANG2022,Young2019,elasticity,FER2020,GUM2004,BAG2013,HOD2017,KUC2020,DAV2020} and many more, but a few studies \cite{VIL2017,CHI2020,STU2021,GRI2021,SAL2014} have included the testing strategy yet. It is an essential first step to diagnosis the disease, and an accurate estimation of the burden of any disease is crucial to notify the pandemic response. Because of the imperfection in testing, case counts cannot seize the entire burden of the disease. Also, sometimes testing is largely limited to persons with moderate to severe symptoms due to inadequate test accessibility \cite{WU2020}. To capture the scenario related to an imperfect testing strategy, we proposed a compartmental non-linear ordinary differential equation model. The model comprises numerous crucial epidemiological characteristics of a contagious disease, and is intended and utilized to evaluate the possible outcomes of an imperfect testing strategy. It must be remarked that the proposed model could also be utilized to analyze other contagious diseases that follow transitions between compartments, as displayed in Figure \ref{schematic}. Some of the main mathematical and epidemiological findings of this study include the following: 
\begin{itemize}
	\item[(i)] The dynamics of the proposed model \eqref{model}, in the absence of immigration of infectives, i.e., $q=0,$ is entirely governed by the basic reproduction number, $R_0$. The model with $q=0$ has a globally asymptotically stable disease-free equilibrium whenever $R_0<1$ and a unique globally asymptotically stable endemic equilibrium whenever $R_0 > 1$ implying that the disease can be eradicated from the population if the testing has adequate efficacy to maintain (and bring) $R_0$ to a value less than one. This outcome has vital public health inferences since imperfect testing is probably cost-effective. It imposes substantial socio-economic burden in the community, only correct results on the detection move to the quarantine/isolation compartment. The model with $q \neq 0,$ does not have a disease-free equilibrium, and disease never dies out from the community. For this case, a unique endemic equilibrium is globally asymptotically stable, which has been proved by constructing a suitable Lyapunov function.     
	\item[(ii)] A single outbreak model has also been analyzed, including an outbreak's peak and final size relation. The parameter estimation has been done by the maximum likelihood method with respect to the data of the early COVID-19 outbreak in India. We explored the practical identifiability of the single outbreak model \eqref{modeloutbreak}. We fit two parameters because fixing more and fitting fewer parameters makes the model identifiable. We found that our estimated parameters are practically identifiable; it could be seen by the profile likelihoods in Figure \ref{costfun}. These profile likelihoods give the 95\% confidence bounds and a single optimum for each parameter. Moreover, if we try to estimate more parameters in the model, the model may not be practically identifiable. For instance, some previous studies \cite{KAO2018, EIS2013} concerned the issue of the  unidentifiability of the models. The complex model is often more likely to be unidentifiable. 
	\item[(iii)] The basic reproduction number increases more with imperfect testing in comparison to perfect testing. However, the basic reproduction number decreases with the increasing testing efficacy and the testing rate, which is analyzed mathematically for the model \eqref{model} (in subsection \ref{difsce}) as well as numerically for the model \eqref{modeloutbreak} (in subsection \ref{5.5}).  
	\item[(iv)] When observing the impact of the parameters on the weekly new cases, cumulative cases, and basic reproduction number, we note that the transmission rate $\beta$ is highly sensitive to the weekly new cases as well as cumulative cases. To this end, reducing the contact rate may significantly be helpful to decrease the disease burden. The testing rate is also highly sensitive, but in a positive way, it can detect more infected individuals and move them into the isolation/quarantine compartment, which helps reduce the overall disease burden.    
	\item[(v)] We notice that high testing rate and efficacy are helpful in reducing the basic reproduction number as well as the outbreak's peak, which implies that these strategies degrade the disease burden and secondary infections. We obtained that increasing testing rate and efficacy also increase the final number of susceptible individuals, i.e., more susceptible individuals escaped the epidemic which reduces the severity of the epidemic. Lessening the transmission rate decreases the peak size and delays the peak time, implying that intervention strategies that reduce the transmission rate are always helpful in eradicating the disease. It is also explored that the testing rate is more significant when testing efficacy is high in reducing secondary infection. By plotting the contour plots in Figure \ref{contourplot}, it is also revealed that for the increased transmission rate, testing efficacy needs to be improved to bring the basic reproduction number to less than one.       
	\item[(vi)] For the proposed model, testing is always advantageous to the society, although its inclusive impression upsurges with mounting rate and efficacy. It is quite a constructive point since it is realized that imperfect testing can sometimes develop detrimental outcomes for the community.         
\end{itemize}

Overall, this study confirms that an imperfect testing strategy can increase the basic reproduction number and consequently endemicity in the community. The testing rate and efficacy are essential and can make the basic reproduction number less than one; however, these are more effective if combined with other intervention strategies that reduce the transmission rate. It should be declared that the proposed model is relatively simple. We did not analyze the general model \eqref{model} (with demographic effect and waning immunity rate) computationally. The same results as for the single outbreak model \eqref{modeloutbreak} (in Section \ref{5.5}) could be found for the testing rate and efficacy concerning endemicity (for long-term disease dynamics) and the basic reproduction number, $R_0$ for the model \eqref{model}. However, we have found that the recruitment of infected individuals increases the endemicity (in Figure \ref{eqlibrm}(b)), and the disease never dies out in the community. 
The model \eqref{modeloutbreak} is also a simplified interpretation of the role of an imperfect testing strategy in disease dynamics and only considers a single outbreak and one viral strain. 
%The model may require to refine in the future as more information (and knowledge) becomes available.

\section*{Acknowledgments}
The research work of Sarita Bugalia is supported by the Council of Scientific \& Industrial Research (CSIR), India [File No. 09/1131(0025)/2018-EMR-I]. The research work of Jai Prakash Tripathi is supported by the Science and Engineering Research Board (SERB), India [File No. ECR/2017/002786].

\section*{Appendix} \label{apndxa}
\textbf{Proof of Theorem \ref{thm3.1}:} From system \eqref{model}, we have
\begin{equation*}
\begin{aligned}
\frac{dS}{dt}\Big|_{S=0,I\geq 0, T_n \geq 0, J\geq 0, R\geq 0}&=(1-p-q)\Lambda + \xi R >0,\\
\frac{dI}{dt}\Big|_{S> 0,I= 0, T_n \geq 0, J\geq 0, R\geq 0}&= q\Lambda + \beta S T_n > 0,\\
\frac{dT_n}{dt}\Big|_{S> 0,I\geq 0, T_n = 0, J\geq 0, R\geq 0}&= \sigma \omega I \geq 0,\\
\frac{dJ}{dt}\Big|_{S> 0,I\geq 0, T_n \geq 0, J=0, R\geq 0}&= (1- \sigma) (1-\theta) \omega I \geq 0,\\
\frac{dR}{dt}\Big|_{S> 0,I\geq 0, T_n \geq 0, J\geq 0, R= 0}&= p\Lambda + \rho J > 0.
\end{aligned}
\end{equation*}
The above calculation shows that all the rates are non-negative on the boundary planes of $\mathbb{R}_+^5$. Therefore, we can deduce that the direction of a vector field is inward from the boundary planes. Thus, whenever the system begins in a non-negative $\mathbb{R}_+^5$, all the solutions remain in the positive region. 

Consider $N(t) = S(t)+I(t)+T_n(t)+J(t)+R(t)$ the total population size corresponding to the system \eqref{model}. Adding equations of system \eqref{model} gives the equation for the total population:
\begin{equation*}
\frac{dN}{dt} = \frac{dS}{dt}+\frac{dI}{dt}+\frac{dT_n}{dt}+\frac{dJ}{dt}+\frac{dR}{dt},
\end{equation*}
which yields 
\begin{equation*}
\frac{dN}{dt}=\Lambda - \mu N -\delta_1 I - \theta \omega I - \delta_2 T_n - \delta_3 J \implies \Lambda - (\mu + \delta_1 + \theta \omega + \delta_2 + \delta_3)N \leq \frac{dN}{dt} \leq \Lambda - \mu N.
\end{equation*}
By integrating both sides, we obtain 
{\footnotesize \begin{equation*}
	\frac{\Lambda}{\mu + \delta_1 + \theta \omega + \delta_2 + \delta_3}+\left(N(0) - \frac{\Lambda}{\mu + \delta_1 + \theta \omega + \delta_2 + \delta_3} \right)e^{-(\mu + \delta_1 + \theta \omega + \delta_2 + \delta_3)t}\leq N(t) \leq \frac{\Lambda}{\mu}+\left(N(0) - \frac{\Lambda}{\mu} \right)e^{-\mu t}.
	\end{equation*} }
Taking $t \rightarrow \infty$, we obtain
{\footnotesize \begin{equation*}
	\frac{\Lambda}{\mu + \delta_1 + \theta \omega + \delta_2 + \delta_3} \leq \liminf\limits_{t \rightarrow \infty}N(t) \leq \limsup\limits_{t \rightarrow \infty}N(t) \leq \frac{\Lambda}{\mu} \implies \frac{\Lambda}{\mu + \delta_1 + \theta \omega + \delta_2 + \delta_3} \leq N(t) \leq \frac{\Lambda}{\mu},
	\end{equation*}}
with $\limsup \limits_{t\rightarrow \infty} N(t) = N_0 = \frac{\Lambda}{\mu}$ if and only if $\limsup \limits_{t\rightarrow \infty} I(t) = 0$, $\limsup \limits_{t\rightarrow \infty} T_n(t) = 0$, and $\limsup \limits_{t\rightarrow \infty} J(t) = 0$. Subsequently, in the absence of infection $(I = T_n = J=0)$, the total population, $N$, approaches the carrying capacity, $N_0$, asymptotically; and in the existence of infection, the total population is less than or equal to $N_0$. Hence, whenever the trajectory begins inside the region of $\Omega$, it remains in the region. On the contrary, if the trajectory begins outside of $\Omega$, then it will move into the region and remain in it. Therefore, the $\omega$-limit sets of system \eqref{model} are enclosed in $\Omega$. It implies that $\Omega$ is a positively invariant set for system \eqref{model}. Hence the proof is completed. 
\\
\textbf{Proof of Theorem \ref{gbstabend}:} We first consider the case $q \neq 0$ and
\begin{align*}
x=\frac{S}{S^*},~ y=\frac{I}{I^*},~ z=\frac{T_n}{T_n^*},~ u=\frac{J}{J^*},~ v = \frac{R}{R^*},
\end{align*}
and by using the Eqs. of the model \eqref{model}, the model \eqref{model} can be  reshaped in the following form:
\begin{equation}\label{modellyp1}
\begin{aligned}
x' &= x\Big[\frac{(1-p-q)\Lambda}{S^*}\Big(\frac{1}{x}-1\Big) - \beta I^* (y-1) - \beta T_n^* (z-1) + \frac{\xi R^*}{S^*}\Big(\frac{v}{x} -1 \Big) \Big],\\
y' & =  y\Big[\frac{q \Lambda}{I^*}\Big(\frac{1}{y}-1\Big) + \beta S^* (x-1) + \frac{\beta S^* T_n^*}{I^*}\Big(\frac{xz}{y}-1 \Big) \Big],\\
z' &= z \frac{\sigma \omega I^*}{T_n^*}\Big[\frac{y}{z}-1 \Big],\\
u' &= u  \frac{(1-\sigma) \omega I^*}{J^*}\Big[\frac{y}{u}-1 \Big],\\
v' & = v \Big[\frac{p \Lambda}{R^*}\Big(\frac{1}{v} - 1\Big) + \frac{\rho J^*}{R^*}\Big(\frac{u}{v} - 1\Big) \Big].
\end{aligned}
\end{equation}
Further, we ponder the subsequent Lyapunov function
\begin{equation*}
\begin{aligned}
Z = S^* (x-1-\ln x) + I^* (y-1-\ln y) + T_n^* (z-1-\ln z) + J^* (u-1-\ln u) + R^* (v-1-\ln v).
\end{aligned}
\end{equation*}
Differentiating the above function $Z$ with respect to $t$ along the solutions of system \eqref{model}, yields
\begin{equation*}
\begin{aligned}
Z' =& (x-1) \Big[(1-p-q)\Lambda \Big(\frac{1}{x}-1\Big) - \beta S^* I^* (y-1) - \beta S^* T_n^* (z-1) + \xi R^*\Big(\frac{v}{x} -1 \Big) \Big] \\
&+ (y-1) \Big[q \Lambda \Big(\frac{1}{y}-1\Big) + \beta S^* I^* (x-1) + \beta S^* T_n^* \Big(\frac{xz}{y}-1 \Big) \Big] +(z-1) \Big[\sigma \omega I^* \Big(\frac{y}{z}-1\Big) \Big] \\
& + (u-1) \Big[ (1-\sigma) \omega I^* \Big(\frac{y}{u}-1 \Big) \Big] + (v-1) \Big[p \Lambda\Big(\frac{1}{v} - 1\Big) + \rho J^*\Big(\frac{u}{v} - 1\Big) \Big] \\
= & 2\Lambda + \xi R^* + \omega I^* + \rho J^* - x \Big[ (1-p-q)\Lambda - \beta S^* T_n^* + \xi R^* \Big] - \frac{1}{x} (1-p-q)\Lambda \\
& - y \Big[q\Lambda + \beta S^* T_n^* - \omega I^* \Big] - z \Big[\sigma \omega I^* -\beta S^* T_n^* \Big] - v \Big[ p\Lambda + \rho J^* - \xi R^* \Big]  - \frac{v}{x} \xi R^*\\
& - \frac{1}{y} q \Lambda - \frac{xz}{y} \beta S^* T_n^* - \frac{y}{z} \sigma \omega i^* - u \Big[ (1-\sigma)\omega I^* - \rho J^* \Big] - \frac{1}{v}p\Lambda - \frac{u}{v} \rho J^* - \frac{y}{u} (1-\sigma) \omega I^*\\
&=: G(x,y,z,u,v).
\end{aligned}
\end{equation*}
It further yields
\begin{equation*}
\begin{aligned}
G(x,y,z,u,v) =& b_1 \Big(2-x-\frac{1}{x}\Big) + b_2 \Big(2-y-\frac{1}{y}\Big) + b_3 \Big(2-v-\frac{1}{v}\Big) + b_4 \Big(3- \frac{1}{y} - z - \frac{y}{z}\Big) \\
& + b_5 \Big(3- x - \frac{1}{v} -\frac{v}{x}\Big) + b_6 \Big(3- \frac{1}{x} - \frac{xz}{y} -\frac{y}{z}\Big) + b_7 \Big(4- \frac{1}{y} - v - \frac{y}{u} - \frac{u}{v}\Big)\\
& + b_8 \Big(3- u - \frac{y}{u} -\frac{1}{y}\Big) + b_9 \Big(4- \frac{1}{v} - \frac{v}{x} - \frac{y}{z} - \frac{xz}{y}\Big),
\end{aligned}
\end{equation*} 
where
\begin{align*}
b_2 &= q\Lambda + \beta S^* T_n^* - \omega I^*, \\
b_3 &= p\Lambda - \xi R^*,\\
b_4 &=  \sigma \omega I^* - \beta S^* T_n^*,\\
b_5 & = (1-p-q)\Lambda - \beta S^* T_n^* + \xi R^* - b_1,\\
b_6 & = (1-p-q)\Lambda - b_1,\\
b_7 & = \rho J^*,\\
b_8 & =(1-\sigma)\omega I^* - \rho J^*,\\
b_9 & = \beta S^* T_n^* - (1-p-q)\Lambda + b_1.
\end{align*}
To assure that $b_5, b_6$ and $b_9$ are all nonnegative, $b_1$ should fulfill the ensuing inequalities:
\begin{align}\label{con_b1}
(1-p-q)\Lambda - \beta S^* T_n^* \leq b_1 \leq \min \left\lbrace (1-p-q)\Lambda, (1-p-q)\Lambda + \xi R^* - \beta S^* T_n^*  \right\rbrace. 
\end{align}
Notice that the inequality $(1-p-q)\Lambda - \beta S^* T_n^* < \min \left\lbrace (1-p-q)\Lambda, (1-p-q)\Lambda + \xi R^* - \beta S^* T_n^*  \right\rbrace$ is always true. 

The constrained condition for $b_1$, the inequality \eqref{con_b1}, exposes that the obtainable value of $b_1$ is non-unique, then the associated coefficients 
$b_k(k = 5, 6, 9)$ are also non-unique. Thus, the form of the associated function $G(x,y,z,u,v)$ is also non-unique. To establish the different expressions of $G(x,y,z,u,v)$ for the various parameter values under the condition \eqref{con_b1}, we separate the feasible region of all parameter values into three subregions, as follows:\\
(R1): $(1-p-q)\Lambda > \beta S^* T_n^*$; \quad (R2): $(1-p-q)\Lambda = \beta S^* T_n^*$; \quad (R3): $(1-p-q)\Lambda < \beta S^* T_n^*$.
Then we may select a specific case of $b_k(k = 5, 6, 9)$ for each subregions.

For Case (R1), we choose that $b_1 = (1-p-q)\Lambda - \beta S^* T_n^*$, $b_2 = q\Lambda + \beta S^* T_n^*- \omega I^*$, $b_3 = p \Lambda - \xi R^*$, $b_4 = \sigma \omega I^* - \beta S^* T_n^*$, $b_5 = \xi R^*$, $b_6 = \beta S^* T_n^*$, $b_7 = \rho J^*$, $b_8 = (\mu + \delta_3)J^*$, $b_9 = 0$, then the associated function $G(x,y,z,u,v)$ is 
\begin{equation*}
\begin{aligned}
G(x,y,z,u,v) =& ((1-p-q)\Lambda - \beta S^* T_n^*) \Big(2-x-\frac{1}{x}\Big) + (q\Lambda + \beta S^* T_n^*- \omega I^*) \Big(2-y-\frac{1}{y}\Big)\\
& + (p \Lambda - \xi R^*) \Big(2-v-\frac{1}{v}\Big) + (\sigma \omega I^* - \beta S^* T_n^*) \Big(3- \frac{1}{y} - z - \frac{y}{z}\Big) \\
& + \xi R^* \Big(3- x - \frac{1}{v} -\frac{v}{x}\Big) + \beta S^* T_n^* \Big(3- \frac{1}{x} - \frac{xz}{y} -\frac{y}{z}\Big) \\
& + \rho J^* \Big(4- \frac{1}{y} - v - \frac{y}{u} - \frac{u}{v}\Big) + (\mu + \delta_3)J^* \Big(3- u - \frac{y}{u} -\frac{1}{y}\Big).
\end{aligned}
\end{equation*} 
For Case (R2), we choose that $b_1 = 0$, $b_2 = q\Lambda + \beta S^* T_n^*- \omega I^*$, $b_3 = p \Lambda - \xi R^*$, $b_4 = \sigma \omega I^* - \beta S^* T_n^*$, $b_5 = \xi R^*$, $b_6 = (1-p-q)\Lambda$, $b_7 = \rho J^*$, $b_8 = (\mu + \delta_3)J^*$, $b_9 = 0$, then the associated function $G(x,y,z,u,v)$ is
\begin{equation*}
\begin{aligned}
G(x,y,z,u,v) =& (q\Lambda + \beta S^* T_n^*- \omega I^*) \Big(2-y-\frac{1}{y}\Big) + (p \Lambda - \xi R^*) \Big(2-v-\frac{1}{v}\Big) \\
&+ (\sigma \omega I^* - \beta S^* T_n^*) \Big(3- \frac{1}{y} - z - \frac{y}{z}\Big) + \xi R^* \Big(3- x - \frac{1}{v} -\frac{v}{x}\Big) \\
&+ (1-p-q)\Lambda \Big(3- \frac{1}{x} - \frac{xz}{y} -\frac{y}{z}\Big) + \rho J^* \Big(4- \frac{1}{y} - v - \frac{y}{u} - \frac{u}{v}\Big)\\
& + (\mu + \delta_3)J^* \Big(3- u - \frac{y}{u} -\frac{1}{y}\Big).
\end{aligned}
\end{equation*} 
For Case (R3), we choose that $b_1 = 0$, $b_2 = q\Lambda + \beta S^* T_n^*- \omega I^*$, $b_3 = p \Lambda - \xi R^*$, $b_4 = \sigma \omega I^* - \beta S^* T_n^*$, $b_5 = \beta S^* I^* + \mu S^*$, $b_6 = (1-p-q)\Lambda$, $b_7 = \rho J^*$, $b_8 = (\mu + \delta_3)J^*$, $b_9 = \beta S^* T_n^* - (1-p-q)\Lambda$, then the associated function $G(x,y,z,u,v)$ is
\begin{equation*}
\begin{aligned}
G(x,y,z,u,v) =& (q\Lambda + \beta S^* T_n^*- \omega I^*) \Big(2-y-\frac{1}{y}\Big) + (\mu + \delta_3)J^* \Big(3- u - \frac{y}{u} -\frac{1}{y}\Big) \\
&+ (\sigma \omega I^* - \beta S^* T_n^*) \Big(3- \frac{1}{y} - z - \frac{y}{z}\Big) + (\beta S^* I^* + \mu S^*) \Big(3- x - \frac{1}{v} -\frac{v}{x}\Big)\\
& + (1-p-q)\Lambda \Big(3- \frac{1}{x} - \frac{xz}{y} -\frac{y}{z}\Big) + \rho J^* \Big(4- \frac{1}{y} - v - \frac{y}{u} - \frac{u}{v}\Big)\\
& + (p \Lambda - \xi R^*) \Big(2-v-\frac{1}{v}\Big) + (\beta S^* T_n^* - (1-p-q)\Lambda) \Big(4- \frac{1}{v} - \frac{v}{x} - \frac{y}{z} - \frac{xz}{y}\Big).
\end{aligned}
\end{equation*} 
By using the property that the geometric mean is less than or equal to the arithmetic mean, $G(x,y,z,u,v) \leq 0,$ and the equality is true only for $x=y=v=z=u=1$ i.e., $$\left\lbrace (x,y,z,u,v) \in \Omega : G(x,y,z,u,v) = 0 \right\rbrace \equiv \left\lbrace (x,y,z,u,v): x=y=v=u=z=1 \right\rbrace, $$ which corresponds to the set 
$\Omega' = \left\lbrace (S,I, T_n, J, R): S = S^*, I=I^*, T_n = T_n^*, J = J^*, R = R^* \right\rbrace \subset \Omega.$ It is evident to see that the maximum invariant set of \eqref{model} on the set $\Omega'$ is the singleton $\left\lbrace E^* \right\rbrace $, then the equilibrium $E^*$ is globally stable in $\Omega$ by LaSalle's Invariance Principle \cite{LaSalle1976}.

The global stability of the endemic equilibrium for the case $q=0,$ can be proved by a similar method.
\end{document}